\documentclass[11pt]{article}
\usepackage{amssymb}
\usepackage{colortbl}
\usepackage{amsfonts,amsmath, longtable}
\usepackage{ulem}
\usepackage{comment}
\usepackage{enumerate}

\usepackage{hyperref}

        \def\theequation{\thesection.\arabic{equation}}

\newcommand{\tr}{{\rm tr}}
\newcommand{\ti}[1]{\tilde{#1}}

\newcommand{\mL}{{\mathcal L}}
\newcommand{\mM}{{\mathcal M}}
\newcommand{\mF}{{\mathcal F}}

\newcommand{\mH}{{\mathcal H}}

\newcommand{\la}{\lambda}
\newcommand{\vf}{\varphi}
\newcommand{\al}{\alpha}
\newcommand{\be}{\beta}
\newcommand{\ga}{\gamma}
\newcommand{\om}{\omega}
\newcommand{\vth}{\vartheta}

\newcommand{\bfr}{{\bf{r}}}
\newcommand{\bfy}{{\bf{y}}}
\newcommand{\tbfr}{\tilde{\bf{r}}}

\newcommand{\Mat}{ {\rm Mat}(N,\mathbb C) }

\newcommand{\mC}{\mathbb C}
\newcommand{\mZ}{\mathbb Z}

\newcommand{\ka}{\kappa}

\newcommand{\Om}{\Omega}

\newtheorem{proposition}{Proposition}[section]

\newtheorem{lemma}{Lemma}[section]

\newtheorem{rem}{Remark}[section]

\newenvironment{proof}{\par\noindent{\bf Proof.}}{\hfill$\scriptstyle\blacksquare$}
\def\beq{\begin{equation}}
\def\eq{\end{equation}}
\def\p{\partial}

\newtheorem{theor}{Theorem}%[section]

\newcommand{\mat}[4]{\left(\begin{array}{cc}{#1}&{#2}\\ \ \\{#3}&{#4}
\end{array}\right)}

\newcommand{\mats}[4]{\left(\begin{array}{cc}{#1}&{#2}\\ {#3}&{#4}
\end{array}\right)}

\def\res{\mathop{\hbox{Res}}\limits}

\begin{document}

\begin{center}

\setcounter{page}{1}

\vspace{0mm}

{\Large{\bf
Matrix representations of ${\rm BC}_n$ associative Yang--Baxter equations
}}

\vspace{3mm}

{\Large{\bf
and related long-range spin chains
}}

% \vspace{3mm}

%{\Large{\bf and its multispin generalizations}}

 \vspace{15mm}

 {\Large {M. Matushko}}$\,^{\diamond}$
\qquad
 {\Large {A. Mostovskii}}$\,^{\diamond}$$\,^{*}$
\qquad
 {\Large {A. Zotov}}$\,^{\diamond}$

  \vspace{10mm}

$\diamond$ --
 {\it Steklov Mathematical Institute of the Russian
Academy of Sciences,\\ 8 Gubkina Street, Moscow 119991, Russia}

%$\bullet$ --
%{\it Center for Advanced Studies, Skoltech, 143026, Moscow, Russia}

$*$ --
%{\it Institute for Theoretical and Mathematical Physics,\\ Lomonosov Moscow State University, 119991,  Moscow, Russia}
{\it %Institute for Theoretical and Mathematical Physics,\\
Lomonosov Moscow State University, Moscow, 119991, Russia}

%$\flat$ -- {\it National Research Centre ''Kurchatov Institute'',\\
%Academician Kurchatov square, 1, Moscow, 123182, Russia}

   \vspace{5mm}

 {\small\rm {E-mails: matushko@mi-ras.ru, mostovskii.am21@physics.msu.ru, zotov@mi-ras.ru}}

\end{center}

\vspace{0mm}

\begin{abstract}
We construct a matrix representation of the algebra defined by the associative Yang--Baxter equations of ${\rm BC}_n$ type. In the elliptic case, two sets of generators are represented in terms of the Baxter--Belavin quantum $R$-matrix in the fundamental representation of ${\rm GL}_N$, while the remaining generators are represented by the elliptic $K$-matrix found by Komori and Hikami. We revisit its derivation using the Felder--Pasquier procedure and correct the unitarity relation for odd $N$. We then use these matrix representations to construct examples of long-range quantum spin chains from $R$-matrix-valued Lax pairs for the classical Calogero--Inozemtsev system associated with the ${\rm BC}_n$ root system. We also construct $R$-matrix-valued Lax pairs for Calogero--Moser models in an external field, leading to Haldane--Shastry and Polychronakos--Frahm chains with modified spin positions.
\end{abstract}

\newpage
%\bigskip
{\small{ \tableofcontents }}

\bigskip

%\newpage

\newpage
\subsection*{\ \ \ \ \ \ \underline{Main notation:}}
%\addcontentsline{toc}{section}{ Notations}

\ \vspace{-1mm}

{\small{

$n$ -- number of particles in many-body systems or spins in spin chains;

$N$ -- dimension of the matrix representation for the $R$ and $K$ matrices ($R\in\Mat^{\otimes 2}$, $K\in\Mat$ );

$H$ -- classical Hamiltonian of a many-body system;

$\mH$ -- quantum Hamiltonian of a spin chain;

${\bf r}_{ij}$ -- generators of non-parametric $A$-type (\ref{a101})-(\ref{a103}) or ${\rm BC}_n$-type (\ref{a115})-(\ref{a118})
associative Yang--Baxter equations (AYBE);

${\ti{\bf r}}_{ij}$ and ${\bf y}_i$ -- generators of non-parametric ${\rm BC}_n$-type (\ref{a115})-(\ref{a118})
associative Yang--Baxter equations (AYBE);

$R_{12}^\hbar(z)$ -- quantum $R$-matrix; also a representation of ${\bf r}_{ij}$ in the parametric $A$-type AYBE (\ref{a104})-(\ref{a106})
or parametric ${\rm BC}_n$-type AYBEs (\ref{a122})-(\ref{a131});

$P_{12}$ -- permutation matrix (\ref{P12});

${\ti R}_{12}^\hbar(z)$ -- representation of ${\ti{\bf r}}_{ij}$ in the parametric ${\rm BC}_n$-type AYBEs (\ref{a122})-(\ref{a131});

${\ti K}^\hbar(z)$ -- $K$-matrix solving the reflection equation (\ref{a157}). It also gives a matrix (or operator) representation of ${\bf y}_i$ in ${\rm BC}_n$-type AYBEs (\ref{a122})-(\ref{a131});

${K}^\hbar(z)$ -- $K$-matrix solving the reflection equation (\ref{a156}) related to ${\ti K}^\hbar(z)$ through (\ref{elKfromt});

$J$ -- representation of the reflection ${\hat t}_i$ (\ref{a121}). In the matrix representation, it is the involutive matrix (\ref{a077});

$F_{ij}^z(q)$ and ${\ti F}_{ij}^z(q)$ -- derivatives of $R_{ij}^z(q)$ and ${\ti R}_{ij}^z(q)$ with respect to $q$;

${\ti Y}_j^z(q)$ -- derivative of ${\ti K}_j^z(q)$  with respect to $q$;

${\hat{\bf T}}_i$ or ${\hat{\bf T}}_{N,i}$ -- shift operators introduced before (\ref{fp20});

$T_a$ -- basis matrices in $\Mat$ (\ref{el10});

$Q$ and $\Lambda$ -- matrices entering the definition of $T_a$ (\ref{el11})-(\ref{el12});

$\sigma_k$ -- Pauli matrices (\ref{r723});

$\tau$ -- modulus of the elliptic curve (${\rm Im}(\tau)>0$);

$\Om_a$ -- half-periods of the elliptic curve (\ref{w209});

$\om_a$ -- arguments of the $\phi$-function (\ref{w2023}). For $a=(a_1,a_2)\in\mZ_N\times\mZ_N$ $\om_a=\frac{a_1+a_2\tau}{N}$;

$z_i$ -- equilibrium position of the $i$-th particle;

$L(z)$, $M(z)$ -- matrices forming a classical Lax pair for a many-body system;

$\mL(z)$, $\mM(z)$ -- matrices forming a classical $R$-matrix-valued Lax pair for a many-body system; at equilibrium positions, they form a quantum Lax pair for the corresponding spin chain.

}}

\newpage
%%%%%%%%%%%%%%%%%%%%%%%%%%%%%%%%%%%%%%%%%%%%%%%%%%%%%%%%%%%%%%%%%%%%%%%%%%%%%%%%%%%%%%%%%%%%%%%%%
%%%%%%%%%%%%%%%%%%%%%%%%%%%%%%%%%%%%%%%%%%%%%%%%%%%%%%%%%%%%%%%%%%%%%%%%%%%%%%%%%%%%%%%%%%%%%%%%%
\section{Introduction}\label{sec1}
\setcounter{equation}{0}
The associative Yang--Baxter equation (AYBE)
\beq\label{a01}
\begin{array}{c}
\displaystyle{
\bfr_{ij} \bfr_{jk}=\bfr_{ik}\bfr_{ij}+\bfr_{jk}\bfr_{ik} \qquad \text{ for distinct } i,j,k.
}
\end{array}
\eq
was introduced by S. Fomin and An. Kirillov in \cite{FK} as one of the defining relations of a quadratic algebra. It was then studied from algebraic \cite{Aguiar}
and geometric \cite{Pol} viewpoints. In particular, it was shown in \cite{Pol} that
the parametric version of the relations (\ref{a01})
 \beq\label{a02}
\begin{array}{c}
\displaystyle{
 R_{ij}^z(q_i-q_j) R_{jk}^w(q_j-q_k)=R_{ik}^w(q_i-q_k)R_{ij}^{z-w}(q_i-q_j)+
 R_{jk}^{w-z}(q_j-q_k)R_{ik}^z(q_i-q_k)
}
\end{array}
 \eq
 has a solution given by the elliptic Baxter--Belavin quantum $R$-matrix \cite{BB} in the fundamental representation
 of the Lie group ${\rm GL}_N$. The latter is a solution of the quantum Yang--Baxter equation (QYBE)
  \beq\label{a021}
  \begin{array}{l}
  \displaystyle{
 R^{\hbar}_{12}(q_1-q_2)  R^{\hbar}_{13}(q_1-q_3) R^{\hbar}_{23}(q_2-q_3) =
 R^{\hbar}_{23}(q_2-q_3)  R^{\hbar}_{13}(q_1-q_3)  R^{\hbar}_{12}(q_1-q_2)\,.
 }
 \end{array}
 \eq
 Thus, equations (\ref{a02}) and (\ref{a021}) have common solutions. In fact, any unitary and skew-symmetric solution
 of the AYBE (\ref{a02}) is also a solution of the quantum Yang--Baxter equation (\ref{a021}). A simple proof can be found in
 \cite{LOZ14}.
 Throughout this paper, we consider $R$-matrices of this type.

 Both versions of the AYBE, (\ref{a01}) and (\ref{a02}), have many applications to integrable systems. The original one (\ref{a01})
 can be used to describe operators of Dunkl and Knizhnik--Zamolodchikov type \cite{Kir},
 while the parametric AYBE (\ref{a02}) can be used for constructing classical Lax representations, Sklyanin algebras
 and other integrable structures; see
 \cite{LOZ14,LOZ161,GSZ,KrZ}.
In this paper we focus on $R$-matrix-valued Lax pairs  \cite{LOZ14,GrZ18} and their applications
to the derivation of quantum integrable long-range spin chains \cite{SeZ18}. The resulting Hamiltonian
for a chain of $n$ spins takes the form
 \beq\label{a03}
\begin{array}{c}
\displaystyle{
 \mH=\sum\limits_{i<j}^n F^0_{ij}(z_i-z_j)\,,\qquad z_k=\frac{k}{n}\,,
}
\end{array}
 \eq
where $F^0_{ij}(w)=\p_wR^\hbar_{ij}(w)|_{\hbar=0}$, or equivalently, $F^0_{ij}(w)=\p_wr_{ij}(w)$ for
the classical $r$-matrix $r_{ij}(w)$ appearing in the expansion $R^\hbar_{ij}(w)=\hbar^{-1}{\rm Id}+r_{ij}(w)+O(\hbar)$ of the elliptic Baxter--Belavin $R$-matrix or one of its trigonometric degenerations.
Here we consider $F^0_{ij}(w)$ as an operator acting on the Hilbert space $(\mC^N)^{\otimes n}$. The points $z_k$ are equilibrium positions
in the underlying classical many-body integrable system. For (\ref{a03}), this is the elliptic or trigonometric Calogero--Moser--Sutherland system.
In a particular case, (\ref{a03}) reproduces the Hamiltonian of the Haldane--Shastry chain \cite{HS1}. In the general case,
(\ref{a03}) is the quantum Hamiltonian of an anisotropic long-range integrable spin chain. We briefly recall this construction in the next section.
Further studies of spin chains of this type and their $q$-deformations can be found in \cite{Lam,MZ,KL,Ch3,ChM,LRS}.
The constructions above correspond to periodic (or closed) chains of type $A$.
Long-range Haldane--Shastry-type spin chains with open boundaries were constructed in \cite{BPS}; on the classical side, open and periodic spin Calogero--Moser chains were studied in \cite{Resh}.

In order to obtain long-range spin chains with boundary terms, we turn to the ${\rm BC}_n$-type analogues of the AYBE.
In \cite{Kir} An. Kirillov explained that the relations (\ref{a01}) are associated with the root system of type $A_n$,
and that analogues of (\ref{a01}) can be introduced for other root systems.
The $B$-type version of the associative Yang--Baxter equations includes additional generators
$\tbfr_{ij}$ and $\bfy_i$. For example, the four-term relation involving all types of generators reads
 \beq\label{a04}
\begin{array}{c}
\displaystyle{
\bfr_{ij}\bfy_j=\bfy_i\bfr_{ij}+\tilde{\bfr}_{ij}\bfy_i+\bfy_j\tilde{\bfr}_{ij}\,.
}
\end{array}
 \eq
 In our previous paper \cite{MMZ} we proposed a parametric version of the B-type AYBEs.
 In particular, the relation (\ref{a04}) takes the form
 \beq\label{a05}
 \begin{array}{c}
\displaystyle{
 R_{ij}^{w+z}(q_i-q_j)\tilde{K}^w_j(q_j)=\tilde{K}^w_i(q_i)R_{ij}^{z-w}(q_i-q_j)\!+\!
 \tilde{R}_{ij}^{w-z}(q_i+q_j)\tilde{K}^z_i(q_i)\!+\!
 \tilde{K}_j^{-z}(q_j)\tilde{R}_{ij}^{w+z}(q_i+q_j)\,,
}
\end{array}
 \eq
where the operators $\tilde{K}^z_i(q_i)$ represent the generators $\bfy_i$. Using a representation of Shibukawa--Ueno type \cite{SU},
we also proved that these operators give rise to $K$-matrices solving the reflection equation \cite{Cherednik,Skl-refl}:
 \beq\label{a06}
 \begin{array}{c}
 \displaystyle{
%R_{12}^{\hbar}(u-v)\tilde{K}_1^\hbar(u)\tilde{R}_{12}^{\hbar}(u+v)
%\tilde{K}_2^\hbar(v)=\tilde{K}_2^\hbar(v)\tilde{R}_{12}^{\hbar}(u+v)
%\tilde{K}_1^\hbar(u)R_{12}^{\hbar}(u-v)\,.
%
R_{ij}^{\hbar}(q_i-q_j)K_i^\hbar(q_i)R_{ji}^{\hbar}(q_i+q_j)K_j^\hbar(q_j)=
K_j^\hbar(q_j)R_{ij}^{\hbar}(q_i+q_j)K_i^\hbar(q_i)R_{ji}^{\hbar}(q_i-q_j)\,,
}
\end{array}
 \eq
 where $K^\hbar(w)$ is related to ${\ti K}^\hbar(w)$ by multiplication by a constant matrix.
 Solutions of (\ref{a06}) were used in \cite{Skl-refl} to construct quantum integrable spin chains with boundaries,
 and $K$-matrices provide the boundary terms in the quantum Hamiltonians.
 The most general elliptic solution of (\ref{a06}) for the $K$-matrix
 was proposed by Y. Komori and K. Hikami in \cite{KH2}.

 The parametric $B$-type AYBEs, including (\ref{a05}), allow us to extend the construction of $R$-matrix-valued Lax pairs
 to the ${\rm BC}_n$ Calogero--Inozemtsev system \cite{OP,Inoz89}. As announced at the end of \cite{MMZ}, this
 construction yields a family of quantum long-range spin chains with Hamiltonians of the form
 \beq\label{a07}
\begin{array}{c}
\displaystyle{
 \mH=\sum\limits_{i<j}^n \Big(F^0_{ij}(z_i-z_j)+{\ti F}^0_{ij}(z_i+z_j)\Big)+
 \frac12\sum\limits_{k=1}^n \tilde{Y}_k^0(z_k)\,,
 \qquad
 \tilde{Y}^0(w)=\p_w {\ti K}^\hbar(w)\Big|_{\hbar=0}\,,
}
\end{array}
 \eq
where $z_k$ are equilibrium positions in the spinless classical Calogero--Inozemtsev model \cite{Inoz89} associated with the ${\rm BC}_n$
root system. This model is defined by the Hamiltonian
\begin{equation}
 \label{a071}
   \displaystyle{
 H^{\rm CI} = \frac{1}{2} \sum_{k=1}^{n} p_k^2 - g^2
\sum_{i<j}^{n} \Big(\wp(q_i-q_j) + \wp(q_i+q_j)\Big) -
\frac{1}{2}\sum\limits_{a=0}^3\sum_{k=1}^{n} \nu_a^2\wp(q_k+\Om_a)\,,
}
 \end{equation}
 where $\Om_a$ are the half-periods $0,\frac12,\frac{1+\tau}{2},\frac{\tau}{2}$ of an elliptic curve with modulus $\tau$,
  $\wp$ is the elliptic Weierstrass $\wp$-function (\ref{a905}) and  $g,\nu_0,\nu_1,\nu_2,\nu_3\in\mC$ are five arbitrary constants.
 The four constants $\nu_0,\nu_1,\nu_2,\nu_3$ enter the $K$-matrix ${\ti K}^\hbar(w)$ in (\ref{a07}), and the constant $g$
 can be normalised to $g=1$ when considering equilibrium positions. This is why it is absent in (\ref{a07}).

{\bf The purpose of the paper} is to describe matrix representations of parametric $B$-type AYBEs including (\ref{a05})
and to obtain explicit examples of the corresponding Hamiltonians from (\ref{a07}).

Our first aim is to describe elliptic solutions of the four-term relation (\ref{a05}).
We consider the Baxter--Belavin elliptic $R$-matrix \cite{BB} (in the fundamental representation
of the Lie group ${\rm GL}_N$) since it satisfies both the quantum Yang--Baxter equation
(\ref{a021}) and the associative Yang--Baxter equation (\ref{a02}). In the matrix basis $T_a$, $a=(a_1,a_2)\in\mZ_N\times\mZ_N$
in $\Mat$ (see (\ref{el10})-(\ref{el12})) it takes the following form:
\beq\label{a072}
 \begin{array}{c}
  \displaystyle{
 R_{12}^\hbar(z)=\frac{1}{N}\sum\limits_{a\in\,\mZ_{ N}\times\mZ_{ N}}
  T_a\otimes T_{-a} \vf_a(z,\om_a+\frac{\hbar}{N})
 \in{\rm Mat}(N,\mC)^{\otimes 2}\,,
 }
 \\ \ \\
  \displaystyle{
  \vf_a(z,\om_a+\frac{\hbar}{N})=\exp (2\pi\imath\,\frac{a_2z}{N})\,\phi(z,\frac{a_1+a_2\tau}{N}+\frac{\hbar}{N})\,,\quad
  \phi(z,w)=\frac{\vth'(0)\vth(z+w)}{\vth(z)\vth(w)}\,.
 }
 \end{array}
 \eq
We show that the parametric AYBEs of B type\footnote{We also refer to Kirillov's B-type AYBEs as AYBEs of ${\rm BC}_n$ type because of their relation to the Calogero--Inozemtsev model (\ref{a071}).} are satisfied with
\beq\label{a073}
 \begin{array}{c}
  \displaystyle{
 {\ti R}_{12}^\hbar(z)=\frac{1}{N}\sum\limits_{a\in\,\mZ_{ N}\times\mZ_{ N}}
  T_a\otimes T_{a} \vf_a(z,\om_a+\frac{\hbar}{N})
 \in{\rm Mat}(N,\mC)^{\otimes 2}\,.
 }
 \end{array}
 \eq
We also propose a compact form for the elliptic $K$-matrix:
\beq\label{a074}
  \displaystyle{
    \tilde{K}^{\hbar}(z) = \frac{1}{2N}
    \sum_{\be \in \mZ_{2N} \times \mZ_{2N}} \exp(\frac{2 \pi \imath \be_2 z}{N})
     v (z, \frac{\hbar}{N} + \frac{\om_\be}{2}) T_\be\in \Mat\,,
    }
\eq
where the function $v$ depends on four coupling constants:
\beq\label{a075}
  \displaystyle{
v(z,u)\equiv v(z,u|\nu)=
\sum\limits_{k=0}^3\nu_k \exp(4\pi \imath z\p_\tau\Om_k)\phi(2z,u+\Om_k)\,.
    }
\eq
The matrix $K^\hbar(z)$ entering the reflection equation (\ref{a06}) is given by
\beq\label{a076}
\begin{array}{c}
\displaystyle{
K^{\hbar}(z)=-\tilde{K}^{\hbar}(z)\,J\,,
}
\end{array}
\eq
where $J$ is the involutive matrix
\beq\label{a077}
\displaystyle{
J=e_{1,N-1}+e_{2,N-2}+\dots+e_{N-1,1}+e_{N,N}=\frac{1}{2N}\sum\limits_{a\in\,\mZ_{2N}\times\mZ_{2N}} T_a\in\Mat\,,
}
\eq
It satisfies $J^2=1_N$, where $1_N$ is the $N\times N$ identity matrix,
 and induces the involutive automorphism
 $J\, T_\alpha\, J^{-1}=T_{-\alpha}$. Here $e_{ij}$ are matrix units (the standard matrix basis in $\Mat$).
%
%\beq\label{a0781}
%\displaystyle{
%J^2=1\,,\qquad J\, T_\alpha\, J^{-1}=T_{-\alpha}\,.
%}
%\eq
%
This definition gives the following relation between
 $R_{12}^\hbar(z)$ and ${\ti R}_{12}^\hbar(z)$:\footnote{Let us mention that the appearance of ${\ti R}_{12}^\hbar(z)$ related
 to $R_{12}^\hbar(z)$ by the action of some involution in one of the tensor components is known for reflection equations \cite{FM,KuSkl}.}
\beq\label{a078}
\displaystyle{
    \tilde{R}^{\hbar}_{12} (z)=J_2\,     R^{\hbar}_{12} (z)J_2\,,\qquad J_2=1_N\otimes J\,.
}
\eq
Up to multiplication by constant matrices, (\ref{a074}) coincides with the Komori--Hikami $K$-matrix from \cite{KH2}.
In \cite{KH2}, the authors used the standard matrix basis instead of $T_a$. Their expression is therefore less compact. In \cite{BS}, the Komori--Hikami $K$-matrix was
written in a basis similar to $T_a$. Our version (\ref{a074}) differs from the one obtained in \cite{BS}
in its use of the function $v$ and a sum over $\mZ_{2N} \times \mZ_{2N}$ instead of $\mZ_{N} \times \mZ_{N}$.
This gives a uniform expression for odd and even $N$. We prove that the $R$-matrices and $K$-matrices defined above
satisfy the system of ${\rm BC}_n$ AYBEs including (\ref{a05}). The $K$-matrix (\ref{a076}) also
 satisfies the reflection equation (\ref{a06}).

These results also apply to trigonometric and rational $R$-matrices and $K$-matrices.
Trigonometric $R$-matrices satisfying (\ref{a02}) were classified in \cite{SP} in the ${\rm GL}_N$ case,
and some solutions for Lie supergroups were found in \cite{MZ2}. In the rational case the classification is unknown.
However, known nontrivial solutions include Cherednik's 11-vertex $R$-matrix \cite{Chered} and
its higher-rank generalisation \cite{AtZ5}. It is an interesting problem to find (and possibly classify)
the corresponding ${\ti R}$- and $K$-matrices. Some degenerations of the elliptic expressions (\ref{a072})-(\ref{a074})
 are considered in the final part of our
paper.

Our next aim is to clarify the unitarity condition for the elliptic $K$-matrix.
It has the following form: $K^\hbar(z)K^\hbar(-z)=k(\hbar,z)1_N$,
where $k(\hbar,z)$ is a scalar function and $1_N$ is the $N\times N$ identity matrix.
 We observe that in \cite{KH2}
this function $k(\hbar,z)$ was computed correctly for even $N$ only.
The expression for $k(\hbar,z)$ for odd $N$ in \cite{KH2}
requires a correction, although the expression for the $K$-matrix itself is correct.
The derivation of the $K$-matrix in \cite{KH2}
was based on representations of the $R$- and $K$-matrices by operators of Shibukawa--Ueno type \cite{SU}. Then the Felder--Pasquier procedure \cite{FP} was applied, which
restricts the action of operators to a finite-dimensional space.
In this way one obtains matrix representations.
We rederive the $K$-matrix in this way, choosing an appropriate finite-dimensional space.
This yields the expression (\ref{a074}) and the coefficient function $k(\hbar,z)$.

Finally, we present a set of examples of long-range
spin chains with the quantum Hamiltonian (\ref{a07}).
For this purpose we use known trigonometric and rational solutions of
the AYBE for $R$-matrices in the $A_n$ case.
Here we restrict attention to the ${\rm GL}_2$ case ($N=2$),
for which both the $R$-matrices and the corresponding $K$-matrices are known. We therefore obtain explicit expressions for the quantum Hamiltonian (\ref{a07}).
Another family of spin chains arises from Calogero--Moser models
 in an external field, following \cite{Inoz83}.
We construct $R$-matrix-valued generalisations of their Lax representations.
These yield Haldane--Shastry and Polychronakos--Frahm chains with spin positions
that differ from the standard choices. Presumably, all quantum spin chains
presented in this paper are integrable. However,
we note that our approach, based on the $R$-matrix-valued Lax pairs, leads to quantum Lax equations,
which do not guarantee integrability. A rigorous proof of integrability requires other methods.

Versions of the Haldane--Shastry and Polychronakos--Frahm chains
of ${\rm BC}_n$ type have been extensively studied using a different approach \cite{EFGR}.
The chains obtained here from ${\rm BC}_n$ $R$-matrix-valued Lax pairs are generally anisotropic:
their pairwise spin interactions are not proportional
to the permutation (spin-exchange) operator $P_{ij}$, except in (\ref{Hr6sc1}) and (\ref{Hr6sc2}).

%?? In particular we obtain ... or reproduce?? ... Polychronakos-Frahm \cite{Poly,Frahm}...
%Rodriges \cite{EFGR}... Calogero-Moser model in external field cite Perelomov...

%%%%%%%%%%%%%%%%%%%%%%%%%%%%%%%%%%%%%%%%%%%%%%%%%%%%%%%%%%%%%%%%%%%%%%%%%%%%%%%%%%%%%%%%%%%%%%%%%
%%%%%%%%%%%%%%%%%%%%%%%%%%%%%%%%%%%%%%%%%%%%%%%%%%%%%%%%%%%%%%%%%%%%%%%%%%%%%%%%%%%%%%%%%%%%%%%%%
\section{General construction}\label{sec2}
\setcounter{equation}{0}

In this section we recall how ${\rm A}_{n-1}$-type AYBEs give rise to $R$-matrix-valued Lax pairs for classical integrable systems,
and then how these classical Lax pairs, when evaluated at equilibrium positions,
 are interpreted as quantum Lax pairs for quantum long-range spin chains. This is an alternative
 explanation of the so-called Polychronakos freezing trick \cite{Poly,Poly2}.
 We then turn to AYBEs of ${\rm BC}_n$ type and explain the origin of the Hamiltonian (\ref{a07}).

\subsection{The \texorpdfstring{$A_{n-1}$}{An-1} case: from the AYBE to long-range spin chains}
{\bf The associative classical Yang--Baxter
algebra of type ${\rm A}$ } (in the absence of parameters) generated by the elements ${\bf r}_{ij}$ for $1\le i\neq j\le n $
is defined by (\ref{a01}) together with the skew-symmetry
property\footnote{The definition (\ref{a101})-(\ref{a103}) differs from the Fomin--Kirillov algebra of
\cite{FK} because we do not impose the condition $\bfr_{ij}^2=0$.} \cite{FK}:
\beq\label{a101}
\begin{array}{c}
\bfr_{ij} \bfr_{jk}=\bfr_{ik}\bfr_{ij}+\bfr_{jk}\bfr_{ik} \qquad \text{ for distinct } i,j,k\,,
\end{array}
\eq
\beq\label{a102}
\begin{array}{c}
\bfr_{ij}=-\bfr_{ji}\,,\quad i\neq j
\end{array}
\eq
and
\beq\label{a103}
\begin{array}{c}
{\displaystyle
\bfr_{ij} \bfr_{kl}=\bfr_{kl}\bfr_{ij} \qquad \text{ for distinct } i,j,k,l\,.
}
\end{array}
\eq
{\bf Its parametric version }
was proposed in \cite{Pol}:
\beq\label{a104}
\begin{array}{c}
 R_{ij}^z(q_i-q_j) R_{jk}^w(q_j-q_k)=R_{ik}^w(q_i-q_k)R_{ij}^{z-w}(q_i-q_j)+
 R_{jk}^{w-z}(q_j-q_k)R_{ik}^z(q_i-q_k)
\end{array}
\eq
 for distinct $i,j,k$,
\beq\label{a105}
\begin{array}{c}
 R^z_{ij}(q)=-R_{ji}^{-z}(-q)\,,\quad i\neq j
\end{array}
\eq
and
\beq\label{a106}
\begin{array}{c}
{\displaystyle
R^z_{ij}(q_{ij}) R^w_{kl}(q_{kl})=R^w_{kl}(q_{kl}) R^z_{ij}(q_{ij}) \qquad \text{ for distinct } i,j,k,l\,.
}
\end{array}
\eq
In what follows we use the shorthand notation
\beq\label{a107}
\begin{array}{c}
{\displaystyle
q_{ij}=q_i-q_j\,.
}
\end{array}
\eq
%
%One can also supply the above definition with the unitarity property

%{\bf Representations.}
 The scalar (one-dimensional) representation of the above relations is given by the Kronecker elliptic function (\ref{a901})
 \beq\label{a108}
  \begin{array}{l}
  \displaystyle{
 \phi(z,u)=\frac{\vth'(0)\vth(z+u)}{\vth(z)\vth(u)}\,,
 }
 \end{array}
 \eq
where $\vth(z)$ is the first Jacobi theta function. It is known to satisfy the summation formula
  \beq\label{a109}
  \begin{array}{c}
  \displaystyle{
\phi(z,q_{ij})\phi(w,q_{jk})=\phi(w,q_{ik})\phi(z-w,q_{ij})+\phi(w-z,q_{jk})\phi(z,q_{ik})\,,
 }
 \end{array}
 \eq
which is the scalar (and commutative) version of the AYBE (\ref{a104}). It was shown in \cite{Pol} that
(\ref{a104})-(\ref{a106}) admit a matrix representation given by the suitably normalised Baxter--Belavin elliptic
$R$-matrix \cite{BB} in the fundamental representation of the Lie group ${\rm GL}_N$.
A simple proof of this statement can also be found in \cite{LOZ14}. We describe the elliptic $R$-matrix in the
next section. Here, the chosen normalisation means $R^z_{ij}(q_{ij})|_{N=1}=\phi(z,q_{ij})$, which
is in agreement with passing to the scalar functional equation (\ref{a109}) from (\ref{a104}) for $N=1$.

\noindent{\bf Representation by Shibukawa--Ueno operators.}
The original QYBE (\ref{a021}) admits the following representation by Shibukawa--Ueno operators \cite{SU}:
  \beq\label{a110}
  \begin{array}{c}
  \displaystyle{
R^\hbar_{ij}(q)=\phi(\hbar,x_i-x_j)-\phi(-q,x_i-x_j)\hat{s}_{ij}\,,
 }
 \end{array}
 \eq
where $\hat{s}_{ij}$ are generators of the symmetric group $S_n$ acting on functions
 (of $n$ auxiliary variables $x_1,...,x_n$)
by exchanging the corresponding arguments:
  \beq\label{a111}
  \begin{array}{c}
  \displaystyle{
\hat{s}_{ij}f(\dots,x_i,\dots,x_j,\dots)=f(\dots,x_j,\dots,x_i,\dots)\,.
 }
 \end{array}
 \eq
The Shibukawa--Ueno $R$-operators (\ref{a110}) satisfy the QYBE (\ref{a021}) and the unitarity property
  \beq\label{a112}
  \begin{array}{c}
  \displaystyle{
R_{ij}^\hbar(\xi)R^\hbar_{ji}(-\xi)=(\wp(\hbar)-\wp(\xi)){\rm Id}\,.
 }
 \end{array}
 \eq

Representations of $R$-matrices based on the symmetric group are well known. For example, the non-parametric AYBE (\ref{a101})
has the representation \cite{Kir,MMZ}:
  \beq\label{a113}
  \begin{array}{c}
  \displaystyle{
r_{ij}:=\phi(\xi_i-\xi_j,x_i-x_j)\hat{s}_{ij}\,,
 }
 \end{array}
 \eq
and the AYBE with parameters (\ref{a104}) is represented by the Shibukawa--Ueno $R$-operators (\ref{a110}), see \cite{MMZ}.
The functional equation (\ref{a109}) also arises as the commutativity condition
for the Dunkl operators \cite{BFV} constructed by means of (\ref{a111}).

Using the Shibukawa--Ueno $R$-operators one may reproduce the Baxter--Belavin $R$-matrix in the
fundamental representation by restricting these operators to a suitable $N$-dimensional vector space
spanned by theta functions \cite{FP}.

\noindent{\bf $R$-matrix-valued Lax pairs.}  Consider the classical
elliptic  Calogero--Moser model of ${\rm gl}_n$ type \cite{OP}.
It is defined by the Hamiltonian
  \beq\label{a114}
  \begin{array}{c}
  \displaystyle{
H=\sum\limits_{i=1}^n\frac{p_i^2}{2}-{g}^2\sum\limits_{i>j}^n\wp(q_i-q_j)
 }
 \end{array}
 \eq
 describing the interaction of $n$ particles. All variables and the coupling constant $g$ are complex valued.
 The momenta $p_i$ and particle positions $q_j$ are canonically conjugate:
 $\{p_i,q_j\}=\delta_{ij}$, $\{p_i,p_j\}=\{q_i,q_j\}=0$.
The Lax pair with spectral parameter
$L(z),M(z)\in{\rm Mat}_n$ reproducing the equations of motion through the Lax equation
  \beq\label{a1141}
  \begin{array}{c}
  \displaystyle{
{\dot L}(z)\equiv\{H,L(z)\}=[L(z),M(z)]
 }
 \end{array}
 \eq
was found in \cite{Krich1}. The Lax matrix has elements
  \beq\label{a1142}
  \begin{array}{c}
  \displaystyle{
L_{ij}(z)=\delta_{ij}p_i+{g}(1-\delta_{ij})\phi(z,q_{ij})\,,\quad
q_{ij}=q_i-q_j\,,
 }
 \end{array}
 \eq
where $\phi(z,u)$ is the elliptic Kronecker function (\ref{a108}) or its degeneration.
The accompanying $M$-matrix is
  \beq\label{a1143}
  \begin{array}{c}
  \displaystyle{
M_{ij}(z)={g} d_i\delta_{ij} +{g}(1-\delta_{ij})f(z,q_{ij})\,,\quad
d_i=-\sum\limits_{k: k\neq
i}^n f(0,q_{ik})\,,
 }
 \end{array}
 \eq
where $f(z,q)=\p_q\phi(z,q)$ (\ref{a904}).

The analogy between the function $\phi(z,q)$ and $R$-matrices satisfying the AYBE (\ref{a104})
led to the following higher-dimensional generalisation of (\ref{a1142})-(\ref{a1143}) in \cite{LOZ14}:
  \beq\label{a1144}
  \begin{array}{c}
    \displaystyle{
{\mL}(z)=\sum\limits_{i,j=1}^n E_{ij}\otimes \mL_{ij}(z)\,,
 \quad\quad
\mL_{ij}(z)= 1_{N}^{\otimes n}\, \delta_{ij}p_i
+g(1-\delta_{ij})R^{\,z}_{ij}(q_{ij})\in\Mat^{\otimes n}
 }
 \end{array}
 \eq
  \beq\label{a1145}
  \begin{array}{c}
  \displaystyle{
\bar\mM_{ij}(z)= D_i\delta_{ij}
+g(1-\delta_{ij})F^{\,z}_{ij}(q_{ij})+\delta_{ij}\,\mF^{\,0}\,,\quad
D_i=-g\sum\limits_{k: k\neq i}^n F^{\,0}_{ik}(q_{ik})\,,
 }
 \end{array}
 \eq
where $F^z_{ij}(q)=\p_q R_{ij}^z(q)$ and
  \beq\label{a1147}
  \begin{array}{c}
  \displaystyle{
\mF^{\,0}
 =g\sum\limits_{k>m}^n F^{\,0}_{km}(q_{km})
 =\frac{g}{2}\sum\limits_{k\neq m}^n F^{\,0}_{km}(q_{km})\,.
 }
 \end{array}
 \eq
Notice that
  \beq\label{a1148}
  \begin{array}{c}
  \displaystyle{
\bar\mM(z)=\mM(z)+1_n\otimes\mF^0\,,\quad \mM_{ij}(z)=D_i\delta_{ij}
+g(1-\delta_{ij})F^{\,z}_{ij}(q_{ij})\,,
 }
 \end{array}
 \eq
 where  $\mM(z)$ is a direct generalisation of the $M$-matrix (\ref{a1143}).
Expressions (\ref{a1144})-(\ref{a1145}) form
 the $R$-matrix-valued Lax pair for the classical Calogero--Moser model (\ref{a114}).
 The matrices in this Lax pair have size $nN^n\times nN^n$.
  Its quantum generalisation is obtained in a standard way by replacing $p_i$ with its
  quantum counterpart $\hbar\p_{q_i}$ \cite{GrZ18,ChM}.

In the scalar case ($N$=1), the $R$-matrices reduce to the functions $\phi(z,q_{ij})$, and
the additional term $1_n\otimes\mF^0$ in expression (\ref{a1148}) becomes proportional to
the identity matrix $1_n$ and, consequently, cancels out in the Lax equation.

In the general case ($N>1$) the term $1_n\otimes\mF^0$ is necessary.
Notice that the Lax equation ${\dot \mL}(z)=[\mL,\bar\mM(z)]$ again describes the
classical model (\ref{a114}) as in (\ref{a1141}). This result also admits a different interpretation. Write the Lax equation as
  \beq\label{a1149}
  \begin{array}{c}
  \displaystyle{
{\dot \mL}(z)+[\mF^0,\mL(z)]=[\mL,\mM(z)]\,.
 }
 \end{array}
 \eq
 or
  \beq\label{a11491}
  \begin{array}{c}
  \displaystyle{
\{H,\mL(z)\}+[\mF^0,\mL(z)]=[\mL,\mM(z)]\,.
 }
 \end{array}
 \eq
This equation is called half-quantum since it is partly classical (with respect to
the classical Hamiltonian $H$) and partly quantum (with respect to the quantum Hamiltonian $\mF^0$) \cite{GrZ18,Z18,GSZ}.
It was also shown in these works that $\mF^0$
represents the quantisation of the potential for the model of interacting tops. Further development and a
 general construction of half-quantum type models were presented in \cite{LRS}, where these models are called hybrid
 quantum systems.
 In the fully quantum case, the bracket
  $\{H,\mL(z)\}$ is replaced by the commutator $[{\hat H},\mL(z)]$, resulting in a quantum model of interacting tops; see \cite{GrZ18,ChM}.

\noindent {\bf Quantum long-range spin chains} arise in the following way \cite{SeZ18}.
Consider an equilibrium position of the classical spinless model (\ref{a114}).
For the elliptic and trigonometric $A$-type Calogero--Moser systems, one such configuration consists of
equally spaced points on a circle:
  \beq\label{a11492}
  \begin{array}{c}
  \displaystyle{
z_k=\frac{k}{n}\,,\quad k=1,...,n\,.
 }
 \end{array}
 \eq
 Restricting the half-quantum Lax equation (\ref{a11491}) to (\ref{a11492}) gives
 the quantum Lax equation
  \beq\label{a11493}
  \begin{array}{c}
  \displaystyle{
[\mH,\mL(z)]=[\mL,\mM(z)]\,,\qquad \mH=\mF^0|_{q_k=z_k}
 }
 \end{array}
 \eq
 with the quantum Hamiltonian (\ref{a1147}). In this way one can reproduce the well-known
 Haldane--Shastry spin chain  \cite{HS1} and the elliptic Inozemtsev chain \cite{Inoz90}, see \cite{Ch3,ChM}.
 The construction applies to any solution of the AYBEs (\ref{a104})-(\ref{a106})
 and yields a new family of quantum integrable long-range spin chains  (integrability does not follow
 from a quantum Lax equation alone; it requires a separate proof).
 This procedure for deriving spin chains through $R$-matrix-valued Lax pairs is
 in agreement with the Polychronakos freezing trick \cite{Poly}.

\subsection{\texorpdfstring{$BC_n$}{BCn}-type AYBE and its parametric version}
The algebra defined by the ${\rm B}$ (or ${\rm BC}_n$) analogue of (\ref{a101})-(\ref{a103})
is generated by the elements ${\bf r}_{ij}$, $\tilde{\bf r}_{ij}$ and ${\bf y}_i$
  for $1\le i\neq j\le n $, subject to the following defining relations \cite{Kir}:
\begin{itemize}
\item symmetry and antisymmetry relations for $i\neq j$:
  \beq\label{a115}
  \begin{array}{c}
  \displaystyle{
\bfr_{ij}=-\bfr_{ji}\,, \qquad \tilde{\bfr}_{ij}=\tilde{\bfr}_{ji} \,,
 }
 \end{array}
 \eq
\item permutation relations for distinct  $i, j, k, l$:
\beq\label{a116}
\begin{array}{c}
{\displaystyle \bfy_i \bfy_j=\bfy_j \bfy_i\,,
 \qquad \bfy_{i}\bfr_{kl}=\bfr_{kl}\bfy_i\,, \qquad \bfy_{i}\ti\bfr_{kl}=\ti\bfr_{kl}\bfy_i\,,}
\\ \\
{\displaystyle \bfr_{ij}\bfr_{kl}=\bfr_{kl}\bfr_{ij},\qquad
\tilde{\bfr}_{ij}\bfr_{kl}=\bfr_{kl}\tilde{\bfr}_{ij},\qquad  \tilde{\bfr}_{ij}\tilde{\bfr}_{kl}=\tilde{\bfr}_{kl}\tilde{\bfr}_{ij}\,,}
\end{array}
\eq
\item three-term relations for distinct $i,j,k$:
\beq\label{a117}
\begin{array}{c}
{\displaystyle
 \bfr_{ij} \bfr_{jk}=\bfr_{ik}\bfr_{ij}+\bfr_{jk}\bfr_{ik}\,,
 \qquad
 \bfr_{ij} \tilde{\bfr}_{jk}
 =\tilde\bfr_{ik}{\bfr}_{ij}+\tilde{\bfr}_{jk}\ti\bfr_{ik}\,,
}
\\ \\
{\displaystyle
\bfr_{jk}\tilde{\bfr}_{ik}=\tilde \bfr_{ij}\bfr_{jk}
+\tilde{\bfr}_{ik}\tilde{\bfr}_{ij}\,,
\qquad
\tilde{\bfr}_{jk}\bfr_{ik}
=\tilde{\bfr}_{ij}\tilde{\bfr}_{jk}+{\bfr}_{ik}\tilde{\bfr}_{ij},
}
\end{array}
\eq
\item four-term relation for $i\neq j$:
\beq\label{a118}
  \begin{array}{c}
  \displaystyle{
\bfr_{ij}\bfy_j=\bfy_i\bfr_{ij}+\tilde{\bfr}_{ij}\bfy_i+\bfy_j\tilde{\bfr}_{ij}\,.
}
\end{array}
\eq
\end{itemize}

In \cite{MMZ} a parametric version of (\ref{a115})-(\ref{a118}) was proposed:
 \begin{itemize}
 \item Symmetry and antisymmetry relations:
  \beq\label{a122}
 \tilde{R}^z_{ij}(\xi)=\tilde{R}^z_{ji}(\xi)\,,
 \eq
 \beq\label{a123}
 R^z_{ij}(\xi)=-R_{ji}^{-z}(-\xi)\,,
 \eq
 \item Permutation relations for distinct $i,j,k,l$:
 \beq\label{a124}
  \tilde{K}^z_i(\xi_i) \tilde{K}^w_j(\xi_j)=  \tilde{K}^w_j(\xi_j)\tilde{K}_i^z(\xi_i)\,,
  \qquad
  \tilde{K}^z_i(\xi_i)  R^w_{kl}(\xi_{kl})= R^w_{kl}(\xi_{kl})\tilde{K}_i^z(\xi_i)\,,
 \eq
 \beq\label{a125}
   \tilde{K}^z_i(\xi_i)  \tilde{R}^w_{kl}(\xi_{kl})=
   \tilde{R}^w_{kl}(\xi_{kl})\tilde{K}_i^z(\xi_i)\,,
   \qquad  R^z_{ij}(\xi_{ij}) R^w_{kl}(\xi_{kl})=R^w_{kl}(\xi_{kl}) R^z_{ij}(\xi_{ij})\,,
 \eq
 \beq\label{a126}
 R^z_{ij}(\xi_{ij}) \tilde{R}^w_{kl}(\xi_{kl})=\tilde{R}^w_{kl}(\xi_{kl}) R^z_{ij}(\xi_{ij})\,,
\qquad
 \tilde{R}^z_{ij}(\xi_{ij}) \tilde{R}^w_{kl}(\xi_{kl})=\tilde{R}^w_{kl}(\xi_{kl})
 \tilde{R}^z_{ij}(\xi_{ij})\,,
 \eq
 where $\xi_{ij}=\xi_i-\xi_j$.
  \item Three-term relations:
 \beq\label{a127}
 R_{ij}^z(\xi_i-\xi_j) R_{jk}^w(\xi_j-\xi_k)=R_{ik}^w(\xi_i-\xi_k)R_{ij}^{z-w}(\xi_i-\xi_j)+
 R_{jk}^{w-z}(\xi_j-\xi_k)R_{ik}^z(\xi_i-\xi_k)\,,
 \eq
 \beq\label{a128}
 R_{ij}^z(\xi_i-\xi_j) \tilde{R}_{jk}^w(\xi_j+\xi_k)=
 \tilde{R}_{ik}^w(\xi_i+\xi_k)R_{ij}^{z-w}(\xi_i-\xi_j)
 +\tilde{R}_{jk}^{w-z}(\xi_j+\xi_k)\tilde{R}_{ik}^z(\xi_i+\xi_k)\,,
 \eq
 \beq\label{a129}
 R_{jk}^{z}(\xi_j-\xi_k)\tilde{R}_{ik}^{w}(\xi_i+\xi_k)=\tilde{R}_{ij}^{w}(\xi_i+\xi_j) R_{jk}^{z-w}(\xi_j-\xi_k)+\tilde{R}_{ik}^{w-z}(\xi_i+\xi_k)\tilde{R}_{ij}^{z}(\xi_i+\xi_j)\,,
 \eq
 \beq\label{a130}
 \tilde{R}_{jk}^{z}(\xi_j+\xi_k) R_{ik}^{w}(\xi_i-\xi_k)=\tilde{R}_{ij}^{w}(\xi_i+\xi_j) \tilde{R}_{jk}^{z-w}(\xi_j+\xi_k)+R_{ik}^{w-z}(\xi_i-\xi_k)\tilde{R}_{ij}^{z}(\xi_i+\xi_j)\,,
 \eq
 \item The four-term relation:
 \beq\label{a131}
 R_{ij}^{w+z}(\xi_i-\xi_j)\tilde{K}^w_j(\xi_j)=\tilde{K}^w_i(\xi_i)R_{ij}^{z-w}(\xi_i-\xi_j)+
 \tilde{R}_{ij}^{w-z}(\xi_i+\xi_j)\tilde{K}^z_i(\xi_i)+
 \tilde{K}_j^{-z}(\xi_j)\tilde{R}_{ij}^{w+z}(\xi_i+\xi_j)\,.
 \eq
  \end{itemize}

\subsection{Shibukawa--Ueno-type representation}
It was mentioned in \cite{MMZ} that the algebra (\ref{a115})-(\ref{a118})
admits the following representation, analogous to (\ref{a113}):
\beq\label{a120}
  \begin{array}{c}
  \displaystyle{
\bfr_{ij}:=\phi(\xi_i-\xi_j,x_i-x_j)\hat{s}_{ij},\qquad
 \tilde{\bfr}_{ij}:=\phi(\xi_i+\xi_j,x_i+x_j)\hat{s}_{ij}\hat{t}_i\hat{t}_j,
 \qquad \bfy_i:=v(\xi_i,x_i)\hat{t}_i\,,
}
\end{array}
\eq
where $\hat{t}_i$ is the reflection
\beq\label{a121}
  \displaystyle{
\hat{t}_i f(x_1\dots,x_i,\dots,x_n)=f(x_1\dots,-x_i,\dots,x_n)\,.
}
\eq
The parametric relations (\ref{a122})--(\ref{a131}) are realised by the
Shibukawa--Ueno-type operators
\beq\label{a150}
  \displaystyle{
R^\hbar_{ij}(\xi)=\phi(\hbar,x_i-x_j)-\phi(-\xi,x_i-x_j)\hat{s}_{ij}\,,
}
\eq
\beq\label{a151}
  \displaystyle{
\tilde{R}_{ij}^{\hbar}(\xi)=\hat{t}_j R_{ij}^\hbar(\xi) \hat{t}_j
=\phi(\hbar,x_i+x_j)-\phi(-\xi,x_i+x_j)\hat{s}_{ij}\hat{t}_i\hat{t}_j\,.
}
\eq
Let us introduce the boundary operator
\beq\label{a153}
  \displaystyle{
\tilde{K}^\hbar_i(\xi)
\equiv
\tilde{K}^\hbar_i(\xi|\nu,\nu')
=
v(\hbar,x_i|\nu')-v(-\xi,x_i|\nu)\,\hat{t}_i\,,
}
\eq
where $v(\,\cdot\,,\,\cdot\,|\nu)$ is the function (\ref{q413}) and
$\nu=(\nu_0,\nu_1,\nu_2,\nu_3)$,
$\nu'=(\nu'_0,\nu'_1,\nu'_2,\nu'_3)$
are two (a priori independent) sets of coupling constants.
In \cite{MMZ} it was shown that the operators (\ref{a150})--(\ref{a153})
satisfy the parametric ${\rm BC}_n$ relations (\ref{a122})--(\ref{a131})
for $\nu'=\nu$ in (\ref{a153}); this proof carries over without modification to the case of arbitrary
independent coupling constants $\nu$ and $\nu'$. In particular, the four-term
identity (\ref{a131}) holds with $\tilde{K}^\hbar_i(\xi|\nu,\nu')$
as in (\ref{a153}). At the same time, the reflection equation
(for arbitrary sets $\nu$ and $\nu'$) does not hold.

These operators also satisfy a number of important identities which we formulate below.

\medskip
\noindent\textit{Quantum Yang--Baxter equations.}
Following \cite{SU} one has
\beq\label{a154}
  \displaystyle{
R^\hbar_{ij}(\xi_i-\xi_j)R^\hbar_{ik}(\xi_i-\xi_k)R^\hbar_{jk}(\xi_j-\xi_k)=
R^\hbar_{jk}(\xi_j-\xi_k)R^\hbar_{ik}(\xi_i-\xi_k)R^\hbar_{ij}(\xi_i-\xi_j)\,,
}
\eq
\beq\label{a155}
  \displaystyle{
R^\hbar_{ij}(\xi_i-\xi_j)\tilde{R}^\hbar_{ik}(\xi_i-\xi_k)\tilde{R}^\hbar_{jk}(\xi_j-\xi_k)=
\tilde{R}^\hbar_{jk}(\xi_j-\xi_k)\tilde{R}^\hbar_{ik}(\xi_i-\xi_k)R^\hbar_{ij}(\xi_i-\xi_j)\,.
}
\eq
(These involve only $R$ and $\tilde{R}$, hence are independent of
$\nu$ and $\nu'$.)

\medskip
\noindent\textit{Reflection equations.}
The reflection equation goes back to \cite{Cherednik,Skl-refl}.
In \cite{KH2} Komori and Hikami proposed the boundary operator
\beq\label{a152}
  \displaystyle{
K^\hbar_i(\xi)=
v(-\xi,x_i|\nu)-v(\hbar,x_i|\nu)\,\hat{t}_i
}
\eq
as a solution of the reflection equation with the $R$-operator
(\ref{a150}):
\beq\label{a156}
\displaystyle{
R_{ij}^{\hbar}(\xi_i-\xi_j)K_i^\hbar(\xi_i)R_{ji}^{\hbar}(\xi_i+\xi_j)K_j^\hbar(\xi_j)=
K_j^\hbar(\xi_j)R_{ij}^{\hbar}(\xi_i+\xi_j)K_i^\hbar(\xi_i)R_{ji}^{\hbar}(\xi_i-\xi_j)\,.
}
\eq
This operator is a particular case of (\ref{a153}) when $\nu'=\nu$:
\beq\label{a152b}
\begin{array}{c}
\displaystyle{
K^\hbar_i(\xi)=-\tilde{K}^\hbar_i(\xi|\nu,\nu)\,\hat{t}_i\,.
}
\end{array}
\eq
More generally, it can be shown that the reflection equation for
$\tilde{K}^\hbar_i(\xi|\nu,\nu')$
\beq\label{a157}
  \displaystyle{
R_{ij}^{\hbar}(\xi_i-\xi_j)\tilde{K}_i^\hbar(\xi_i)\tilde{R}_{ij}^{\hbar}(\xi_i+\xi_j)
\tilde{K}_j^\hbar(\xi_j)=\tilde{K}_j^\hbar(\xi_j)\tilde{R}_{ij}^{\hbar}(\xi_i+\xi_j)
\tilde{K}_i^\hbar(\xi_i)R_{ij}^{\hbar}(\xi_i-\xi_j)\,,
}
\eq
holds if and only if the two sets of couplings are related by signs, that is,
\beq\label{a162}
\begin{array}{c}
\displaystyle{
\nu'_a=\varepsilon_a\nu_a\,,
\qquad \varepsilon_a=\pm 1\,,
\qquad a=0,1,2,3\,.
}
\end{array}
\eq
Under the same condition (\ref{a162}) the operator
$K^\hbar_i(\xi|\nu,\nu')=-\tilde{K}^\hbar_i(\xi|\nu,\nu')\,\hat{t}_i$
satisfies the reflection equation (\ref{a156}).

\medskip
\noindent\textit{Unitarity.}
The operators $R$ and $\tilde{R}$ are unitary for any spectral parameters:
\beq\label{a158}
  \displaystyle{
  R_{ij}^{\hbar}(\xi)  R_{ji}^{\hbar}(-\xi)={\rm Id}\Big(\wp(\hbar)-\wp(\xi)\Big)\,,
  }
  \eq
  \beq\label{a159}
    \displaystyle{
  \tilde{R}_{ij}^{\hbar}(\xi)  \tilde{R}_{ji}^{-\hbar}(\xi)={\rm Id}\Big(\wp(\xi)-\wp(\hbar)\Big)\,.
  }
  \eq
For the boundary operators the unitarity relations
\beq\label{a160}
  \displaystyle{
 K^\hbar_i(\xi) K^\hbar_i(-\xi)
 ={\rm Id}\sum_{a=0}^3
 \Big(\bar{\nu'}_a^{\,2}\wp(\hbar+\Omega_a)
 -\bar\nu_a^2\wp(\xi+\Omega_a)\Big)\,,
 }
 \eq
 \beq\label{a161}
   \displaystyle{
 \tilde{K}^\hbar_i(\xi) \tilde{K}^{-\hbar}_i(\xi)=
 {\rm Id}\sum_{a=0}^3
 \Big(\bar\nu_a^2\wp(\xi+\Omega_a)
 -\bar{\nu'}_a^{\,2}\wp(\hbar+\Omega_a)\Big)
 }
 \eq
hold if and only if (\ref{a162}) is satisfied.
Here $\bar\nu$ and $\bar{\nu'}$ are obtained from $\nu$ and $\nu'$ respectively
by the linear transformation (\ref{q416}).
In the special case $\nu'=\nu$ one has $\bar{\nu'}=\bar\nu$, and
the factor in (\ref{a160}) reduces to the expression proposed in \cite{KH2}:
$\sum_{a=0}^3\bar\nu_a^2\bigl(\wp(\hbar+\Omega_a)-\wp(\xi+\Omega_a)\bigr)$.

\medskip
\noindent\textit{Choice of signs for the Felder--Pasquier reduction.}
The signs $\varepsilon_a$ in (\ref{a162}) will be fixed so that, after the
Felder--Pasquier conjugation of Section~\ref{sec25},
the operators $K^{\hbar,N}$ and $\tilde{K}^{\hbar,N}$ preserve the spaces
of entire theta functions $V_N$.
(The finite-dimensional reduction of $K$ goes back to \cite{KH2}, but the
dependence of the couplings on the parity of $N$ must be chosen as below;
cf.\ the residue computation in the proof of Lemma~\ref{propMap}.)
The choice depends on the parity of $N$:
\beq\label{a163}
\begin{array}{c}
\displaystyle{
\begin{aligned}
N\text{ even:}&\qquad
\varepsilon_0=\varepsilon_1=\varepsilon_2=\varepsilon_3=+1
\qquad\bigl(\nu'=\nu\bigr)\,,
\\[1mm]
N\text{ odd:}&\qquad
\varepsilon_0=\varepsilon_1=\varepsilon_3=+1\,,\quad
\varepsilon_2=-1
\qquad\bigl(\nu'_2=-\nu_2,\;
\nu'_a=\nu_a\text{ for }a\neq 2\bigr)\,.
\end{aligned}
}
\end{array}
\eq

\subsection{\texorpdfstring{$R$}{R}-matrix-valued Lax pair for the Calogero--Inozemtsev system}
The elliptic $n$-body Calogero--Moser model (\ref{a114}) is associated with the root system ${\rm A}_{n-1}$.
Extensions to other root systems are described in \cite{OP}.  Here we consider
the model with five arbitrary coupling constants
proposed by V. Inozemtsev in \cite{Inoz89}. It is associated with the ${\rm BC}_n$ root system and is given by the Hamiltonian
\begin{equation}
 \label{a170}
   \displaystyle{
 H = \frac{1}{2} \sum_{k=1}^{n} p_k^2 - g^2
\sum_{i<j}^{n} \Big(\wp(q_i-q_j) + \wp(q_i+q_j)\Big) -
\frac{1}{2}\sum\limits_{a=0}^3\sum_{k=1}^{n} \nu_a^2\wp(q_k+\Om_a)\,,
}
 \end{equation}
 where $\Om_\ga$ are the half-periods (\ref{w209}), and the five arbitrary constants are
 $g,\nu_0,\nu_1,\nu_2,\nu_3\in\mC$. In our construction below we use the $2n\times 2n$
 Lax representation with a spectral parameter due to K. Takasaki \cite{Ta}
 (see also A.J. Bordner, R. Sasaki \cite{BoSa}\footnote{Additionally, see
 \cite{Z04} for the $n=1$ case, \cite{Ch2} for a study of the spectral curve,
and \cite{Ch1} for the relativistic version (the Ruijsenaars-van Diejen model).}). This Lax pair has the form
  \beq\label{a171}
  \begin{array}{c}
  \displaystyle{
L(z)=\mat{L^{11}(z)}{L^{12}(z)}{L^{21}(z)}{L^{22}(z)}\,,\qquad
M(z)=\mat{M^{11}(z)}{M^{12}(z)}{M^{21}(z)}{M^{22}(z)}\,.
 }
 \end{array}
 \eq
The blocks $L^{ab}(z)$ and $M^{ab}(z)$, $a,b=1,2$, are $n\times n$ matrices with
the following entries:
  \beq\label{a172}
  \begin{array}{c}
  \displaystyle{
L^{11}_{ij}(z)=\delta_{ij}p_i+g(1-\delta_{ij})\phi(z,q_{ij})\,,
 }
 \end{array}
 \eq
  \beq\label{a173}
  \begin{array}{c}
  \displaystyle{
L^{12}_{ij}(z)=\delta_{ij}v(z,q_i)+g(1-\delta_{ij})\phi(z,q^+_{ij})\,,
 }
 \end{array}
 \eq
  \beq\label{a174}
  \begin{array}{c}
  \displaystyle{
L^{21}_{ij}(z)=-\delta_{ij}v(-z,q_i)-g(1-\delta_{ij})\phi(-z,q^+_{ij})\,,
 }
 \end{array}
 \eq
  \beq\label{a175}
  \begin{array}{c}
  \displaystyle{
L^{22}_{ij}(z)=-\delta_{ij}p_i-g(1-\delta_{ij})\phi(-z,q_{ij})
 }
 \end{array}
 \eq
and
  \beq\label{a176}
  \begin{array}{c}
  \displaystyle{
M^{11}_{ij}(z)=\delta_{ij}a_i+g(1-\delta_{ij})f(z,q_{ij})\,,
 }
 \end{array}
 \eq
  \beq\label{a177}
  \begin{array}{c}
  \displaystyle{
M^{12}_{ij}(z)=\frac12\,\delta_{ij}v'(z,q_i)+g(1-\delta_{ij})f(z,q^+_{ij})\,,
 }
 \end{array}
 \eq
  \beq\label{a178}
  \begin{array}{c}
  \displaystyle{
M^{21}_{ij}(z)=\frac12\,\delta_{ij}v'(-z,q_i)+g(1-\delta_{ij})f(-z,q^+_{ij})\,,
 }
 \end{array}
 \eq
  \beq\label{a179}
  \begin{array}{c}
  \displaystyle{
M^{22}_{ij}(z)=\delta_{ij}a_i+g(1-\delta_{ij})f(-z,q_{ij})\,,
 }
 \end{array}
 \eq
where
  \beq\label{a180}
  \begin{array}{c}
  \displaystyle{
q_{ij}=q_i-q_j\,,\quad q^+_{ij}=q_i+q_j
 }
 \end{array}
 \eq
and
  \beq\label{a18001}
  \begin{array}{c}
  \displaystyle{
a_i=g\sum\limits_{k:k\neq i}^n\Big(\wp(q_{i}-q_k)+\wp(q_i+q_k)\Big)
+\frac12\sum\limits_{a=0}^3\nu_a\wp(q_i+\Omega_a)\,.
 }
 \end{array}
 \eq

An $R$-matrix-valued generalisation of Takasaki's Lax pair is as follows \cite{MMZ}. Let
  \beq\label{a181}
  \begin{array}{c}
  \displaystyle{
\mL(z)=\mat{\mL^{11}(z)}{\mL^{12}(z)}{\mL^{21}(z)}{\mL^{22}(z)}\,,\qquad
\mM(z)=
\mat{\mM^{11}(z)}{\mM^{12}(z)}{\mM^{21}(z)}{\mM^{22}(z)}
 }
 \end{array}
 \eq
and
  \beq\label{a1811}
  \begin{array}{c}
  \displaystyle{
\bar\mM(z)=\mM(z)+\mF^0\, 1_{2n\times 2n}\,,
 }
 \end{array}
 \eq
where the blocks $\mL^{ab}(z)$ and $\mM^{ab}(z)$, $a,b=1,2$, are given by
  \beq\label{a182}
  \begin{array}{c}
  \displaystyle{
\mL^{11}_{ij}(z)=\delta_{ij}p_i+g(1-\delta_{ij})R_{ij}^z(q_{ij})\,,
 }
 \end{array}
 \eq
  \beq\label{a183}
  \begin{array}{c}
  \displaystyle{
\mL^{12}_{ij}(z)=\delta_{ij}\tilde{K}^z_i(q_i)+g(1-\delta_{ij})\tilde{R}^z_{ij}(q^+_{ij})\,,
 }
 \end{array}
 \eq
  \beq\label{a184}
  \begin{array}{c}
  \displaystyle{
\mL^{21}_{ij}(z)=-\delta_{ij}\tilde{K}^{-z}_i(q_i)-g(1-\delta_{ij})\tilde{R}^{-z}_{ij}(q^+_{ij})\,,
 }
 \end{array}
 \eq
  \beq\label{a185}
  \begin{array}{c}
  \displaystyle{
\mL^{22}_{ij}(z)=-\delta_{ij}p_i-g(1-\delta_{ij})R^{-z}_{ij}(q_{ij})
 }
 \end{array}
 \eq
and
  \beq\label{a186}
  \begin{array}{c}
  \displaystyle{
\mM^{11}_{ij}(z)=\delta_{ij}A_i+g(1-\delta_{ij})F^z_{ij}(q_{ij})\,,
 }
 \end{array}
 \eq
  \beq\label{a187}
  \begin{array}{c}
  \displaystyle{
\mM^{12}_{ij}(z)=\frac12\,\delta_{ij}\tilde{Y}_i^z(q_i)+g(1-\delta_{ij})\tilde{F}^z_{ij}(q^+_{ij})\,,
 }
 \end{array}
 \eq
  \beq\label{a188}
  \begin{array}{c}
  \displaystyle{
\mM^{21}_{ij}(z)=\frac12\,\delta_{ij}\tilde{Y}^{-z}_i(q_i)+g(1-\delta_{ij})\tilde{F}^{-z}_{ij}(q^+_{ij})\,,
 }
 \end{array}
 \eq
  \beq\label{a189}
  \begin{array}{c}
  \displaystyle{
\mM^{22}_{ij}(z)=\delta_{ij}A_i+g(1-\delta_{ij})F^{-z}_{ij}(q_{ij})\,.
 }
 \end{array}
 \eq
In these formulae, we use the notation
\beq\label{a190}
F_{ij}^z(q):=\partial_{q}R_{ij}^z(q)\,, \qquad
\tilde{F}_{ij}^z(q):=\partial_{q}\tilde{R}_{ij}^z(q)\,,
\eq
\beq\label{a191}
Y^\hbar_i(q):=\partial_q K^\hbar_i(q|\nu,\nu')\,, \qquad
\tilde{Y}^\hbar_i(q):=\partial_q \tilde{K}^\hbar_i(q|\nu,\nu')\,.
\eq
  \beq\label{a192}
  \begin{array}{c}
  \displaystyle{
A_i=-\frac12\,\tilde{Y}_i^0(q_i)-g\sum_{k:k\neq i}^n \Big(F^0_{ik}(q_{ik})+{\ti F}^0_{ik}(q_{ik}^+)\Big)
 }
 \end{array}
 \eq
 and
  \beq\label{a193}
  \begin{array}{c}
  \displaystyle{
\mF^0=g\sum_{k<l}^n \Big(F^0_{kl}(q_{kl})+{\ti F}^0_{kl}(q_{kl}^+)\Big)+
\frac12\sum\limits_{k=1}^n \tilde{Y}_k^0(q_k)\,.
 }
 \end{array}
 \eq
 The main result is that the classical Lax equation
  ${\dot \mL}(z)=[\mL(z),\bar\mM(z)]$ is equivalent to
 the equations of motion for the Calogero--Inozemtsev model (\ref{a170}),
 with the constants $\nu_a$ replaced by $\bar\nu_a$, as follows from (\ref{a161}).
 To recover the equations of motion with the couplings in (\ref{a170}), one can
 replace $\nu_a$ by $\bar\nu_a$ in the definition of the ${\ti K}$-matrix,
 since the linear transformation (\ref{q416}) denoted by the bar is involutive.

The long-range spin chains are derived by the same procedure as for the $A$-type models.
Rewrite the classical Lax equation in the form of the half-quantum Lax equation (\ref{a11491}).
Then restriction to an equilibrium position $z_k$
 of the original classical model (\ref{a170}) (or its degeneration)
provides the quantum Lax equation (\ref{a11493}) with the Hamiltonian
 \beq\label{a194}
\begin{array}{c}
\displaystyle{
 \mH=\sum\limits_{i<j}^n \Big(F^0_{ij}(z_i-z_j)+{\ti F}^0_{ij}(z_i+z_j)\Big)+
 \frac12\sum\limits_{k=1}^n \tilde{Y}_k^0(z_k)\,,
 \qquad
 \tilde{Y}^0(w)=\p_w {\ti K}^\hbar(w)\Big|_{\hbar=0}\,,
}
\end{array}
 \eq
 where we set $g=1$, assuming $g\neq 0$ in (\ref{a193}). This is achieved by dividing the Hamiltonian by $g$ and redefining the constants
 $\nu_a\to g\nu_a$.

%\section{Elliptic finite-dimensional $R$-operator and corresponding boundary $K$-operator}\label{sec25}
\section{The elliptic \texorpdfstring{$K$}{K}-matrix via the Felder--Pasquier procedure}\label{sec25}
\setcounter{equation}{0}

In this section we construct finite-dimensional matrix solutions of the
quantum Yang--Baxter and reflection equations by restricting the
Shibukawa--Ueno-type operators from Section~\ref{sec2} to spaces of elliptic
theta functions, following Felder and Pasquier \cite{FP} and Komori and Hikami
\cite{KH2}.

\subsection{Felder--Pasquier procedure}\label{sec251}

The Shibukawa--Ueno-type operators (\ref{a150})--(\ref{a153}) act on (infinite-dimensional)
spaces of functions of the auxiliary variables $x_i$. Finite-dimensional matrix solutions of
the quantum Yang--Baxter and reflection equations are obtained by restricting these operators
to suitable spaces of elliptic theta functions. For the $R$-operator this procedure goes back to
Felder and Pasquier \cite{FP}; the analogous construction for the boundary $K$-operator was
developed by Komori and Hikami \cite{KH2}. Below we recall the main statements;
a complete proof of the passage to the modified operators is given for
Proposition~\ref{propMod}, an outline of the identification of
$R_{12}^{\hbar,N}$ with the Baxter--Belavin matrix via the discrete Fourier identities
(\ref{a911m}), (\ref{a911t}) is given for Proposition~\ref{propFP},
and an analogous outline for the identification of $\tilde{K}_i^{\hbar,N}$
with (\ref{Ktilde}) is given for Proposition~\ref{propFPK}.
Explicit matrix formulae are given below in this section.

\paragraph{Spaces of theta functions.}
Following Felder and Pasquier \cite{FP}, fix an integer $N\ge 1$ and $\tau\in\mC$ with $\mathrm{Im}\,\tau>0$. For a complex parameter $\xi$
let $V_N(\xi)$ be the $N$-dimensional space of entire functions $f(x)$ satisfying
\beq\label{fp10}
\begin{array}{c}
\displaystyle{
f(x+1)=f(x)\,,\qquad
f(x+\tau)=e^{-2\pi\imath N x-\pi\imath N\tau+2\pi\imath\xi}\,f(x)\,.
}
\end{array}
\eq
In particular, we write $V_N:=V_N(0)$. The space $V_N(\xi)$ is $N$-dimensional. Indeed, any entire $1$-periodic function admits a
Fourier expansion $f(x)=\sum_{n\in\mZ}a_ne^{2\pi\imath nx}$; the second condition in
(\ref{fp10}) with $\xi=0$ yields the recurrence
\beq\label{fp10a}
\begin{array}{c}
\displaystyle{
a_{n+N}=e^{\pi\imath(2n+N)\tau}\,a_n\,,
}
\end{array}
\eq
so the coefficients are determined by $N$ independent values $a_0,\dots,a_{N-1}$.
On the space $V_N$ there is an irreducible action of the Heisenberg group $H_N$. It is generated by $A$, $B$ and a central element $\varepsilon$ with
\beq\label{fp12}
\begin{array}{c}
\displaystyle{
A^N=B^N=1\,,\qquad AB=\varepsilon BA\,,\qquad \varepsilon=e^{2\pi\imath/N}\,.
}
\end{array}
\eq
The action on $V_N$ is
\beq\label{fp13}
\begin{array}{c}
\displaystyle{
(Af)(x)=f\!\left(x+\frac1N\right)\,,\qquad
(Bf)(x)=e^{2\pi\imath x+\pi\imath\tau/N}\,f\!\left(x+\frac{\tau}{N}\right)\,.
}
\end{array}
\eq
Following Komori and Hikami \cite{KH2}, we take as a working basis of $V_N$
the theta functions obtained by diagonalising $A$, with the orientation
$B\theta_\al=\theta_{\al-1}$ (the KH basis reindexed by
$\al\mapsto-\al$):
\beq\label{fp17}
\begin{array}{c}
\displaystyle{
\theta_\al(x)
=\sum_{n\in\mZ}
\exp\!\Big(
\pi\imath\frac{(Nn-\al)^2\tau}{N}+2\pi\imath(Nn-\al)x
\Big)
=\vartheta\!\left[\begin{array}{c}-\al/N\\ 0\end{array}\right](Nx\,;\,N\tau)\,,
}
\end{array}
\eq
for $\al\in\mZ_N=\mZ/N\mZ$.
The action of $A$ and $B$ on this basis is
\beq\label{fp14}
\begin{array}{c}
\displaystyle{
A\,\theta_\al=e^{-2\pi\imath\al/N}\,\theta_\al\,,\qquad
B\,\theta_\al=\theta_{\al-1}\,,
\qquad \al\in\mZ_N\,.
}
\end{array}
\eq

\paragraph{Modification of the $R$ and $K$ operators.}
Identify $V_N(\xi_1)\otimes V_N(\xi_2)$ with the space of entire functions of
$(x_1,x_2)$ such that $x_i\mapsto f(x_1,x_2)$ belongs to $V_N(\xi_i)$ for each fixed
value of the other argument. Introduce notation for the lattice of periods
\beq\label{fp141}
\begin{array}{c}
\displaystyle{
\Lambda=\mZ+\tau\mZ\,,
}
\end{array}
\eq
 and fix the half-periods as in
(\ref{w209}). For the boundary operators we take the couplings related by
(\ref{a162})--(\ref{a163}):
\beq\label{fpKcases}
\begin{array}{c}
\displaystyle{
\begin{aligned}
N\text{ even:}&\qquad
\tilde{K}^\hbar(\xi)
=
v(\hbar,x|\nu)-v(-\xi,x|\nu)\,\hat t\,,
\\[1mm]
N\text{ odd:}&\qquad
\tilde{K}^\hbar(\xi)
=
v(\hbar,x|\nu')-v(-\xi,x|\nu)\,\hat t\,,
\quad
\nu'_2=-\nu_2\,,\quad
\nu'_a=\nu_a\ (a\neq 2)\,,
\end{aligned}
}
\end{array}
\eq
and $K^\hbar(\xi)=-\tilde{K}^\hbar(\xi)\,\hat t$ in both cases
(cf.\ (\ref{a152b})).

\begin{lemma}[\cite{FP}]
\label{propMap}
The operators (\ref{a150}), (\ref{a151}) and (\ref{fpKcases}) act on the spaces
$V_N(\,\cdot\,)$ by
\beq\label{fp19b}
\begin{array}{rcl}
R_{12}^{\hbar}(\xi_1-\xi_2)
&\!\!:\!\!&
V_N(\xi_1)\otimes V_N(\xi_2-\hbar)
\;\longrightarrow\;
V_N(\xi_1-\hbar)\otimes V_N(\xi_2)\,,
\\[1.5mm]
\tilde{R}_{12}^{\hbar}(\xi_1+\xi_2)
&\!\!:\!\!&
V_N(\xi_1)\otimes V_N(\xi_2+\hbar)
\;\longrightarrow\;
V_N(\xi_1-\hbar)\otimes V_N(\xi_2)\,,
\\[1.5mm]
K^{\hbar}(\xi)
&\!\!:\!\!&
V_N(-\hbar-\xi)
\;\longrightarrow\;
V_N(-\hbar+\xi)\,,
\\[1.5mm]
\tilde{K}^{\hbar}(\xi)
&\!\!:\!\!&
V_N(\xi+\hbar)
\;\longrightarrow\;
V_N(\xi-\hbar)\,.
\end{array}
\eq
\end{lemma}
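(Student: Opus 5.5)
The plan is to check the two defining properties of the target spaces directly: entireness, and the quasi-periodicity (\ref{fp10}) in each variable. We treat $R^\hbar_{12}$ and $\tilde K^\hbar$ by hand and obtain the other two operators by conjugating with the reflection $\hat t$. The basic observation is that $\hat t$ maps $V_N(\eta)$ onto $V_N(-\eta)$. Indeed, writing the second condition in (\ref{fp10}) for $y=-x-\tau$ gives $f(-x-\tau)=e^{-2\pi\imath Nx-\pi\imath N\tau-2\pi\imath\eta}f(-x)$, and $1$-periodicity is clearly preserved.

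Granting the statement for $R$, the one for $\tilde R$ follows from (\ref{a151}), i.e. $\tilde R^\hbar_{12}(\xi_1+\xi_2)=\hat t_2R^\hbar_{12}(\xi_1-(-\xi_2))\hat t_2$. The chain is
$$V_N(\xi_1)\otimes V_N(\xi_2+\hbar)\xrightarrow{\hat t_2}V_N(\xi_1)\otimes V_N(-\xi_2-\hbar)\xrightarrow{R}V_N(\xi_1-\hbar)\otimes V_N(-\xi_2)\xrightarrow{\hat t_2}V_N(\xi_1-\hbar)\otimes V_N(\xi_2).$$
Similarly $K^\hbar(\xi)=-\tilde K^\hbar(\xi)\hat t$ sends $V_N(-\hbar-\xi)$ to $V_N(\xi+\hbar)$ via $\hat t$, and then to $V_N(\xi-\hbar)$ via $\tilde K^\hbar(\xi)$.

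For $R$ we use the standard properties of $\phi(z,x)$ as a function of $x$: $\phi(z,x+1)=\phi(z,x)$, $\phi(z,x+\tau)=e^{-2\pi\imath z}\phi(z,x)$, and simple poles at $x\in\Lambda$ with residue $1$ at $x=0$, independent of $z$. Take $f\in V_N(\xi_1)\otimes V_N(\xi_2-\hbar)$ and set $g=\phi(\hbar,x_{12})f(x_1,x_2)-\phi(-\xi_{12},x_{12})f(x_2,x_1)$.

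Under $x_1\mapsto x_1+\tau$, both terms acquire the common factor $e^{-2\pi\imath Nx_1-\pi\imath N\tau}$ times a phase. In the first term the phase is $e^{2\pi\imath(\xi_1-\hbar)}$. In the second it is $e^{2\pi\imath\xi_{12}}e^{2\pi\imath(\xi_2-\hbar)}=e^{2\pi\imath(\xi_1-\hbar)}$, since $x_1$ now sits in the second slot of $f$. The analogous computation for $x_2\mapsto x_2+\tau$ gives the phase $e^{2\pi\imath\xi_2}$ in both terms. Periodicity under $x_i\mapsto x_i+1$ is immediate.

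Thus $g$ is quasi-periodic in $x_1$ for fixed $x_2$, so it suffices to exclude a pole at $x_1=x_2$. There the residue is $f(x_2,x_2)-f(x_2,x_2)=0$. Hence $g$ is entire, which settles the case of $R$.

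For $\tilde K^\hbar(\xi)$ on $f\in V_N(\xi+\hbar)$ we write out $v(\hbar,x|\nu')=\sum_k\nu'_k e^{4\pi\imath\hbar\partial_\tau\Om_k}\phi(2\hbar,x+\Om_k)$. Its $\tau$-quasi-periodicity factor $e^{-4\pi\imath\hbar}$ shifts $\xi+\hbar$ to $\xi-\hbar$. For the term $v(-\xi,x|\nu)f(-x)$, the computation above gives $f(-x-\tau)\propto e^{-2\pi\imath(\xi+\hbar)}f(-x)$, and the factor $e^{4\pi\imath\xi}$ from $\phi(-2\xi,\cdot)$ again produces $\xi-\hbar$. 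So quasi-periodicity holds term by term.

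The main obstacle is cancellation of the poles at $x=-\Om_k$ modulo $\Lambda$. Here the residue of the first term is $\nu'_k e^{4\pi\imath\hbar\partial_\tau\Om_k}f(-\Om_k)$, up to the $\hbar$-independent normalisation coming from the residue of $\phi(2\hbar,\cdot)$ at the relevant lattice translate. That of the second term is $\nu_k e^{-4\pi\imath\xi\partial_\tau\Om_k}f(\Om_k)$, with the analogous normalisation.

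The plan is to write $\Om_k=-\Om_k+2\Om_k$ with $2\Om_k\in\Lambda$ and use (\ref{fp10}) to express $f(\Om_k)$ through $f(-\Om_k)$. When $2\Om_k$ contains $\tau$, the resulting exponential $e^{2\pi\imath(\xi+\hbar)}$-type factor should be exactly compensated by the factors $e^{4\pi\imath u\partial_\tau\Om_k}$, combined with the quasi-periodicity factor of $\phi$ at the shifted pole. What remains is a sign $(-1)^{N}$-type factor from $e^{-\pi\imath N\tau-2\pi\imath N x}$ evaluated at $x=\pm\Om_k$. I expect this sign to be trivial except for $k=2$, i.e. $\Om_2=\frac{1+\tau}{2}$, with $N$ odd. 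This is precisely compensated by the choice $\nu'_2=-\nu_2$ in (\ref{fpKcases}).

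The careful bookkeeping of these residues, including the precise convention for $\Om_k$ in (\ref{w209}) and the choice of lattice translate, is the step requiring the most care. Once it is done, all four mapping properties in (\ref{fp19b}) follow.
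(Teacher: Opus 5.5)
Your proposal is essentially the paper's proof: quasi-periodicity plus vanishing of the diagonal residue for $R$, quasi-periodicity plus residue cancellation at the half-periods for $\tilde K$, and the cases $\tilde R$ and $K$ obtained by conjugating with $\hat t$ (the paper does the same for $K$ and mentions the same conjugation for $\tilde R$). The residue bookkeeping you left open works out exactly as you predict, and no shifted-pole factor is needed, because at $x=-\Om_k$ both $\phi(2\hbar,x+\Om_k)$ and $\phi(-2\xi,x+\Om_k)$ have residue $1$: one gets $f(-\Om_1)=f(\Om_1)$, $f(-\Om_3)=e^{-2\pi\imath(\xi+\hbar)}f(\Om_3)$ and $f(-\Om_2)=(-1)^N e^{-2\pi\imath(\xi+\hbar)}f(\Om_2)$, while for $k=2,3$ the ratio $e^{4\pi\imath\hbar\p_\tau\Om_k}/e^{-4\pi\imath\xi\p_\tau\Om_k}=e^{2\pi\imath(\xi+\hbar)}$ cancels the exponential, leaving the conditions $\nu'_k=\nu_k$ for $k=0,1,3$ and $\nu'_2=(-1)^N\nu_2$.
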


\begin{proof}
The statement follows from the results of Felder and Pasquier \cite{FP} and Komori and Hikami \cite{KH2};
the only additional point is that we treat the even and odd cases for $N$
separately.
In particular, some of the residue identities used in \cite{KH2} hold only for
even $N$; for odd $N$ the couplings must be adjusted as in (\ref{fpKcases}).
It is enough to show that the image of an element of the space on the
left-hand side of (\ref{fp19b}) is entire and transforms under
$x\mapsto x+1$, $x\mapsto x+\tau$ with the multipliers of the corresponding
space on the right-hand side of (\ref{fp19b}).

\medskip
\noindent\textit{The $R$-operator.}
Let $f\in V_N(\xi_1)\otimes V_N(\xi_2-\hbar)$ and set
$g=R_{12}^{\hbar}(\xi_1-\xi_2)f$.
As functions of $x_{12}=x_1-x_2$, the coefficients $\phi(\hbar,x_{12})$ and
$\phi(-\xi_1+\xi_2,x_{12})$ have simple poles on the lattice $\Lambda$ (\ref{fp141}) with residues $\exp(-2\pi\imath\hbar m_2)$
and $\exp(2\pi\imath \xi_{12} m_2)$, respectively,
at points $x_{12}=m_1+m_2\tau$, $m_1,m_2\in\mZ$.
Therefore, near $x_1=x_2$ one has
\[
g=\frac{f(x_1,x_2)-f(x_2,x_1)}{x_{12}}+O(1)\,,
\]
so the apparent singularity is removable.
The quasi-periodicity $\phi(z+1,u)=\phi(z,u)$,
$\phi(z+\tau,u)=e^{-2\pi\imath u}\phi(z,u)$ together with
(\ref{fp10}) implies that $g$ is $1$-periodic in each variable and transforms
under $x_i\mapsto x_i+\tau$ with the multipliers of
$V_N(\xi_1-\hbar)\otimes V_N(\xi_2)$.
In particular, $g$ remains regular at all lattice translates
$x_1=x_2+n+m\tau$, hence is entire, and
$g\in V_N(\xi_1-\hbar)\otimes V_N(\xi_2)$.

\medskip
\noindent\textit{The $\tilde{R}$-operator.}
The same argument applies to
$\tilde{R}_{12}^{\hbar}(\xi_1+\xi_2)=\hat t_2 R_{12}^{\hbar}(\xi_1+\xi_2)\hat t_2$,
whose poles on $x_1+x_2\in\Lambda$ cancel because of the combination
$f(x_1,x_2)-f(-x_2,-x_1)$.

\medskip
\noindent\textit{The $\tilde{K}$-operator.}
Let $f\in V_N(\xi+\hbar)$ and
$\tilde{K}^\hbar(\xi)f=v(\hbar,x|\nu')f(x)-v(-\xi,x|\nu)f(-x)$,
with $\nu'$ as in (\ref{fpKcases}).
By (\ref{q413}) each summand has simple poles at $x\in-\Om_a+\Lambda$.
Writing
$M_a(z)=\exp(4\pi\imath z\,\partial_\tau\Om_a)$,
so that $M_0=M_1=1$ and $M_2(z)=M_3(z)=e^{2\pi\imath z}$, the residues of
$v(z,\,\cdot\,|\mu)$ at $x=-\Om_a$ equal $\mu_a M_a(z)$.
Thus the residue of $\tilde{K}^\hbar(\xi)f$ at $x=-\Om_a$ is
\[
\nu'_a M_a(\hbar)\,f(-\Om_a)-\nu_a M_a(-\xi)\,f(\Om_a)\,.
\]
For $f\in V_N(\xi+\hbar)$ the quasi-periodicity (\ref{fp10}) gives
\begin{align*}
f(-\Om_0)&=f(\Om_0)\,,&
f(-\Om_1)&=f(\Om_1)\,,\\
f(-\Om_3)&=e^{-2\pi\imath(\xi+\hbar)}\,f(\Om_3)\,,&
f(-\Om_2)&=(-1)^N e^{-2\pi\imath(\xi+\hbar)}\,f(\Om_2)\,.
\end{align*}
One finds that the residues vanish for
$a=0,1,3$ as soon as $\nu'_a=\nu_a$, while for $a=2$ they vanish if and only if
$\nu'_2=(-1)^N\nu_2$.
This is precisely the parity rule (\ref{fpKcases})\,/\,(\ref{a163}).

The same quasi-periodicity of $\phi$ and of $f$ that was used for $R$ shows that
$\tilde{K}^\hbar(\xi)f$ is $1$-periodic and transforms under $x\mapsto x+\tau$
with the multiplier of $V_N(\xi-\hbar)$.
Hence regularity at the four half-periods implies regularity at all their
lattice translates $-\Om_a+\Lambda$, so the image is entire and lies in $V_N(\xi-\hbar)$.

\medskip
\noindent\textit{The $K$-operator.}
Since $\hat t:V_N(\mu)\to V_N(-\mu)$ and
$K^\hbar(\xi)=-\tilde{K}^\hbar(\xi)\,\hat t$, the mapping properties of $K$
follow from those of $\tilde{K}$.
The same residue computation on the input space $V_N(-\hbar-\xi)$ again forces
the sign choice (\ref{fpKcases}).
\end{proof}

To obtain endomorphisms of a single copy of $V_N$ (respectively $V_N\otimes V_N$),
one conjugates (\ref{a150}), (\ref{a151}) and (\ref{fpKcases}) by the translations that identify the shifted
spaces in (\ref{fp19b}) with $V_N$. Let
${\hat{\bf T}}_N(\la)f(x)=f(x-\la/N)$; then ${\hat{\bf T}}_N(\la):V_N\stackrel{\sim}{\to}V_N(\la)$
and more generally ${\hat{\bf T}}_N(\la):V_N(\mu)\stackrel{\sim}{\to}V_N(\mu+\la)$.
We write ${\hat{\bf T}}_{N,i}(\la)$ for the action of ${\hat{\bf T}}_N(\la)$ on the $i$-th factor.
The modified operators are
\beq\label{fp20}
\begin{array}{rcl}
\displaystyle{
R_{ij}^{\hbar,N}(\xi_i-\xi_j)}
&\!\!=\!\!&
{\hat{\bf T}}_{N,i}(-\xi_i+\hbar)\,{\hat{\bf T}}_{N,j}(-\xi_j)\,
R^{\hbar}(\xi_i-\xi_j)\,
{\hat{\bf T}}_{N,i}(\xi_i)\,{\hat{\bf T}}_{N,j}(\xi_j-\hbar)\,,
\\[2mm]
\displaystyle{
\tilde{R}_{ij}^{\hbar,N}(\xi_i+\xi_j)}
&\!\!=\!\!&
{\hat{\bf T}}_{N,i}(-\xi_i+\hbar)\,{\hat{\bf T}}_{N,j}(-\xi_j)\,
\tilde{R}^{\hbar}(\xi_i+\xi_j)\,
{\hat{\bf T}}_{N,i}(\xi_i)\,{\hat{\bf T}}_{N,j}(\xi_j+\hbar)\,,
\\[2mm]
\displaystyle{
K_i^{\hbar,N}(\xi)}
&\!\!=\!\!&
{\hat{\bf T}}_{N,i}(-\xi+\hbar)\,
K^{\hbar}(\xi)\,
{\hat{\bf T}}_{N,i}(-\hbar-\xi)\,,
\\[2mm]
\displaystyle{
\tilde{K}_i^{\hbar,N}(\xi)}
&\!\!=\!\!&
{\hat{\bf T}}_{N,i}(-\xi+\hbar)\,
\tilde{K}^{\hbar}(\xi)\,
{\hat{\bf T}}_{N,i}(\xi+\hbar)\,.
\end{array}
\eq
\begin{proposition}
\label{propMod}
The modified operators (\ref{fp20}) preserve the corresponding theta spaces:
$R_{ij}^{\hbar,N}$ and $\tilde{R}_{ij}^{\hbar,N}$ act on $V_N\otimes V_N$, while
$K_i^{\hbar,N}$ and $\tilde{K}_i^{\hbar,N}$ act on $V_N$
(with the sign choice (\ref{a163}) for the couplings in (\ref{fpKcases})).
They satisfy the same ${\rm BC}_n$ associative Yang--Baxter equations
(\ref{a122})--(\ref{a131}), the Yang--Baxter equations
(\ref{a154}), (\ref{a155}), the reflection equations
(\ref{a156}), (\ref{a157}), and the unitarity relations
(\ref{a158})--(\ref{a161}) as the unrestricted operators.
\end{proposition}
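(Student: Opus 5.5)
The plan is to combine Lemma~\ref{propMap} with the observation that every identity in the list is a relation between compositions of operators whose arguments are linked in a way compatible with the spectral-parameter shifts. In such a composition the intermediate translations ${\hat{\bf T}}_{N,i}$ telescope, so each relation for the modified operators reduces to the same relation for the unrestricted Shibukawa--Ueno-type operators (\ref{a150})--(\ref{a153}). Those relations hold by \cite{MMZ,KH2,SU} and the discussion of Section~\ref{sec2}, under the sign condition (\ref{a162}), which the choice (\ref{a163}) satisfies.

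\textbf{Invariance of the spaces.} Since ${\hat{\bf T}}_N(\la):V_N(\mu)\to V_N(\mu+\la)$, one composes the maps in (\ref{fp20}) with (\ref{fp19b}). For example, ${\hat{\bf T}}_{N,i}(\xi_i){\hat{\bf T}}_{N,j}(\xi_j-\hbar)$ sends $V_N\otimes V_N$ to $V_N(\xi_i)\otimes V_N(\xi_j-\hbar)$. Then $R^\hbar$ maps this to $V_N(\xi_i-\hbar)\otimes V_N(\xi_j)$, and the outer translations return it to $V_N\otimes V_N$. The checks for $\tilde R$, $K$ and $\tilde K$ are identical. The sign rule (\ref{a163}) enters only through Lemma~\ref{propMap}.

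\textbf{Transfer of the relations.} Each relation is handled by inserting ${\hat{\bf T}}_N(\la){\hat{\bf T}}_N(-\la)={\rm Id}$ between consecutive factors. One then checks that the translations cancel on each tensor factor, leaving a global conjugation by translations that is the same for every term of the relation. Take the four-term relation (\ref{a131}). On factor $j$, $\tilde K_j^w(\xi_j)$ requires the input $V_N(\xi_j+w)$, while $R_{ij}^{w+z}(\xi_{ij})$ outputs $V_N(\xi_j)$ on factor $j$ and $V_N(\xi_i-w-z)$ on factor $i$. These shifts must coincide with the domains and codomains read off for each of the other three terms. Because the parametric relations were designed so that the superscripts add consistently (this is what makes the Shibukawa--Ueno representation work), the total shift on each factor is the same in all terms. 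Concretely, one computes the source space $V_N(a_i)\otimes V_N(a_j)$ and the target space for each monomial and verifies agreement.

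The same bookkeeping applies to the three-term relations (\ref{a127})--(\ref{a130}) and to the Yang--Baxter equations (\ref{a154})--(\ref{a155}). It also applies to the reflection equations (\ref{a156})--(\ref{a157}): there $K_i$ changes $V_N(-\hbar-\xi_i)$ into $V_N(-\hbar+\xi_i)$, which matches the $\xi_i\pm\xi_j$ arguments of the neighbouring $R$, $\tilde R$. The permutation relations (\ref{a124})--(\ref{a126}) are immediate because the operators act on different factors. Unitarity (\ref{a158})--(\ref{a161}) follows because the right-hand sides are scalars, which commute with the translations. The symmetry relations (\ref{a122})--(\ref{a123}) follow from the corresponding operator identities together with relabelling of the translation parameters.

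\textbf{Main obstacle.} I expect the main difficulty to be the consistency of the translations in the relations with shifted superscripts, namely (\ref{a131}), (\ref{a128})--(\ref{a130}) and the $\tilde R$/$K$ reflection equation. There the ${\hat{\bf T}}$ attached to one operator depends on its own superscript, for example $\hbar=w-z$ versus $w+z$. One must check that the definitions (\ref{fp20}), with their asymmetric placement of $\hbar$ on the $i$ and $j$ factors, produce matching intermediate spaces. I would first track the parameter of each space factor by factor in every monomial, using Lemma~\ref{propMap}. If some term mismatches by a global shift, I would absorb it into a common conjugation of the whole identity, which is harmless since it is applied uniformly. Only if a genuine mismatch appeared would one need to modify the normalisation of (\ref{fp20}).
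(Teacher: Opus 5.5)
Your outline has the right shape: invariance of the spaces from Lemma~\ref{propMap}, then reduction of each identity to the unrestricted one by conjugation with ${\hat{\bf T}}_N$. The invariance part, the permutation relations and the unitarity relations go through as you say, because there the inner translations really do cancel.

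\textbf{Where it fails.} Your central claim, that the intermediate translations telescope, is false for the operators (\ref{fp20}) in most of the remaining relations, and your fallbacks do not repair it.
\begin{itemize}
\item In the left-hand side of (\ref{a131}) the inner factors combine to ${\hat{\bf T}}_{N,j}(\xi_j-w-z)\,{\hat{\bf T}}_{N,j}(-\xi_j+w)={\hat{\bf T}}_{N,j}(-z)$. This stays trapped between $R^{w+z}_{ij}(\xi_i-\xi_j)$ and $\tilde K^w_j(\xi_j)$. In terms of your own bookkeeping, the unrestricted $\tilde K^w_j(\xi_j)$ lands in $V_N(\xi_j-w)$ on the $j$-th factor. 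But $R^{w+z}_{ij}(\xi_i-\xi_j)$, with the parameters used in (\ref{fp20}), starts from $V_N(\xi_j-w-z)$ there.
\item In (\ref{a154}) a ${\hat{\bf T}}_{N,i}(\hbar)$ remains between $R_{ij}$ and $R_{ik}$, and a ${\hat{\bf T}}_{N,k}(-\hbar)$ remains between $R_{ik}$ and $R_{jk}$.
\item Even (\ref{a123}) is not a mere relabelling. The outer translations of $R^{z,N}_{ij}(\xi_i-\xi_j)$ and $R^{-z,N}_{ji}(\xi_j-\xi_i)$ differ by ${\hat{\bf T}}_{N,i}(z){\hat{\bf T}}_{N,j}(z)$.
\end{itemize}
A translation sandwiched between two non-commuting factors is not a global shift, so it cannot be absorbed into a common conjugation. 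Renormalising (\ref{fp20}) is not an option either, since it would change the operators the proposition is about.

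\textbf{The missing ingredient} is the translation covariance (\ref{fpTcomm}) of the unrestricted operators.
\begin{itemize}
\item $R^\hbar_{ij}(\xi)$ commutes with ${\hat{\bf T}}_{N,i}(\eta){\hat{\bf T}}_{N,j}(\eta)$, because its coefficients depend only on $x_i-x_j$ and $\hat s_{ij}$ commutes with diagonal shifts.
\item $\tilde R^\hbar_{ij}(\xi)$ commutes with ${\hat{\bf T}}_{N,i}(\eta){\hat{\bf T}}_{N,j}(-\eta)$, because its coefficients depend only on $x_i+x_j$ and $\hat s_{ij}\hat t_i\hat t_j$ commutes with antidiagonal shifts.
\end{itemize}
This lets you push a trapped translation to the outside. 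For example, $R^{w+z}_{ij}\,{\hat{\bf T}}_{N,j}(-z)={\hat{\bf T}}_{N,i}(-z){\hat{\bf T}}_{N,j}(-z)\,R^{w+z}_{ij}\,{\hat{\bf T}}_{N,i}(z)$, after which ${\hat{\bf T}}_{N,i}(z)$ commutes past $\tilde K^w_j$. Doing this in each monomial brings all four terms of (\ref{a131}) to the common form ${\hat{\bf T}}_{N,i}(-\xi_i+w){\hat{\bf T}}_{N,j}(-\xi_j-z)\,(\cdots)\,{\hat{\bf T}}_{N,i}(\xi_i+z){\hat{\bf T}}_{N,j}(\xi_j+w)$. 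This is exactly the paper's argument, and the other relations are treated the same way.

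Equivalently, covariance is what makes $R^{\hbar,N}_{ij}$ depend on $\xi_i-\xi_j$ alone and $\tilde R^{\hbar,N}_{ij}$ on $\xi_i+\xi_j$ alone. Your space-by-space bookkeeping does close up in the unrestricted picture, once the parameters of $R$ are allowed to slide along the diagonal and those of $\tilde R$ along the antidiagonal. Covariance is precisely what lets that bookkeeping match the fixed normalisation (\ref{fp20}).
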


\begin{proof}
  The analogous statement is contained in
  \cite{FP,KH2}.
  The preservation of $V_N$ and $V_N\otimes V_N$ is immediate from Lemma~\ref{propMap}
and the definitions (\ref{fp20}), once the couplings obey (\ref{a163}).
The identities for the modified operators follow by conjugation with
the translations ${\hat{\bf T}}_N$, using the multiplicativity
\beq\label{fpTmult}
\begin{array}{c}
\displaystyle{
{\hat{\bf T}}_N(\la+\mu)={\hat{\bf T}}_N(\la)\,{\hat{\bf T}}_N(\mu)
}
\end{array}
\eq
and the intertwining relations
\beq\label{fpTcomm}
\begin{array}{c}
\displaystyle{
\begin{aligned}
R^{\hbar}(\xi)\,&\bigl({\hat{\bf T}}_N(\eta)\otimes {\hat{\bf T}}_N(\eta)\bigr)
=\bigl({\hat{\bf T}}_N(\eta)\otimes {\hat{\bf T}}_N(\eta)\bigr)\,R^{\hbar}(\xi)\,,
\\[1mm]
\tilde{R}^{\hbar}(\xi)\,&\bigl({\hat{\bf T}}_N(\eta)\otimes {\hat{\bf T}}_N(-\eta)\bigr)
=\bigl({\hat{\bf T}}_N(\eta)\otimes {\hat{\bf T}}_N(-\eta)\bigr)\,\tilde{R}^{\hbar}(\xi)\,,
\end{aligned}
}
\end{array}
\eq
together with the analogous intertwining relations for $K^{\hbar}$ and $\tilde{K}^{\hbar}$.
Expanding a product of modified operators by (\ref{fp20}) and cancelling
adjacent translations reduces the claim to the unrestricted one.

As an illustration we verify the four-term relation (\ref{a131}).
Expanding the modified product on the left-hand side by (\ref{fp20}) one has
\begin{align}
&\quad R_{ij}^{w+z,N}(\xi_i-\xi_j)\,\tilde{K}_j^{w,N}(\xi_j)=
\nonumber\\
&\quad=
{\hat{\bf T}}_{N,i}(-\xi_i+w+z)\,{\hat{\bf T}}_{N,j}(-\xi_j)\,
R_{ij}^{w+z}(\xi_i-\xi_j)\,
{\hat{\bf T}}_{N,i}(\xi_i)\,{\hat{\bf T}}_{N,j}(\xi_j-w-z)\times
\nonumber\\
&\qquad\times
{\hat{\bf T}}_{N,j}(-\xi_j+w)\,
\tilde{K}^{w}_j(\xi_j)\,
{\hat{\bf T}}_{N,j}(\xi_j+w)=
\nonumber\\
&\quad=
{\hat{\bf T}}_{N,i}(-\xi_i+w+z)\,{\hat{\bf T}}_{N,j}(-\xi_j)\,
R_{ij}^{w+z}(\xi_i-\xi_j)\,
{\hat{\bf T}}_{N,i}(\xi_i)\,{\hat{\bf T}}_{N,j}(-z)\,
\tilde{K}^{w}_j(\xi_j)\,
{\hat{\bf T}}_{N,j}(\xi_j+w)=
\nonumber\\
&\quad=
{\hat{\bf T}}_{N,i}(-\xi_i+w)\,{\hat{\bf T}}_{N,j}(-\xi_j-z)\,
R_{ij}^{w+z}(\xi_i-\xi_j)\,
\tilde{K}^{w}_j(\xi_j)\,
{\hat{\bf T}}_{N,i}(\xi_i+z)\,
{\hat{\bf T}}_{N,j}(\xi_j+w)\,,
\label{fp131LHS}
\end{align}
after cancelling adjacent translations by (\ref{fpTmult}) and passing a
diagonal factor through $R_{ij}^{w+z}$ by (\ref{fpTcomm}). The three
summands on the right-hand side reduce analogously to the same outer
conjugation of the corresponding unrestricted terms. Thus each of the
four terms takes the form
\begin{equation*}
{\hat{\bf T}}_{N,i}(-\xi_i+w)\,{\hat{\bf T}}_{N,j}(-\xi_j-z)
\;(\text{corresponding unrestricted summand})\;
{\hat{\bf T}}_{N,i}(\xi_i+z)\,{\hat{\bf T}}_{N,j}(\xi_j+w)\,,
\end{equation*}
and (\ref{a131}) follows for the modified operators. The remaining
identities are obtained in the same way.
\end{proof}

\medskip
\noindent\textit{Some useful properties.}
From the local expansion $\phi(z,u)=z^{-1}+O(1)$ one has
\beq\label{fpResR}
\begin{array}{c}
\displaystyle{
\res_{\hbar=0}R^{\hbar}_{ij}(\xi)=\mathrm{Id}\,,
\qquad
\res_{\xi=0}R^{\hbar}_{ij}(\xi)=\hat s_{ij}\,,
}
\end{array}
\eq
and likewise
\beq\label{fpResRt}
\begin{array}{c}
\displaystyle{
\res_{\hbar=0}\tilde{R}^{\hbar}_{ij}(\xi)=\mathrm{Id}\,,
\qquad
\res_{\xi=0}\tilde{R}^{\hbar}_{ij}(\xi)=\hat s_{ij}\hat t_i\hat t_j\,.
}
\end{array}
\eq
The same residues hold for the modified operators (\ref{fp20}), since the
translations ${\hat{\bf T}}_N$ are entire in the spectral parameters and cancel in pairs
at $\hbar=0$ and at $\xi=0$.

The reflection $\hat t$, $(\hat t f)(x)=f(-x)$, preserves $V_N$. On the basis
(\ref{fp17}) it acts by reversing the label,
\beq\label{fp33}
\begin{array}{c}
\displaystyle{
\hat t\,\theta_\al=\theta_{-\al}\,,\qquad
\theta_\al(-x)=\theta_{-\al}(x)\,.
}
\end{array}
\eq
Using ${\hat{\bf T}}_N(-\la)\,\hat t=\hat t\,{\hat{\bf T}}_N(\la)$, the relations
$\tilde{R}^{\hbar}_{ij}(\xi)=\hat t_j R^{\hbar}_{ij}(\xi)\hat t_j$
and
$K_i^{\hbar}(\xi)=-\tilde{K}_i^{\hbar}(\xi)\,\hat t_i$
pass to the modified operators:
\beq\label{fpRtN}
\begin{array}{c}
\displaystyle{
\tilde{R}_{ij}^{\hbar,N}(\xi)=\hat t_j R_{ij}^{\hbar,N}(\xi)\hat t_j\,,
}
\end{array}
\eq
\beq\label{fpKtN}
\begin{array}{c}
\displaystyle{
K_i^{\hbar,N}(\xi)=-\tilde{K}_i^{\hbar,N}(\xi)\,\hat t_i\,.
}
\end{array}
\eq

Since $\hat s_{ij}^2=1$ one also has the Fourier-type identity
$\hat s_{ij} R^{\hbar}_{ij}(\xi)=R^{\xi}_{ij}(\hbar)$, which holds for the modified operators
(\ref{fp20}):
\beq\label{fpFour}
\begin{array}{c}
\displaystyle{
\hat s_{ij} R^{\hbar,N}_{ij}(\xi)=R^{\xi,N}_{ij}(\hbar)\,.
}
\end{array}
\eq
An analogous relation for the boundary operators depends on the parity of $N$.
Using (\ref{a9ntiphi}), i.e.\ $v(-z,-u)=-v(z,u)$, one finds
\[
\hat t_i\,\tilde{K}^{\hbar}_i(\xi|\nu,\nu')
=\tilde{K}^{\xi}_i(\hbar|\nu',\nu)\,.
\]
For even $N$ (when $\nu'=\nu$) the modified operators therefore satisfy
\beq\label{fpFourKeven}
\begin{array}{c}
\displaystyle{
\hat t_i\,\tilde{K}^{\hbar,N}_i(\xi)=\tilde{K}^{\xi,N}_i(\hbar)\,,
}
\end{array}
\eq
while for odd $N$, with $\nu'=(\nu_0,\nu_1,-\nu_2,\nu_3)$ as in (\ref{a163}),
\beq\label{fpFourKodd}
\begin{array}{c}
\displaystyle{
\hat t_i\,\tilde{K}^{\hbar,N}_i(\xi|\nu_0,\nu_1,\nu_2,\nu_3)
=\tilde{K}^{\xi,N}_i(\hbar|\nu_0,\nu_1,-\nu_2,\nu_3)\,,
}
\end{array}
\eq
Now let us write down the explicit form of the modified operators (\ref{fp20})
after the conjugations by ${\hat{\bf T}}_N$. Writing $x_{12}=x_1-x_2$, one has
\beq\label{fp21}
\begin{array}{c}
\begin{array}{l}
\displaystyle{
\big(R_{12}^{\hbar,N}(\xi)f\big)(x_1,x_2)
=\phi\!\Big(\hbar,\,x_{12}+\frac{\xi-\hbar}{N}\Big)
f\!\Big(x_1-\frac{\hbar}{N},\,x_2+\frac{\hbar}{N}\Big)
}
\\[2mm]
\displaystyle{\qquad
-\,\phi\!\Big(-\xi,\,x_{12}+\frac{\xi-\hbar}{N}\Big)
f\!\Big(x_2-\frac{\xi}{N},\,x_1+\frac{\xi}{N}\Big)\,,
}
\end{array}
\end{array}
\eq
\beq\label{fp21b}
\begin{array}{c}
\begin{array}{l}
\displaystyle{
\big(\tilde{R}_{12}^{\hbar,N}(\xi)f\big)(x_1,x_2)
=\phi\!\Big(\hbar,\,x_1+x_2+\frac{\xi-\hbar}{N}\Big)
f\!\Big(x_1-\frac{\hbar}{N},\,x_2-\frac{\hbar}{N}\Big)
}
\\[2mm]
\displaystyle{\qquad
-\,\phi\!\Big(-\xi,\,x_1+x_2+\frac{\xi-\hbar}{N}\Big)
f\!\Big(-x_2-\frac{\xi}{N},\,-x_1-\frac{\xi}{N}\Big)\,,
}
\end{array}
\end{array}
\eq
\beq\label{fp31}
\begin{array}{c}
\displaystyle{
\big(K_i^{\hbar,N}(\xi)f\big)(x)
=
v\!\left(-\xi,\,x+\frac{\xi-\hbar}{N}\Big|\nu\right)
f\!\left(x+\frac{2\xi}{N}\right)
-
v\!\left(\hbar,\,x+\frac{\xi-\hbar}{N}\Big|\nu'\right)
f\!\left(-x+\frac{2\hbar}{N}\right)\,,
}
\end{array}
\eq
\beq\label{fp31b}
\begin{array}{c}
\displaystyle{
\big(\tilde{K}_i^{\hbar,N}(\xi)f\big)(x)
=
v\!\left(\hbar,\,x+\frac{\xi-\hbar}{N}\Big|\nu'\right)
f\!\left(x-\frac{2\hbar}{N}\right)
-
v\!\left(-\xi,\,x+\frac{\xi-\hbar}{N}\Big|\nu\right)
f\!\left(-x-\frac{2\xi}{N}\right)\,.
}
\end{array}
\eq

\paragraph{$\mZ_N$ symmetry.}
As in the construction of Felder and Pasquier \cite{FP}, the modified operator
$R_{12}^{\hbar,N}(\xi)$ on $V_N\otimes V_N$ commutes with the diagonal
action of the Heisenberg generators (\ref{fp13}):
\beq\label{fpZn}
\begin{array}{c}
\displaystyle{
\bigl[R_{12}^{\hbar,N}(\xi),\,A\otimes A\bigr]=0\,,
\qquad
\bigl[R_{12}^{\hbar,N}(\xi),\,B\otimes B\bigr]=0\,.
}
\end{array}
\eq
Thus $R_{12}^{\hbar,N}(\xi)\,(\theta_\be\otimes\theta_\ga)$ is a linear combination of
$\theta_\mu\otimes\theta_\nu$ with $\mu+\nu=\be+\ga$, and the matrix elements depend
only on the differences of the indices. In the matrix realisation of the next
subsection this gives the $\mZ_N$ symmetry
of the Baxter--Belavin $R$-matrix.

\subsection{The Baxter--Belavin \texorpdfstring{$R$}{R}-matrix and the corresponding \texorpdfstring{${\ti R}$}{R}-matrix}

The goal of this subsection is to identify the modified operator $R_{12}^{\hbar,N}$ with the Baxter--Belavin $R$-matrix and to obtain the corresponding $\tilde{R}$.

\paragraph{Matrix basis.}
For the finite-dimensional matrix realisation we use the matrices
in ${\rm End}(\mC^N)$:
\beq\label{el10}\begin{array}{c} \displaystyle{
        T_\al = \exp \left( -\al_1 \al_2 \frac{\pi \imath}{N} \right) Q^{\al_1} \Lambda^{\al_2}, \quad \al = (\al_1, \al_2)\in \mZ_N \times \mZ_N\,,
}\end{array}\eq
where $Q,\Lambda\in{\rm End}( \mC^N)$, with entries\footnote{Notice that in our previous papers we used a slightly different definition.
Previously, we had $ Q_{kl}=\delta_{kl}\exp(\frac{2\pi
 \imath}{{ N}}k)$ instead of (\ref{el11}). The present convention is chosen to match the Felder--Pasquier procedure.}
\beq\label{el11}
 \begin{array}{c}
  \displaystyle{
 Q_{kl}=\delta_{kl}\exp(-\frac{2\pi
 \imath}{{ N}}k)\,,\ \ \
 \Lambda_{kl}=\delta_{k-l+1=0\,{\hbox{\tiny{mod}}}\,
 { N}}\,,\quad Q^{ N}=\Lambda^{ N}=1_N
 }
 \end{array}
 \eq
 generate a finite-dimensional representation of the Heisenberg group:
 \beq\label{el12}
 \begin{array}{c}
  \displaystyle{
 \Lambda^{\alpha_2} Q^{\alpha_1}=\exp\left(-\frac{2\pi\imath}{{ N}}\,\alpha_1
 \alpha_2\right)Q^{\alpha_1} \Lambda^{\alpha_2}\,,\ \alpha_{1},\,\alpha_2\in\mZ\,.
 }
 \end{array}
 \eq
 In particular, $T_0=T_{(0,0)}=1_N$ is the $N\times N$ identity matrix.
 For the product of the basis matrices we have:
\beq\label{Tcond}\begin{array}{c} \displaystyle{
    T_\al T_{\be} = \ka_{\al, \be} T_{\al + \be},
    \quad \ka_{\al, \be} = \exp \left( \frac{\pi i}{N}(\al_1 \be_2 - \al_2 \be_1) \right)\,,
}\end{array}\eq
where $\al+\be=(\al_1+\be_1,\al_2+\be_2)$. In particular, (\ref{Tcond}) means that $T_{-\al}=T_\al^{-1}$.
This basis has the property $\tr(T_\al T_{-\be})=N\delta_{\al,\be}$.

\begin{rem}
Identifying $V_N\simeq\mC^N$ via the ordered basis
$\{\theta_1,\dots,\theta_{N-1},\theta_0\}$, and comparing with
(\ref{el11}), one obtains
\beq\label{fp15}
\begin{array}{c}
\displaystyle{
A\;\longleftrightarrow\;Q\,,\qquad
B\;\longleftrightarrow\;\Lambda\,,
}
\end{array}
\eq
so that the Baxter--Belavin matrices $T_\al$ of (\ref{el10}) act on $V_N$ by the
corresponding Heisenberg operators. (Here $B\theta_\al=\theta_{\al-1}$ matches
$\Lambda e_j=e_{j-1}$, while $A\leftrightarrow Q$ matches
$A\theta_\al=e^{-2\pi\imath\al/N}\theta_\al$, with $Q$ as in (\ref{el11}).
The relation $AB=e^{2\pi\imath/N}BA$ is compatible with
$\Lambda Q=e^{-2\pi\imath/N}Q\Lambda$ from (\ref{el12}).)
\end{rem}

\begin{rem}
Instead, Felder and Pasquier \cite{FP} diagonalise $B$, using
\beq\label{fp11}
\begin{array}{c}
\displaystyle{
\theta^\mathrm{FP}_\al(x)=\sum_{n\in\mZ}
\exp\!\Big(
\pi\imath\frac{n^2\tau}{N}+2\pi\imath n\Big(x-\frac{\al}{N}\Big)
\Big)\,,
\qquad \al\in\mZ_N\,,
}
\end{array}
\eq
for which $A\,\theta^\mathrm{FP}_\al=\theta^\mathrm{FP}_{\al-1}$ and
$B\,\theta^\mathrm{FP}_\al=e^{2\pi\imath\al/N}\,\theta^\mathrm{FP}_\al$, and the
identification is $A\leftrightarrow\Lambda$, $B\leftrightarrow Q^{-1}$. The two bases are
related by the discrete Fourier transform on $\mZ_N$ (up to normalisation),
which interchanges the roles of $Q$ and $\Lambda$.
\end{rem}

\paragraph{Permutation operator.}
On functions of the auxiliary variables the permutation operators $\hat s_{ij}$ from
(\ref{a111}) simply exchange the arguments. In particular, on pure tensors of basis
theta functions (\ref{fp17}) one has
\beq\label{fpShat}
\begin{array}{c}
\displaystyle{
\hat s_{12}\,(\theta_\be\otimes\theta_\ga)=\theta_\ga\otimes\theta_\be\,.
}
\end{array}
\eq
In the matrix realisation $V_N\simeq\mC^N$ we associate to $\hat s_{12}$ the
permutation operator $P_{12}\in{\rm End}(\mC^N\otimes\mC^N)$ acting by
$P_{12}(a\otimes b)=(b\otimes a)$ for any $a,b\in\mC^N$.
Also, for any $A,B\in{\rm End}(\mC^N)$ one has
$P_{12}(A\otimes B)=(B\otimes A)P_{12}$. The explicit form of $P_{12}$ in the
matrix basis is
\begin{equation}\label{P12}
\begin{array}{c}
    \displaystyle{
    P_{12}=\sum\limits_{k,l=1}^N e_{kl}\otimes e_{lk}=\frac{1}{N}\sum\limits_{\al\in\,\mZ_N\times\mZ_N}T_\al\otimes T_{-\al}\,,
    }
\end{array}
\end{equation}
where $\{e_{kl}\}$ is the standard matrix basis in ${\rm End}( \mC^N)$. In particular, $P_{12}^2=1_{N^2}$.

\paragraph{The Baxter--Belavin elliptic $R$-matrix.}
Introduce the following meromorphic function of two complex variables $\xi$ and $\hbar$ taking values in ${\rm End}(\mC^N\otimes \mC^N)$:
\begin{equation}\label{BB}
\begin{array}{c}
    \displaystyle{
    R^{\hbar}_{12} (\xi) = \frac{1}{N}
  \sum_{\al \in \mZ_N \times \mZ_N} \varphi_\al (\xi, \frac{\hbar}{N} + \om_\al) T_\al \otimes T_{-\al}}\,\in{\rm End}(\mC^N\otimes \mC^N)\,,
\end{array}
\end{equation}
where
 \beq\label{w2022}
 \begin{array}{c}
  \displaystyle{
 \vf_a(\xi,\om_a+\hbar)=\exp (2\pi\imath\,\frac{a_2\xi}{N})\,\phi(\xi,\om_a+\hbar)=
 \exp (2\pi\imath \xi\p_\tau\om_a)\,\phi(\xi,\om_a+\hbar)\,,
 \quad a\in\,\mZ_{ N}\times\mZ_{ N}

 }
 \end{array}
 \eq
  and
 \beq\label{w2023}
 \begin{array}{c}
  \displaystyle{
 \om_a=\frac{a_1+a_2\tau}{N}\,,
 \quad a\in\,\mZ_{ N}\times\mZ_{ N}\,.
 }
 \end{array}
 \eq
\begin{proposition}[\cite{FP}]
\label{propFP}
In the basis (\ref{fp17}) of $V_N$ the matrix of
$R_{12}^{\hbar,N}$ coincides with the Baxter--Belavin
elliptic $R$-matrix (\ref{BB}).
For $N=2$ one recovers Baxter's $R$-matrix of the eight-vertex model.
\end{proposition}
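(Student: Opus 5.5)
The plan is to avoid computing all $N^4$ matrix elements directly. Instead I would combine the Heisenberg symmetry (\ref{fpZn}) with an analytic characterisation of the coefficient functions in $\xi$ and $\hbar$.

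\textbf{Step 1: reduce to $N^2$ scalar functions.} Under the identification (\ref{fp15}), $A\otimes A$ and $B\otimes B$ correspond to $Q\otimes Q$ and $\Lambda\otimes\Lambda$. By (\ref{Tcond}), $T_\al\otimes T_\be$ commutes with $T_\ga\otimes T_\ga$ for all $\ga$ if and only if $\ka_{\al,\ga}\ka_{\be,\ga}=1$ for all $\ga$, i.e. $\be=-\al$. The elements $T_\al\otimes T_\be$ form a basis of ${\rm End}(\mC^N\otimes\mC^N)$, and conjugation by $T_\ga\otimes T_\ga$ acts diagonally on this basis. Hence the commutant of the diagonal Heisenberg action is exactly ${\rm span}\{T_\al\otimes T_{-\al}\}$. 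By (\ref{fpZn}) we may therefore write
\[
R_{12}^{\hbar,N}(\xi)=\frac1N\sum_{\al\in\mZ_N\times\mZ_N}c_\al(\xi,\hbar)\,T_\al\otimes T_{-\al},
\qquad
c_\al(\xi,\hbar)=\tr_{12}\bigl(R_{12}^{\hbar,N}(\xi)\,T_{-\al}\otimes T_\al\bigr)\,,
\]
using $\tr(T_\al T_{-\be})=N\delta_{\al\be}$. It remains to show $c_\al(\xi,\hbar)=\vf_\al(\xi,\frac{\hbar}{N}+\om_\al)$.

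\textbf{Step 2: analytic properties of $c_\al$ in $\xi$.} From the explicit form (\ref{fp21}), the matrix elements are meromorphic in $\xi$, and the only poles come from $\phi(-\xi,\cdot)$ at $\xi\in\Lambda$. At $\xi=0$, (\ref{fpResR}) gives residue $P_{12}$, which by (\ref{P12}) means ${\rm Res}_{\xi=0}c_\al=1$ for every $\al$; the same is true of $\vf_\al(\xi,\cdot)$. Next I would derive the quasi-periodicity from (\ref{fp21}). Shifting $\xi\mapsto\xi+1$ shifts the arguments of $f$ by $\mp\frac1N$, i.e. inserts $A^{-1}$ and $A$ in the two tensor factors of the second summand. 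Shifting $\xi\mapsto\xi+\tau$ inserts $B$-type operators, with the exponential prefactor of (\ref{fp13}) absorbed by the quasi-periodicity of $\phi$ in its second argument and of $f\in V_N\otimes V_N$. Conjugating $T_\al\otimes T_{-\al}$ by such factors produces scalars $\ka$, and this should yield
\[
c_\al(\xi+1,\hbar)=e^{2\pi\imath\al_2/N}c_\al(\xi,\hbar),\qquad
c_\al(\xi+\tau,\hbar)=e^{-2\pi\imath(\hbar/N+\om_\al)}e^{2\pi\imath\al_2\tau/N}c_\al(\xi,\hbar).
\]
These are exactly the multipliers of $\vf_\al(\xi,\frac\hbar N+\om_\al)$. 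A meromorphic function with these multipliers and only simple poles on $\Lambda$ is unique up to its residue at $0$, since the ratio of two such functions is elliptic with at most one simple pole, hence constant. This identifies $c_\al$ with $\vf_\al$.

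\textbf{Step 3: cross-checks and $N=2$.} As a consistency check I would use the Fourier symmetry (\ref{fpFour}), $P_{12}R^{\hbar,N}_{12}(\xi)=R^{\xi,N}_{12}(\hbar)$, together with ${\rm Res}_{\hbar=0}R=\mathrm{Id}$. Both sides of the resulting identity are controlled by the discrete Fourier identities (\ref{a911m}), (\ref{a911t}) for $\vf_\al$, which express $\frac1N\sum_\al\vf_\al(\xi,\frac\hbar N+\om_\al)T_\al\otimes T_{-\al}$ multiplied by $P_{12}$ as the same sum with $\xi$ and $\hbar$ interchanged. For $N=2$, with $T_\al$ proportional to Pauli matrices, the four functions $\vf_\al$ are the Baxter eight-vertex weights.

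\textbf{Main obstacle.} The hardest part is Step 2: tracking the precise phases when $\xi\mapsto\xi+\tau$ through the shifted arguments $x_{12}+\frac{\xi-\hbar}N$ and $f(x_2-\frac\xi N,x_1+\frac\xi N)$. In particular I must check that the non-$\hat s$ summand (which does not depend on $\xi$ through $\phi(\hbar,\cdot)$, only through the shift) transforms compatibly. I would first verify the multipliers on the matrix element $\langle\theta_\mu\otimes\theta_\nu|R|\theta_\be\otimes\theta_\ga\rangle$ using (\ref{fp14}) and the conventions of (\ref{el11}). If this bookkeeping proves too delicate, I would fall back on direct evaluation: expand $\phi(\hbar,x+c)\theta_\be\theta_\ga$ via (\ref{a911m}) and read off coefficients.
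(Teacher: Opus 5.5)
Your proof is correct, and it takes a genuinely different route from the paper's.

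\textbf{What the paper does.} It computes both sides explicitly in the basis (\ref{fp17}). The Baxter--Belavin side is obtained through the Fourier identity (\ref{a911m}). On the functional side, the two Kronecker factors of (\ref{fp21}) are split with (\ref{a911t}) into $A\otimes\mathrm{Id}$-eigencomponents. Each component is evaluated at a point where the $\phi(\hbar,\cdot)$ term vanishes, using the quasi-periodicity of $\theta_\be$. The closed-form entries (\ref{fpRme2}) are then matched with $C^{\mathrm{BB}}_{\al_2}$.

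\textbf{What you do instead.} You use (\ref{fpZn}) only to reduce to the $N^2$ scalar functions $c_\al$. You then pin each one down by its multipliers in $\xi$ and its residue at $\xi=0$, via a Liouville argument. This avoids all theta-function identities and produces (\ref{el32}) and (\ref{el34}) along the way; the paper's computation, in exchange, gives the explicit matrix entries.

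\textbf{Points you need to add.} You need $\hbar/N+\om_\al\notin\mZ+\tau\mZ$, and then continue analytically in $\hbar$. You also need that the matrix entries are holomorphic for $\xi\notin\mZ+\tau\mZ$. This holds because the apparent poles at $\rho:=x_{12}+(\xi-\hbar)/N\in\mZ+\tau\mZ$ cancel, as in Lemma~\ref{propMap}.

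\textbf{Your ``main obstacle''.} It is easier than you expect, but the mechanism is not the one you describe. Under $\xi\mapsto\xi+1$, $\rho$ also moves by $1/N$ in \emph{both} summands, so the shift does not simply insert $A^{\mp1}$ into the second summand. A direct check from (\ref{fp21}) and (\ref{fp13}) gives $R^{\hbar,N}_{12}(\xi+1)=(A\otimes 1)R^{\hbar,N}_{12}(\xi)(A^{-1}\otimes 1)$. It likewise gives $R^{\hbar,N}_{12}(\xi+\tau)=e^{-2\pi\imath\hbar/N}(B\otimes 1)R^{\hbar,N}_{12}(\xi)(B^{-1}\otimes 1)$. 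In the second identity, the factor $e^{2\pi\imath w}$ in $\phi(-\xi-\tau,w)=e^{2\pi\imath w}\phi(-\xi,w)$ comes from the exponential prefactors of $B^{\pm1}$. Now use $Q=T_{(1,0)}$, $\Lambda=T_{(0,1)}$ and $T_\ga T_\al T_\ga^{-1}=\ka_{\ga,\al}^2T_\al$. This yields exactly your multipliers $e^{2\pi\imath\al_2/N}$ and $e^{-2\pi\imath(\hbar+\al_1)/N}$.

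\textbf{Minor slips.} The commutant condition should read $\ka_{\ga,\al}^2\ka_{\ga,\be}^2=1$. Also $\tr_{12}\bigl(R\,T_{-\al}\otimes T_\al\bigr)=N c_\al$, so your formula for $c_\al$ is missing a factor $1/N$.
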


\begin{proof}
Identify $V_N\simeq\mC^N$ via the ordered basis
$\{\theta_1,\dots,\theta_{N-1},\theta_0\}$ of (\ref{fp17}), with
$A\leftrightarrow Q$ and $B\leftrightarrow\Lambda$ as in (\ref{fp15}).
We compute the matrix of (\ref{BB}) in this basis and then that of the
functional operator (\ref{fp21}), and compare the two.

From (\ref{fp14}) and (\ref{el10}), for $\al=(\al_1,\al_2)$ and
$\be\in\mZ_N$,
\beq\label{fpTact}
\begin{array}{c}
\displaystyle{
T_\al\,\theta_\be
=\exp\!\Big(-\frac{\pi\imath}{N}\al_1(2\be-\al_2)\Big)\,
\theta_{\be-\al_2}\,.
}
\end{array}
\eq
Hence on pure tensors
\beq\label{fpTTact}
\begin{array}{c}
\begin{array}{l}
\displaystyle{
\bigl(T_\al\otimes T_{-\al}\bigr)(\theta_\be\otimes\theta_\ga)
=
\exp\!\Big(-\frac{2\pi\imath}{N}\al_1(\be-\ga-\al_2)\Big)
\theta_{\be-\al_2}\otimes\theta_{\ga+\al_2}\,.
}
\end{array}
\end{array}
\eq
Consequently the Baxter--Belavin operator (\ref{BB}) acts by
\beq\label{fpBBact}
\begin{array}{c}
\displaystyle{
R_{12}^{\hbar}(\xi)\,(\theta_\be\otimes\theta_\ga)
=\frac1N\sum_{\al\in\mZ_N\times\mZ_N}
\exp\!\Big(-\frac{2\pi\imath}{N}\al_1(\be-\ga-\al_2)\Big)
\vf_{\al}\!\Big(\xi,\,\frac{\hbar}{N}+\om_{\al}\Big)
\theta_{\be-\al_2}\otimes\theta_{\ga+\al_2}\,.
}
\end{array}
\eq
The coefficient of $\theta_{\be-\al_2}\otimes\theta_{\ga+\al_2}$ is
\beq\label{fpBBcoeff}
\begin{array}{c}
\displaystyle{
C^{\mathrm{BB}}_{\al_2}(\be,\ga;\xi,\hbar)
:=
\frac1N\sum_{\al_1\in\mZ_N}
\exp\!\Big(-\frac{2\pi\imath}{N}\al_1(\be-\ga-\al_2)\Big)
\vf_{(\al_1,\al_2)}\!\Big(\xi,\,\frac{\hbar}{N}+\om_{(\al_1,\al_2)}\Big)
}
\\[2mm]
\displaystyle{\qquad
=
\frac1N\,e^{\frac{2\pi\imath}{N}\al_2\xi}
\sum_{\al_1\in\mZ_N}
\exp\!\Big(-\frac{2\pi\imath}{N}\al_1(\be-\ga-\al_2)\Big)
\phi\!\Big(\xi,\,\frac{\hbar+\al_2\tau}{N}+\frac{\al_1}{N}\Big)\,.
}
\end{array}
\eq
Using $\phi(\xi,u)=\phi(u,\xi)$ and (\ref{a911m}) with
$m=-\be+\ga+\al_2$ one obtains
\beq\label{fpBBsimp}
\begin{array}{c}
\displaystyle{
C^{\mathrm{BB}}_{\al_2}(\be,\ga;\xi,\hbar)
=
}
\\ \ \\
\displaystyle{
=
e^{\frac{2\pi\imath}{N}\al_2\xi}\,
\exp\!\Big(\frac{2\pi\imath}{N}(\be-\ga-\al_2)(\hbar+\al_2\tau)\Big)\,
\phi\!\Big(\hbar+\al_2\tau,\,\xi+(\be-\ga-\al_2)\tau;\,N\tau\Big)\,,
}
\\ \ \\
\displaystyle{
R_{12}^{\hbar}(\xi)\,(\theta_\be\otimes\theta_\ga)
=\sum_{\al_2\in\mZ_N}
C^{\mathrm{BB}}_{\al_2}(\be,\ga;\xi,\hbar)\,
\theta_{\be-\al_2}\otimes\theta_{\ga+\al_2}\,.
}
\end{array}
\eq

Now turn to the functional side. Set $\rho:=x_{12}+(\xi-\hbar)/N$ and
substitute $f=\theta_\be\otimes\theta_\ga$ into (\ref{fp21}):
\beq\label{fpRth}
\begin{array}{c}
\begin{array}{l}
\displaystyle{
\big(R_{12}^{\hbar,N}(\xi)\,(\theta_\be\otimes\theta_\ga)\big)(x_1,x_2)
=
\phi\!\Big(\hbar,\,\rho\Big)
\theta_\be\!\Big(x_1-\frac{\hbar}{N}\Big)
\theta_\ga\!\Big(x_2+\frac{\hbar}{N}\Big)
}
\\[2mm]
\displaystyle{\qquad
-\,
\phi\!\Big(-\xi,\,\rho\Big)
\theta_\be\!\Big(x_2-\frac{\xi}{N}\Big)
\theta_\ga\!\Big(x_1+\frac{\xi}{N}\Big)\,.
}
\end{array}
\end{array}
\eq
The right-hand side lies in $V_N\otimes V_N$ by
Proposition~\ref{propMod}. By the $\mZ_N$ symmetry (\ref{fpZn}),
the operator (\ref{fp21}) commutes with $A\otimes A$ and $B\otimes B$,
so the expansion in the basis (\ref{fp17}) reduces to
\beq\label{fpRmat}
\begin{array}{c}
\displaystyle{
R_{12}^{\hbar,N}(\xi)\,(\theta_\be\otimes\theta_\ga)
=\sum_{\mu\in\mZ_N}
R^{\be\ga}_{\mu,\,\be+\ga-\mu}(\xi,\hbar)\,
\theta_\mu\otimes\theta_{\be+\ga-\mu}\,.
}
\end{array}
\eq
Equivalently, the only possibly nonzero matrix elements are those of the form
$R^{\be\ga}_{\be-\al_2,\,\ga+\al_2}$.
To extract them, apply (\ref{a911t}) with $m=0$
(after $\tau\mapsto N\tau$ and $\phi(z,u)=\phi(u,z)$) to the two Kronecker
factors in (\ref{fpRth}). This yields
\beq\label{fpRsum}
\begin{array}{c}
\begin{array}{l}
\displaystyle{
R_{12}^{\hbar,N}(\xi)\,(\theta_\be\otimes\theta_\ga)
=\sum_{\mu\in\mZ_N}
\Bigg(
e^{2\pi\imath(\be-\mu)\rho}\,
\phi\!\Big(\hbar+(\be-\mu)\tau,\,N\rho;\,N\tau\Big)\,
\theta_\be\!\Big(x_1-\frac{\hbar}{N}\Big)
\theta_\ga\!\Big(x_2+\frac{\hbar}{N}\Big)
}
\\[2mm]
\displaystyle{\qquad
-\,
e^{2\pi\imath(\ga-\mu)\rho}\,
\phi\!\Big(-\xi+(\ga-\mu)\tau,\,N\rho;\,N\tau\Big)\,
\theta_\be\!\Big(x_2-\frac{\xi}{N}\Big)
\theta_\ga\!\Big(x_1+\frac{\xi}{N}\Big)
\Bigg)\,.
}
\end{array}
\end{array}
\eq
Each summand in (\ref{fpRsum}) is an eigenvector of $A\otimes\mathrm{Id}$ with
eigenvalue $e^{-2\pi\imath\mu/N}$ (since $A\theta_\al=e^{-2\pi\imath\al/N}\theta_\al$),
and is therefore proportional to $\theta_\mu\otimes\theta_{\be+\ga-\mu}$.
To find the proportionality factor it is enough to evaluate the summand at
any convenient point where it does not vanish. Choose
\beq\label{fpPtspec}
\begin{array}{c}
\displaystyle{
N\rho=-\hbar+(\mu-\be)\tau\,,
}
\end{array}
\eq
so that the first term vanishes (because $\phi(z,-z)=0$).
On this locus
$x_2-\xi/N=x_1-(\mu-\be)\tau/N$ and
$x_1+\xi/N=x_2+(\mu-\be)\tau/N$, and using (\ref{fp10}) one has
\beq\label{fpThid}
\begin{array}{c}
\displaystyle{
\theta_\be\!\Big(x_2-\frac{\xi}{N}\Big)
\theta_\ga\!\Big(x_1+\frac{\xi}{N}\Big)
=
\exp\!\Big(
-\frac{2\pi\imath}{N}(\mu-\be)\xi
\Big)
\theta_\mu(x_1)\,
\theta_{\be+\ga-\mu}(x_2)\,.
}
\end{array}
\eq
Only the second term survives. Substituting (\ref{fpPtspec}) gives
\beq\label{fpRme}
\begin{array}{c}
\begin{array}{l}
\displaystyle{
R^{\be\ga}_{\mu,\,\be+\ga-\mu}(\xi,\hbar)
=
\exp\!\Big(
-\frac{2\pi\imath}{N}(\ga-\mu)\hbar
+\frac{2\pi\imath}{N}(\ga-\mu)(\mu-\be)\tau
-\frac{2\pi\imath}{N}(\mu-\be)\xi
\Big)
}
\\[2mm]
\displaystyle{\qquad
\times\,
\phi\!\Big(
\hbar-(\mu-\be)\tau,\,
\xi-(\ga-\mu)\tau;\,
N\tau
\Big)\,.
}
\end{array}
\end{array}
\eq
For $\mu=\be-\al_2$ this becomes
\beq\label{fpRme2}
\begin{array}{c}
\begin{array}{l}
\displaystyle{
R^{\be\ga}_{\be-\al_2,\,\ga+\al_2}(\xi,\hbar)
=
\exp\!\Big(
\frac{2\pi\imath}{N}\al_2\xi
+\frac{2\pi\imath}{N}(\be-\ga-\al_2)(\hbar+\al_2\tau)
\Big)
}
\\[2mm]
\displaystyle{\qquad
\times\,
\phi\!\Big(
\hbar+\al_2\tau,\,
\xi+(\be-\ga-\al_2)\tau;\,
N\tau
\Big)\,.
}
\end{array}
\end{array}
\eq
Comparing (\ref{fpRme2}) with (\ref{fpBBsimp}) one finds
$R^{\be\ga}_{\be-\al_2,\,\ga+\al_2}=C^{\mathrm{BB}}_{\al_2}$,
hence $R_{12}^{\hbar,N}(\xi)=R_{12}^{\hbar}(\xi)$ in the basis (\ref{fp17}).
\end{proof}

The expression (\ref{BB}) is the elliptic $\mZ_N$-symmetric Baxter--Belavin $R$-matrix.
Under the identification $A\leftrightarrow Q$, $B\leftrightarrow\Lambda$ the relations
(\ref{fpZn}) become
\begin{equation}\label{R}
\begin{array}{c}
    \displaystyle{
    (Q\otimes Q) R^{\hbar}_{12} (z) = R^{\hbar}_{12} (z) (Q\otimes Q) \,,\qquad
    (\Lambda\otimes \Lambda) R^{\hbar}_{12} (z) = R^{\hbar}_{12} (z) (\Lambda\otimes \Lambda) \,.
    }
\end{array}
\end{equation}

\paragraph{Important properties.}
The $R$-matrix (\ref{BB}) satisfies the Fourier symmetry
(which follows from (\ref{fpFour}) under the identification $\hat s_{12}\leftrightarrow P_{12}$):
\begin{equation}\label{el30}\begin{array}{c}
  P_{12} R^{\hbar}_{12}(\xi) = R^{\xi}_{12}(\hbar)\,,
\end{array}\end{equation}
 The $R$-matrix has simple poles in both $\hbar$ and $\xi$
(as follows from (\ref{fpResR}) under $\hat s_{12}\leftrightarrow P_{12}$):
 \beq\label{el32}
 \begin{array}{c}
  \displaystyle{
\res\limits_{\xi=0}R^{\hbar}_{12}(\xi)=
P_{12}\,,
  }
 \end{array}
 \eq
 \beq\label{el33}
 \begin{array}{c}
  \displaystyle{
\res\limits_{\hbar=0} R^{\hbar}_{12}(\xi)=1_N\otimes 1_N\,.
}
 \end{array}
 \eq
 The quasi-periodic behaviour on the lattice of periods is as follows:
 \beq\label{el34}
 \begin{array}{c}
  \displaystyle{
 R_{12}^\hbar(\xi+1)=(Q\otimes 1_N)R_{12}^\hbar(\xi)(Q^{-1}\otimes 1_N)\,,
  }
 \\ \ \\
  \displaystyle{
 R_{12}^\hbar(\xi+\tau)=\exp(-2\pi\imath\frac{\hbar}{N})\,(\Lambda\otimes 1_N)R_{12}^\hbar(\xi)(\Lambda^{-1}\otimes
 1_N)\,.
  }
 \end{array}
 \eq

The reflection $\hat t$ from (\ref{fp33}) acts on the basis theta functions by
$\hat t\,\theta_\al=\theta_{-\al}$.
In the matrix realisation $V_N\simeq\mC^N$ we associate to $\hat t$ the
matrix $J\in{\rm End}(\mC^N)$,
\beq\label{el21}
J=e_{1,N-1}+e_{2,N-2}+\dots+e_{N-1,1}+e_{N,N}=\frac{1}{2N}\sum\limits_{a\in\,\mZ_{2N}\times\mZ_{2N}} T_a\,,
\eq
with $J^2=1_N$. Conjugation by $J$ realises the involutive automorphism
$\mathrm{j}$ of ${\rm End}(\mC^N)$,
\begin{equation}\label{jAut}
\mathrm{j}(A)= J\, A\, J^{-1}\,,
\qquad
\mathrm{j}(T_\alpha)=T_{-\alpha}\,.
\end{equation}
Denote by $\tilde{R}_{12}^\hbar(\xi)$ the operator $(\mathrm{id}\otimes\mathrm{j}) R_{12}^\hbar(\xi)$:
  \begin{equation}\label{Rtilde}
\begin{array}{c}
    \displaystyle{
    \tilde{R}^{\hbar}_{12} (\xi):=J_2\,     R^{\hbar}_{12} (\xi)\,J_2 = \frac{1}{N}
    \sum_{\al \in \mZ_N \times \mZ_N} \varphi_\al (\xi, \frac{\hbar}{N} + \om_\al)\,
     T_\al \otimes T_{\al}}\,\in{\rm End}(\mC^N\otimes \mC^N )\,.
\end{array}
\end{equation}
This is the matrix counterpart of (\ref{fpRtN}), obtained under the
identification $\hat t\leftrightarrow J$.
Consequently, (\ref{Rtilde}) is the matrix form of the modified operator
(\ref{fp21b}) in the basis (\ref{fp17}).

Since $(\rm{id}\otimes \mathrm{j})$ commutes with $Q^{\pm 1}\otimes 1_N$ and
$\Lambda^{\pm 1}\otimes 1_N$, the operator $\tilde{R}_{12}^{\hbar}(\xi)$ has the
same quasi-periodic behaviour as (\ref{el34}). Its simple poles have residues
(as follows from (\ref{fpResRt}) under $\hat s_{12}\leftrightarrow P_{12}$ and
$\hat t\leftrightarrow J$):
 \beq\label{el35}
 \begin{array}{c}
  \displaystyle{
\res\limits_{\xi=0}\tilde{R}^{\hbar}_{12}(\xi)=
P_{12} J_1 J_2\,,
  }
 \end{array}
 \eq
 \beq\label{el36}
 \begin{array}{c}
  \displaystyle{
\res\limits_{\hbar=0} \tilde{R}^{\hbar}_{12}(\xi)=1_N\otimes 1_N\,.
}
 \end{array}
 \eq

\subsection{The elliptic boundary \texorpdfstring{$K$}{K}-matrix}\label{sec25K}
We now give the matrix formula for $\tilde{K}$ and outline its identification with the modified boundary operator.
Introduce a meromorphic function $\tilde{K}^{\hbar}(\xi)$ of two variables $\xi$ and $\hbar$, taking values in ${\rm End}(\mC^N)$:
   \begin{equation}\label{Ktilde}
    \tilde{K}^{\hbar}(\xi) = \frac{1}{2N}
    \sum_{\be \in \mZ_{2N} \times \mZ_{2N}} \exp(\frac{2 \pi \imath \be_2 \xi}{N}) v (\xi, \frac{\hbar}{N} + \frac{\om_\be}{2}) T_\be\in {\rm End}(\mC^N)\,.
\end{equation}
As in \cite{KH2}, the linearity of $v$ in the couplings (\ref{q413}) gives
\beq\label{KtildeSum}
\begin{array}{c}
\displaystyle{
\tilde{K}^{\hbar}(\xi)
=\sum_{a=0}^3\nu_a\,\tilde{K}_a^{\hbar}(\xi)\,,
\qquad
\tilde{K}_a^{\hbar}(\xi)
=\frac{1}{2N}
\sum_{\be\in\mZ_{2N}\times\mZ_{2N}}
\exp\!\Big(\frac{2\pi\imath\,\be_2 \xi}{N}\Big)
\vf_a\!\Big(2\xi,\,\frac{\hbar}{N}+\frac{\om_\be}{2}+\Om_a\Big)
T_\be\,.
}
\end{array}
\eq

\begin{proposition}
\label{propFPK}
In the basis (\ref{fp17}) of $V_N$ (with $A\leftrightarrow Q$,
$B\leftrightarrow\Lambda$), the matrix (\ref{Ktilde}) coincides with the
modified operator $\tilde{K}_i^{\hbar,N}$
(depending on four arbitrary parameters $\nu_a$), with the couplings fixed by
(\ref{a163}).
\end{proposition}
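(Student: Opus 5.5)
We follow the same route as in the proof of Proposition~\ref{propFP}: compute the matrix of (\ref{Ktilde}) on the basis $\theta_\ga$ of (\ref{fp17}), compute the functional action (\ref{fp31b}) on the same basis, and compare coefficient by coefficient. By the linearity (\ref{KtildeSum}) in the couplings it suffices to treat each summand $\tilde K_a^\hbar$ separately. On the functional side the coupling $\nu_a$ in front of $v(-\xi,\cdot|\nu)$ and the coupling $\nu'_a$ in front of $v(\hbar,\cdot|\nu')$ then enter separately, so the parity rule (\ref{a163}) has to be tracked term by term.

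\emph{Matrix side.} For $\be\in\mZ_{2N}\times\mZ_{2N}$ we extend (\ref{fpTact}) to $T_\be$ (defined by (\ref{el10}) with half-integer phases) and obtain $T_\be\theta_\ga=\exp(-\frac{\pi\imath}{N}\be_1(2\ga-\be_2))\theta_{\ga-\be_2}$. Summing over $\be_1\in\mZ_{2N}$ produces a discrete Fourier sum of $\phi(2\xi,\frac{\hbar}{N}+\frac{\be_1+\be_2\tau}{2N}+\Om_a)$. This is the analogue of (\ref{fpBBcoeff}) with $N$ replaced by $2N$ and $\xi$ by $2\xi$. Applying (\ref{a911m}) with $2N$ in place of $N$ collapses it to a single Kronecker function with modulus $2N\tau$, or equivalently $N\tau$ after the rescaling built into $\om_\be/2$. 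The result is an explicit coefficient $C_{\be_2}(\ga)$ multiplying $\theta_{\ga-\be_2}$, where $\be_2\in\mZ_{2N}$ is reduced mod $N$. Two values of $\be_2$ therefore contribute to each output index. We expect these two contributions to correspond exactly to the two terms (shift and reflection) of the functional operator.

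\emph{Functional side.} We substitute $f=\theta_\ga$ into (\ref{fp31b}), setting $\rho=x+(\xi-\hbar)/N$. We expand each $v$-coefficient $\vf_a(\cdot,\rho+\Om_a)$ via (\ref{a911t}) (with $\tau\mapsto N\tau$) into a sum over $\mu\in\mZ_N$ of terms that are eigenvectors of $A$. Since $\tilde K$ does not commute with $A$, we do not use a $\mZ_N$-symmetry argument. Instead, each summand is an $A$-eigenvector with eigenvalue $e^{-2\pi\imath\mu/N}$, hence proportional to $\theta_\mu$. Because the image lies in $V_N$ by Proposition~\ref{propMod} (this is where (\ref{a163}) enters), we identify the coefficients by evaluating at special points. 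At the point $N\rho=-\hbar+(\text{lattice shift})$ the first term vanishes, and the reflected term $\theta_\ga(-x-2\xi/N)=\theta_{-\ga}(x+2\xi/N)$ is rewritten using (\ref{fp10}) and (\ref{fp33}) as a multiple of $\theta_\mu(x)$. At the point where $\phi(-\xi,\cdot)$ vanishes, the shift term alone survives.

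\emph{Main obstacle.} The hardest step is to match the half-period shifts $\Om_a$ and their exponential prefactors $\exp(4\pi\imath z\p_\tau\Om_a)$ with the half-integer labels $\be\in\mZ_{2N}^2$. In particular we must check that the contributions with $\be_2$ odd (respectively $\be_1$ odd) reproduce the $\Om_3,\Om_2$ (respectively $\Om_1$) components. The sign $(-1)^N$ in front of $\theta_\ga$ under $x\mapsto x+\Om_2$ is exactly what forces $\nu'_2=-\nu_2$ for odd $N$. We would first check this matching for $N=1$ and $N=2$ directly, then carry out the general bookkeeping using the relation $\theta_{\ga}(x+\tau/2)$ versus $\theta_{\ga+N/2}$-type identities.
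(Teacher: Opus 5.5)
Your overall route (linearity in $\nu_a$, (\ref{fpTact}) with (\ref{a911m}) over $\mZ_{2N}$ on the matrix side, (\ref{a911t}) on the functional side, then matching at a zero of one Kronecker factor) is the one the paper outlines. However, the step that is supposed to produce the functional-side coefficients does not work as you describe it.

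In Proposition~\ref{propFP} the vanishing condition fixes only $x_1-x_2$, so $x_1$ stays free. On that locus the surviving theta product is, by the $\frac{\tau}{N}$-shift rule, identically a multiple of $\theta_\mu(x_1)\theta_{\be+\ga-\mu}(x_2)$.

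For $\tilde K$ there is only one variable. The Kronecker factor of the shift term vanishes at $N\rho+N\Om_a\equiv-2\hbar-(\ga-\mu)\tau$ modulo $\mZ+N\tau\mZ$. Note the $2\hbar$ and the shift $N\Om_a$; it is not $N\rho=-\hbar+\dots$ as in the $R$ case. This condition pins $x$ to an isolated point $x_0$. There, $\theta_\ga(-x_0-2\xi/N)$ and $\theta_\mu(x_0)$ are evaluated at points not related by $x\mapsto\pm x+\frac1N\Lambda$: modulo $\frac1N\Lambda$ they differ by $2\xi/N$, or by $2\hbar/N$ after a reflection. So (\ref{fp10}) and (\ref{fp33}) cannot turn one into a multiple of the other. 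What the evaluation actually gives is a Kronecker function times a genuine theta quotient in $\hbar\mp\xi$, i.e.\ the Komori--Hikami structure (\ref{b171}).

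Already for $N=1$, $a=0$, your evaluation gives $\nu_0\,\phi(2\xi,2\hbar)\,\theta_0(\hbar-\xi)/\theta_0(\hbar+\xi)$, with $\theta_0$ from (\ref{fp17}). By contrast, (\ref{Ktilde}) gives
\[
\frac{\nu_0}{2}\Big(\phi(2\xi,\hbar)+\phi(2\xi,\hbar+\tfrac12)+e^{2\pi\imath\xi}\phi(2\xi,\hbar+\tfrac{\tau}{2})-e^{2\pi\imath\xi}\phi(2\xi,\hbar+\tfrac{1+\tau}{2})\Big)\,.
\]
Their equality is a theta identity that none of your steps supplies.

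For the same reason, the expected correspondence between the $\be_2$ and $\be_2+N$ contributions and the shift and reflection terms cannot hold. Neither functional term lies in $V_N$ on its own (each has poles at $\rho+\Om_a\in\Lambda$), so only their sum has matrix elements.

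The missing ingredient is a Liouville argument in the spectral parameter $\xi$. With it, the Fourier bookkeeping is unnecessary, and the half-period versus half-integer matching you flag as the main obstacle reduces to a residue comparison.

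\begin{itemize}
\item Quasi-periodicity. From (\ref{fp31b}) one checks
$\tilde K^{\hbar,N}(\xi+1)=A\,\tilde K^{\hbar,N}(\xi)A^{-1}$ and $\tilde K^{\hbar,N}(\xi+\tau)=e^{-4\pi\imath\hbar/N}B\,\tilde K^{\hbar,N}(\xi)B^{-1}$.
Using (\ref{Tcond}), (\ref{Ktilde}) satisfies the same relations with $Q,\Lambda$ in place of $A,B$.
\item Location of poles. The poles along $\rho+\Om_a\in\Lambda$ cancel (Lemma~\ref{propMap}). Hence both sides are singular in $\xi$ only at $2\xi\in\Lambda$, coming from $v(\mp\xi,\cdot)$.
\item Residues. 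At $\xi=0$ they are $\frac12(\nu_0+\nu_1+\nu_2+\nu_3)\,\hat t$ and, by (\ref{el21}), $\frac12(\nu_0+\nu_1+\nu_2+\nu_3)\,J$. At $\xi=\frac{\tau}{2}$ they are $\frac12(\nu_0-\nu_1-\nu_2+\nu_3)e^{-2\pi\imath\hbar/N}\,\hat t B^{-1}$ and the same with $J\Lambda^{-1}$. The other two half-periods are similar; e.g.\ at $\xi=\frac12$ you get $\hat tA^{-1}$ versus $JQ^{-1}$.
\item Conclusion. The difference is entire in $\xi$ with these twisted multipliers. Expanding it in the $T_\be$ shows it vanishes for generic $\hbar$.
\end{itemize}

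This is the quasi-periodicity-and-residues mechanism the paper invokes for (\ref{b173}). For $N=1$ it proves the identity displayed above.
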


\begin{proof}
{\it Outline.}
We use the same identification $V_N\simeq\mC^N$ and the basis
(\ref{fp17}) as in Proposition~\ref{propFP}, with
$A\leftrightarrow Q$ and $B\leftrightarrow\Lambda$.
Both sides are linear in the couplings $\nu_a$, so it is enough to match
each summand of (\ref{KtildeSum}) with the corresponding term of the modified
operator (\ref{fp31b}) under the sign condition (\ref{a163}).

On the matrix side, (\ref{fpTact}) expands (\ref{KtildeSum}) in the basis
$\{\theta_\ga\}$, with coefficients given by finite Fourier sums over
$\mZ_{2N}\times\mZ_{2N}$.
On the functional side, one substitutes $f=\theta_\ga$ into (\ref{fp31b})
and applies the discrete Fourier identity (\ref{a911t}) as in the proof of
Proposition~\ref{propFP}.
The resulting coefficients are then compared by evaluating at a convenient
point (where one of the two Kronecker terms vanishes) and using the
quasi-periodicity of the theta basis together with the residue constraints
that fix (\ref{a163}).
This yields the claimed identification.
\end{proof}

\paragraph{Properties of $\tilde K$.}
Under the identification $\hat t\leftrightarrow J$, the relations
(\ref{fpKtN}), (\ref{fpFourKeven}) and (\ref{fpFourKodd}) become
\beq\label{elKfromt}
\begin{array}{c}
\displaystyle{
K^{\hbar}(\xi)=-\tilde{K}^{\hbar}(\xi)\,J
}
\end{array}
\eq
and
\beq\label{elFourKeven}
\begin{array}{c}
\displaystyle{
\tilde{K}^{\hbar}(\xi)=J\tilde{K}^{\xi}(\hbar)
\qquad\text{(even $N$)}\,,
}
\end{array}
\eq
\beq\label{elFourKodd}
\begin{array}{c}
\displaystyle{
\tilde{K}^{\hbar}(\xi|\nu_0,\nu_1,\nu_2,\nu_3)
=J\tilde{K}^{\xi}(\hbar|\nu_0,\nu_1,-\nu_2,\nu_3)
\qquad\text{(odd $N$)}\,.
}
\end{array}
\eq
The reflection equation and the unitarity relations for these matrices are recorded in the Summary below; see also (\ref{a160})--(\ref{a161}).

\subsection{The case \texorpdfstring{$N=2$}{N=2}}
We specialise the matrix formulae to $N=2$, as needed for the degenerations in Section~\ref{sec4}.
In this case $R^{\hbar}_{12} (z)$ is an element of ${\rm End}(\mC^2\otimes \mC^2)$, i.e.\ it is the $4\times 4$ matrix known as Baxter's $R$-matrix for the eight-vertex model \cite{BB}. Namely,
for $N=2$, the matrices $Q$, $\Lambda$ (\ref{el11}) take the form
 \beq\label{r722}
 \begin{array}{c}
  \displaystyle{
 Q=\mats{-1}{0}{0}{1}\,,\quad \Lambda=\mats{0}{1}{1}{0}
  }
 \end{array}
 \eq
and the basis matrices (\ref{el10}) are $T_{00}=1_2=\sigma_0$, $T_{10}=-\sigma_3$, $T_{01}=\sigma_1$ and
$T_{11}=-\sigma_2$, where $\{\sigma_a\}$, $a=0,1,2,3$ are the Pauli matrices:
 \beq\label{r723}
 \begin{array}{c}
  \displaystyle{
 \sigma_0=\sigma_4=\mats{1}{0}{0}{1}\,,\quad
 \sigma_1=\mats{0}{1}{1}{0}\,,\quad
 \sigma_2=\mats{0}{-\imath}{\imath}{0}\,,\quad
 \sigma_3=\mats{1}{0}{0}{-1}\,.
  }
 \end{array}
 \eq
 The permutation operator takes the form
 \beq\label{r724}
 \begin{array}{c}
  \displaystyle{
P_{12}=\frac12\sum\limits_{k=0}^3\sigma_k\otimes\sigma_k\,.
  }
 \end{array}
 \eq

For $N=2$, the $R$-matrix (\ref{BB}) has the form \cite{BB}
\beq\label{w810}
 \begin{array}{c}
  \displaystyle{
 R_{12}^\hbar(z)
 =\frac12\sum\limits_{k=0}^3\vf_{k}\Big(z,\Om_{k}+\frac{\hbar}{2}\Big)\sigma_{4-k}\otimes\sigma_{4-k}\,,
  }
 \end{array}
\eq
 and the reflection $K$-matrix \cite{IK,Vega} is
  \beq\label{w811}
 \begin{array}{c}
  \displaystyle{
 {\tilde K}^\hbar(z)=\sum\limits_{k=0}^3\rho_{k}
 e^{-2\pi\imath\hbar\p_\tau\Om_{k}}\vf_{k}(z-\Om_{k},\hbar+\Om_{k})\sigma_{4-k}=
}
\\ \ \\\
  \displaystyle{
 =
 \sum\limits_{k=0}^3\rho_{k}
 e^{2\pi\imath (z+\hbar+\Om_k)\p_\tau\Om_{k}}\phi(z+\Om_{k},\hbar+\Om_{k})\sigma_{4-k}\,,
  }
 \end{array}
\eq
 where $\rho_1=-\bar\nu_1$, $\rho_2=-\bar\nu_2$, $\rho_3=\bar\nu_3$,
 $\rho_4=\rho_0=\bar\nu_0$ is assumed, and $\p_\tau\Om_{k}$ is equal to either $0$ (for $\Om_0$, $\Om_1$) or
 $1/2$ (for $\Om_2$, $\Om_3$).

In this case $J=\sigma_0$, and we have $\tilde{R}^\hbar_{12}(z)=R^\hbar_{12}(z)$,
${\tilde K}^\hbar(z)={\tilde K}^z(\hbar)$ and ${K}^\hbar(z)=-{\tilde K}^\hbar(z)$.

\subsection{Summary}
We collect the matrix representations obtained above and record the identities they satisfy, including the comparison with the Komori--Hikami $K$-matrix.
We have recalled the construction of the Baxter--Belavin $R$-matrix
\beq\label{b072}
 \begin{array}{c}
  \displaystyle{
 R_{12}^\hbar(z)=\frac{1}{N}\sum\limits_{a\in\,\mZ_{ N}\times\mZ_{ N}}
  T_a\otimes T_{-a} \vf_a(z,\om_a+\frac{\hbar}{N})
 \in{\rm Mat}(N,\mC)^{\otimes 2}
 }
 \end{array}
 \eq
and obtained the corresponding matrix ${\ti R}_{12}^\hbar(z)$
\beq\label{b073}
 \begin{array}{c}
  \displaystyle{
 {\ti R}_{12}^\hbar(z)=J_2\,     R^{\hbar}_{12} (z)J_2=\frac{1}{N}\sum\limits_{a\in\,\mZ_{ N}\times\mZ_{ N}}
  T_a\otimes T_{a} \vf_a(z,\om_a+\frac{\hbar}{N})
 \in{\rm Mat}(N,\mC)^{\otimes 2}\,.
 }
 \end{array}
 \eq
and the $K$-matrix
\beq\label{b074}
  \displaystyle{
    \tilde{K}^{\hbar}(z) = \frac{1}{2N}
    \sum_{\be \in \mZ_{2N} \times \mZ_{2N}} \exp(\frac{2 \pi \imath \be_2 z}{N})
     v (z, \frac{\hbar}{N} + \frac{\om_\be}{2}) T_\be\in \Mat\,,
    }
\eq
\beq\label{b076}
\begin{array}{c}
\displaystyle{
K^{\hbar}(z)=-\tilde{K}^{\hbar}(z)\,J\,.
}
\end{array}
\eq
Here $J$ is the matrix (\ref{el21}).

Applying the Felder--Pasquier procedure to the operators of Shibukawa--Ueno type
in (\ref{a150})--(\ref{a153}) gives these matrix representations. The operators preserve
the corresponding spaces of theta functions, so their matrix representations satisfy the same relations
as the operators themselves.
\begin{theor}
The matrix representations (\ref{b072})-(\ref{b076}) satisfy the ${\rm BC}_n$ parametric AYBEs (\ref{a122})-(\ref{a131}),
the Yang--Baxter equations
\beq\label{b154}
  \displaystyle{
R^\hbar_{ij}(\xi_i-\xi_j)R^\hbar_{ik}(\xi_i-\xi_k)R^\hbar_{jk}(\xi_j-\xi_k)=
R^\hbar_{jk}(\xi_j-\xi_k)R^\hbar_{ik}(\xi_i-\xi_k)R^\hbar_{ij}(\xi_i-\xi_j)\,,
}
\eq
\beq\label{b155}
  \displaystyle{
R^\hbar_{ij}(\xi_i-\xi_j)\tilde{R}^\hbar_{ik}(\xi_i-\xi_k)\tilde{R}^\hbar_{jk}(\xi_j-\xi_k)=
\tilde{R}^\hbar_{jk}(\xi_j-\xi_k)\tilde{R}^\hbar_{ik}(\xi_i-\xi_k)R^\hbar_{ij}(\xi_i-\xi_j)\,.
}
\eq
 the reflection equations
\beq\label{b156}
\displaystyle{
R_{ij}^{\hbar}(\xi_i-\xi_j)K_i^\hbar(\xi_i)R_{ji}^{\hbar}(\xi_i+\xi_j)K_j^\hbar(\xi_j)=
K_j^\hbar(\xi_j)R_{ij}^{\hbar}(\xi_i+\xi_j)K_i^\hbar(\xi_i)R_{ji}^{\hbar}(\xi_i-\xi_j)\,.
}
\eq
\beq\label{b157}
  \displaystyle{
R_{ij}^{\hbar}(\xi_i-\xi_j)\tilde{K}_i^\hbar(\xi_i)\tilde{R}_{ij}^{\hbar}(\xi_i+\xi_j)
\tilde{K}_j^\hbar(\xi_j)=\tilde{K}_j^\hbar(\xi_j)\tilde{R}_{ij}^{\hbar}(\xi_i+\xi_j)
\tilde{K}_i^\hbar(\xi_i)R_{ij}^{\hbar}(\xi_i-\xi_j)\,,
}
\eq
and the unitarity properties
\beq\label{b158}
  \displaystyle{
  R_{ij}^{\hbar}(\xi)  R_{ji}^{\hbar}(-\xi)={\rm Id}\Big(\wp(\hbar)-\wp(\xi)\Big)\,,
  }
  \eq
  \beq\label{b159}
    \displaystyle{
  \tilde{R}_{ij}^{\hbar}(\xi)  \tilde{R}_{ji}^{-\hbar}(\xi)={\rm Id}\Big(\wp(\xi)-\wp(\hbar)\Big)\,,
  }
  \eq
\beq\label{b160}
  \displaystyle{
 K^\hbar_i(\xi) K^\hbar_i(-\xi)
 ={\rm Id}\sum_{a=0}^3
 \Big(\bar{\nu'}_a^{\,2}\wp(\hbar+\Omega_a)
 -\bar\nu_a^2\wp(\xi+\Omega_a)\Big)\,,
 }
 \eq
 \beq\label{b161}
   \displaystyle{
 \tilde{K}^\hbar_i(\xi) \tilde{K}^{-\hbar}_i(\xi)=
 {\rm Id}\sum_{a=0}^3
 \Big(\bar\nu_a^2\wp(\xi+\Omega_a)
 -\bar{\nu'}_a^{\,2}\wp(\hbar+\Omega_a)\Big)
 }
 \eq
 with the condition
(\ref{a163}) on the constants $\nu'$.
\end{theor}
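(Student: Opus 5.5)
The argument transports the operator identities to matrices; no identity is checked directly on the matrix side. By Section~\ref{sec2}, the Shibukawa--Ueno-type operators (\ref{a150})--(\ref{a153}) satisfy the ${\rm BC}_n$ relations (\ref{a122})--(\ref{a131}) for arbitrary $\nu,\nu'$. They satisfy the Yang--Baxter equations (\ref{a154}), (\ref{a155}) unconditionally. The reflection equations (\ref{a156}), (\ref{a157}) and the unitarity relations (\ref{a158})--(\ref{a161}) hold under the sign condition (\ref{a162}), and the parity rule (\ref{a163}) is a particular case of it. By Lemma~\ref{propMap} and Proposition~\ref{propMod}, the modified operators (\ref{fp20}) preserve $V_N$ (respectively $V_N^{\otimes 2}$) once the couplings obey (\ref{a163}), and they satisfy the same identities. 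Restricting to $V_N^{\otimes n}$, each identity becomes an identity between finite matrices written in the tensor basis $\theta_{\al_1}\otimes\dots\otimes\theta_{\al_n}$ built from (\ref{fp17}).

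\textbf{Identification.} We then identify the restricted operators with (\ref{b072})--(\ref{b076}).
\begin{itemize}
\item By Proposition~\ref{propFP}, $R^{\hbar,N}_{ij}$ is the Baxter--Belavin matrix (\ref{BB}).
\item By (\ref{fpRtN}) and the correspondence $\hat t\leftrightarrow J$ from (\ref{fp33}) and (\ref{el21}), $\tilde R^{\hbar,N}_{ij}=J_jR^{\hbar,N}_{ij}J_j$, which is (\ref{Rtilde}) $=$ (\ref{b073}).
\item By Proposition~\ref{propFPK}, $\tilde K^{\hbar,N}$ is (\ref{Ktilde}).
\item By (\ref{fpKtN}), $K^{\hbar,N}=-\tilde K^{\hbar,N}J$, which is (\ref{b076}).
\end{itemize}
Operators acting in different tensor factors become matrices acting in the corresponding copies of $\mC^N$. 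Hence products such as $R_{ij}\tilde K_j$, or triple products, correspond to the same matrix products. Scalar multiples of ${\rm Id}$ on the right-hand sides of the unitarity relations stay scalar.

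\textbf{Order of proof.}
\begin{enumerate}
\item Fix $n$ and the parity of $N$, and choose $\nu'$ by (\ref{a163}).
\item Invoke the operator identities of Section~\ref{sec2}.
\item Conjugate by the translations $\hat{\bf T}_N$ as in the proof of Proposition~\ref{propMod}. The outer conjugations agree on both sides of each identity, for instance (\ref{fp131LHS}) for the four-term relation; check this in the same way for the three-term, reflection and unitarity relations.
\item Pass to matrices via the basis (\ref{fp17}).
\end{enumerate}
The shifted arguments such as $R^{-z}$, $\tilde R^{w-z}$, $\tilde K^{-z}$ and $\tilde K^{-\hbar}$ only change the parameter. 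For each parameter value the matrix is still given by the same formulae, so substituting parameters commutes with restriction. The relation $R_{ji}^\hbar(-\xi)$ appearing in (\ref{b158}) is handled through (\ref{a123}), which the matrices inherit.

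\textbf{Main obstacle.} The main obstacle is step 3 for the reflection and unitarity relations. There the spectral parameters of consecutive factors are not directly aligned: for example $K^{\hbar,N}_i(\xi)$ maps with shifts $-\hbar\mp\xi$, and the product $K^\hbar(\xi)K^\hbar(-\xi)$ must compose correctly. I would first verify, by bookkeeping with (\ref{fpTmult}), that the inner translations cancel between adjacent factors. Where they do not, I would use the intertwining relations (\ref{fpTcomm}) together with their analogues for $K$, $\tilde K$: $\hat t$ anticommutes with translations, and $\tilde K^\hbar(\xi)$ commutes with no translation. This leaves a global conjugation common to both sides. If a residual mismatch appears in the unitarity relation (\ref{b160}), it must be a scalar shift identity of the form $\hat{\bf T}_N(\la)\,{\rm Id}\,\hat{\bf T}_N(-\la)={\rm Id}$, which is harmless.

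A secondary point is that Proposition~\ref{propFPK} is only outlined. The proof relies on it holding with exactly the couplings (\ref{a163}), which is where the parity of $N$ enters. It is also the reason why (\ref{b160})--(\ref{b161}) carry $\bar{\nu'}$ and not $\bar\nu$ for odd $N$.
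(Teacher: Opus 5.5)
Your proposal is correct and follows the paper's own argument. You transport the operator identities of Section~\ref{sec2} through the Felder--Pasquier restriction (Lemma~\ref{propMap}, Proposition~\ref{propMod}) and identify the result with (\ref{b072})--(\ref{b076}) via Propositions~\ref{propFP}, \ref{propFPK} and (\ref{fpRtN}), (\ref{fpKtN}). The translation bookkeeping you single out as the main obstacle does close for the reflection and unitarity relations, exactly as in (\ref{fp131LHS}): by (\ref{fpTcomm}), $R^{\hbar,N}_{ij}$ and $\tilde R^{\hbar,N}_{ij}$ depend only on $\xi_i-\xi_j$ and $\xi_i+\xi_j$ respectively, so the inner shifts can always be chosen to cancel and both sides acquire the same outer conjugation.
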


Notice that for odd $N$ the coefficient function on the right-hand side of (\ref{b161}) can be expressed in terms of the function $v$; see (\ref{q419a}):
 \beq\label{b162}
   \displaystyle{
\sum_{a=0}^3
 \Big(\bar\nu_a^2\wp(z+\Omega_a)
 -\bar{\nu'}_a^{\,2}\wp(\hbar+\Omega_a)\Big)=v(z,\hbar|\mu)v(z,-\hbar|\mu)
 }
 \eq
with the set of constants $\mu$
 \beq\label{b163}
 \begin{array}{c}
   \displaystyle{
\mu_0=\frac12\Big(\nu_0+\nu_1-\nu_2+\nu_3\Big)\,,
 }
\quad
   \displaystyle{
\mu_1=\frac12\Big(\nu_0+\nu_1+\nu_2-\nu_3\Big)\,,
 }
  \\ \ \\
   \displaystyle{
\mu_2=\frac12\Big(-\nu_0+\nu_1+\nu_2+\nu_3\Big)\,,
 }
\quad
   \displaystyle{
\mu_3=\frac12\Big(\nu_0-\nu_1+\nu_2+\nu_3\Big)
 }
 \end{array}
 \eq
since for odd $N$ $\mu_0=\bar\nu'_0$, $\mu_1=\bar\nu'_1$, $\mu_2=-\bar\nu'_2$, $\mu_3=\bar\nu'_3$.

In their paper \cite{KH2} Komori and Hikami derived the following expression for the $K$-matrix:
 \beq\label{b170}
   \displaystyle{
K^{KH}(\hbar,z|\nu)=\sum\limits_{b=0}^3\nu_b \sum\limits_{i,j=1}^N K_{ij}^b(\hbar,z)e_{ij}\,,
 }
 \eq
 \beq\label{b171}
   \displaystyle{
K_{ij}^b(\hbar,z)=\frac{ \vth'(0|N\tau)\,\theta\left[\begin{array}{c}-\frac{2(i-1)}{N}+\frac12\\  \frac12\end{array}\right](2\hbar+2z|N\tau) }
{\theta\left[\begin{array}{c}-\frac{i+j-2}{N}+\frac12\\  \frac12\end{array}\right](2\hbar|N\tau)
\theta\left[\begin{array}{c}\frac{j-i}{N}+\frac12\\  \frac12\end{array}\right](2z|N\tau)  }
\,
\frac{ \theta\left[\begin{array}{c}-\frac{j-1}{N}+\frac{b_2-1}{2}\\  N\frac{b_1-1}{2}\end{array}\right](\hbar-z|N\tau) }
{ \theta\left[\begin{array}{c}-\frac{i-1}{N}+\frac{b_2-1}{2}\\  N\frac{b_1-1}{2}\end{array}\right](\hbar+z|N\tau) }\,,
 }
 \eq
where $b=0$ corresponds to $(b_1,b_2)=(0,0)$, $b=1$ to $(b_1,b_2)=(1,0)$, $b=2$ to $(b_1,b_2)=(1,1)$ and
 $b=3$ to $(b_1,b_2)=(0,1)$.

\begin{theor}
  The $K$-matrix (\ref{b076}) and the Komori--Hikami $K$-matrix (\ref{b170})-(\ref{b171})
  are related by
 \beq\label{b173}
   \displaystyle{
\Big({\ti K}^{\hbar}(z|\nu)\Big)^T=-J\,\Lambda K^{KH}(\hbar,z|S\nu)\Lambda^{-1}\,,
 }
 \eq
where $S$ permutes the couplings according to $S(\nu_0,\nu_1,\nu_2,\nu_3)=(\nu_2,\nu_3,\nu_0,\nu_1)$.
\end{theor}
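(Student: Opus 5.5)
The plan is to compare both sides of (\ref{b173}) entry by entry in the standard basis $e_{ij}$, one coupling at a time. Both sides are linear in the couplings. Using (\ref{KtildeSum}), the left-hand side splits as $\sum_a\nu_a(\tilde K^\hbar_a(z))^T$. On the right-hand side, $K^{KH}(\hbar,z|S\nu)=\sum_b(S\nu)_bK^b$, and $(S\nu)_b=\nu_{b+2 \bmod 4}$. So it suffices to prove, for each $a=0,1,2,3$,
\[
\big(\tilde K^\hbar_a(z)\big)^T=-J\Lambda\Big(\sum_{i,j}K^{b(a)}_{ij}(\hbar,z)e_{ij}\Big)\Lambda^{-1},\qquad b(a)\equiv a+2 \pmod 4 .
\]

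First I compute the matrix entries of $\tilde K_a^\hbar(z)$. Write $\be=(\be_1,\be_2)\in\mZ_{2N}\times\mZ_{2N}$ and use $T_\be=e^{-\pi\imath\be_1\be_2/N}Q^{\be_1}\Lambda^{\be_2}$. Then $(T_\be)_{kl}$ is supported on $k-l\equiv-\be_2$ modulo $N$ and carries the phase $e^{-2\pi\imath\be_1k/N}$. The sum over $\be_1\in\mZ_{2N}$ at fixed $\be_2$ is a finite Fourier sum of $\phi(2z,\hbar/N+\om_\be/2+\Om_a)$ in the shift $\be_1/(2N)$. I evaluate it with the discrete Fourier identity (\ref{a911m}), applied with $N$ replaced by $2N$ and $\tau$ by $\tau/\ldots$ as appropriate. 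This turns the sum into a single Kronecker function with modulus $N\tau$, in the same way that (\ref{fpBBcoeff}) was turned into (\ref{fpBBsimp}).

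The remaining sum over $\be_2\in\mZ_{2N}$ splits by $\be_2$ modulo $N$, which labels the diagonal $k-l$. The two residues $\be_2$ and $\be_2+N$ that fall on the same diagonal differ by the half-period shift $\tau/2$ inside $\phi$. I expect them to combine into the theta functions with characteristics $\tfrac12$ (in both slots) and argument $2\hbar$, $2z$ that appear in (\ref{b171}). This is where the $\mZ_{2N}$ sum produces the characteristic $-\frac{i+j-2}{N}+\frac12$. The index $a$ of $\Om_a$ then shifts the characteristics $\big(\frac{b_2-1}{2},N\frac{b_1-1}{2}\big)$. Because of the $-1$ offsets in those characteristics, matching requires $b=a+2$, which is the origin of the permutation $S$.

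Second, I apply the elementary conjugations on the Komori--Hikami side. Conjugating by $\Lambda$ shifts both indices, $(i,j)\mapsto(i-1,j-1)$; this absorbs the offsets $i-1$, $j-1$ of (\ref{b171}), which come from labelling by $1,\dots,N$ rather than from the basis $\{\theta_1,\dots,\theta_{N-1},\theta_0\}$. Left multiplication by $J$ reverses the row index, $i\mapsto -i$, which matches the reindexing $\al\mapsto-\al$ of the KH basis noted before (\ref{fp17}). Transposition swaps $i$ and $j$. After these steps both sides are expressed as single products of theta functions of modulus $N\tau$ with matching index dependence.

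The final identification of each entry is a three-term product identity. The product $\phi(2z,\cdot)$ times the exponentials must equal the KH ratio of four theta functions. I would prove it as an identity of elliptic functions in $\hbar$: check that both sides have the same quasi-periodicity under $\hbar\mapsto\hbar+1$ and $\hbar\mapsto\hbar+N\tau$, the same poles, and the same residue at one pole (for instance $\hbar=-z$, or $2\hbar=0$). I expect this matching of prefactors, including the overall sign $-1$ and the parity-dependent phases, to be the main obstacle. For odd $N$ the half-integer characteristics interact with the sign $(-1)^N$ that already appeared in Lemma~\ref{propMap}, so the cases $N$ even and $N$ odd will be checked separately. Since the statement uses $v(\cdot|\nu)$ without $\nu'$, the $\nu_2$ sign convention only enters through $\tilde K$ itself.
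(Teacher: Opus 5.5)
Your entrywise route is genuinely different from the paper's, and it can be carried through. The index bookkeeping works: in $\tilde K^\hbar_a(z)$ the $2\hbar$-argument carries $l-k$ and the $2z$-argument carries $l+k$, and $J$, $\Lambda$ and the transpose turn this into exactly the pattern of (\ref{b171}). However, two steps need repair.

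\textbf{First: the Fourier step.} (\ref{a911m}) with $N\to 2N$ gives modulus $2N\tau$, not $N\tau$. The $\be_2$-term of the $(k,l)$ entry is, up to a common factor, $\phi(X,Y;2N\tau)$ with $X=2\hbar+\be_2\tau+2N\Om_a$ and $Y=2z+(\be_2+2k)\tau$.

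Passing to $\be_2+N$ also changes the character, since $T_{\be+(0,N)}=(-1)^{\be_1}T_\be$. So \emph{both} $X$ and $Y$ move by the half-period $N\tau$. The relative phases cancel, and the pair gives
\[
\theta_1'(0|2N\tau)\,\theta_1(X+Y|2N\tau)\Big(\frac{1}{\theta_1(X|2N\tau)\,\theta_1(Y|2N\tau)}+\frac{1}{\theta_4(X|2N\tau)\,\theta_4(Y|2N\tau)}\Big).
\]
The denominators collapse, via $\theta_1(x|2\tau)\theta_4(x|2\tau)\propto\theta_1(x|\tau)$, to $\theta_1(X|N\tau)\theta_1(Y|N\tau)$. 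These are the two $(\cdot+\frac12,\frac12)$ factors of (\ref{b171}).

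The numerator, however, is a genuine sum. It takes the Komori--Hikami form $\theta_1(X+Y|N\tau)\,\theta_3(\frac{X-Y}{2}|N\tau)/\theta_3(\frac{X+Y}{2}|N\tau)$ only after two identities: the half-modulus addition formula $\theta_4(u|2\tau)\theta_4(v|2\tau)+\theta_1(u|2\tau)\theta_1(v|2\tau)=\theta_4(\frac{u+v}{2}|\tau)\,\theta_3(\frac{u-v}{2}|\tau)$, and the duplication formula $\theta_3(w|\tau)\theta_4(w|\tau)=\theta_4(0|2\tau)\,\theta_4(2w|2\tau)$.

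So the entries are not single products after the Fourier sums. These identities are the real content of your final ``product identity''. With them the matching is explicit, and the shifts $-k\tau$, $(k+l)\tau$ and $N\Om_a$ land exactly on the characteristics of (\ref{b171}), giving $b=a+2$.

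\textbf{Second: closing by uniqueness in $\hbar$.} If you close this way instead, one residue is not enough. Each entry has four simple poles per cell of $\mZ+N\tau\mZ$, where $2\hbar$ meets a shifted copy of $\mZ+N\tau\mZ$. The apparent pole of (\ref{b171}) at $\theta(\hbar+z)=0$ is cancelled by $\theta(2\hbar+2z)$, so $\hbar=-z$ is not a pole. Functions with fixed nontrivial multipliers and these four poles form a four-dimensional space. You must therefore match all four principal parts, or the zeros as well.

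\textbf{Comparison with the paper.} The paper argues globally in $z$ instead. From (\ref{Ktilde}),
\[
\tilde K^\hbar(z+1)=Q\tilde K^\hbar(z)Q^{-1},\qquad \tilde K^\hbar(z+\tau)=e^{-4\pi\imath\hbar/N}\Lambda\tilde K^\hbar(z)\Lambda^{-1},
\]
with only simple poles at $z\in\frac12(\mZ+\tau\mZ)$; for example $\res_{z=0}\tilde K^\hbar(z)=\bar\nu_0J$ by (\ref{el21}). For generic $\hbar$, each $T_\al$-component of the difference of two such matrix functions is entire with nontrivial multipliers, hence zero. So (\ref{b173}) reduces to checking that the right-hand side has the transposed twisted quasi-periodicity and the same residues at the four half-periods.

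That route needs no Fourier sums or addition theorems, only finitely many theta-constant evaluations. Yours costs more computation. In return it produces the entries explicitly and shows why $J$ is needed: it interchanges the index combinations attached to $2\hbar$ and $2z$. It also shows where the permutation $S$ comes from.
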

This relation follows by comparing the quasi-periodicity properties and the residues at the half-periods
in the variable $z$.

\noindent {\bf Remark.} As noted at the beginning of this section, we define the
basis $T_a$ using the matrices $Q,\Lambda$ in (\ref{el10})-(\ref{el11}). This convention differs from the one used in our previous work.
If we use our standard notation
\beq\label{a971}
\begin{array}{c}
\displaystyle{
        T_\al = \exp \left( \al_1 \al_2 \frac{\pi \imath}{N} \right) Q^{\al_1} \Lambda^{\al_2},
        \quad \al = (\al_1, \al_2)\in \mZ_N \times \mZ_N\,,
}\end{array}\eq
with
\beq\label{a041}
 \begin{array}{c}
  \displaystyle{
 Q_{kl}=\delta_{kl}\exp(\frac{2\pi
 \imath}{{ N}}k)\,,\ \ \
 \Lambda_{kl}=\delta_{k-l+1=0\,{\hbox{\tiny{mod}}}\,
 { N}}\,,\quad Q^{ N}=\Lambda^{ N}=1_N\,,
 }
 \end{array}
 \eq
then some relations change. In particular, instead of (\ref{elFourKeven}), (\ref{elFourKodd}) and (\ref{b173})
we have the following set of relations for the $\ti K$-matrix (defined as in (\ref{Ktilde}) but with the basis (\ref{a971}), (\ref{a041})):
\beq\label{elFourKeven2}
\begin{array}{c}
\displaystyle{
\tilde{K}^{\hbar}(\xi)=\tilde{K}^{\xi}(\hbar)J
\qquad\text{(even $N$)}\,,
}
\end{array}
\eq
\beq\label{elFourKodd2}
\begin{array}{c}
\displaystyle{
\tilde{K}^{\hbar}(\xi|\nu_0,\nu_1,\nu_2,\nu_3)
=\tilde{K}^{\xi}(\hbar|\nu_0,\nu_1,-\nu_2,\nu_3)J
\qquad\text{(odd $N$)}
}
\end{array}
\eq
and
 \beq\label{b1732}
   \displaystyle{
{\ti K}^{\hbar}(z|\nu)=-\Lambda K^{KH}(\hbar,z|S\nu)\Lambda^{-1}J
 }
 \eq
with $S(\nu_0,\nu_1,\nu_2,\nu_3)=(\nu_2,\nu_3,\nu_0,\nu_1)$.
The matrices $A=\sum_a A_aT_a$ obtained by using the same coefficients in the two bases (\ref{el10})--(\ref{el11}) and (\ref{a971})--(\ref{a041})
are related by $A\to JA^TJ$.

%%%%%%%%%%%%%%%%%%%%%%%%%%%%%%%%%%%%%%%%%%%%%%%%%%%%%%%%%%%%%%%%%%%%%%%%%%%%%%%%%%%%%%%%%%%%%%%%%
%%%%%%%%%%%%%%%%%%%%%%%%%%%%%%%%%%%%%%%%%%%%%%%%%%%%%%%%%%%%%%%%%%%%%%%%%%%%%%%%%%%%%%%%%%%%%%%%%
\section{Trigonometric and rational degenerations}\label{sec4}
\setcounter{equation}{0}
In this section we derive trigonometric and rational degenerations of the elliptic $K$-matrix in the
${\rm GL}_2$ case ($N=2$). In Sections~\ref{sec311} and~\ref{sec321}, we take the standard trigonometric and rational limits. In Sections~\ref{sec312} and~\ref{sec322}, we obtain nonstandard degenerations by applying the singular gauge transformations described in \cite{AHZ,Smirnov}.

\subsection{Trigonometric degenerations}
\subsubsection{The XXZ six-vertex model}\label{sec311}
The standard trigonometric XXZ $R$-matrix has the form \cite{Jimbo}:
\beq\label{trigR}
 \begin{array}{c}
 \displaystyle{
  R_{12}^\hbar(z) = \frac12\Big(\pi \cot\pi \hbar + \pi \cot \pi z\Big)
  \Big(\sigma_0 \otimes \sigma_0 + \sigma_3 \otimes \sigma_3\Big)+
 }
 \\ \ \\
 \displaystyle{
  + \frac{\pi}{2\sin \pi \hbar}\Big(\sigma_0 \otimes \sigma_0 - \sigma_3 \otimes \sigma_3\Big)
  + \frac{\pi}{2\sin \pi z}\Big(\sigma_1 \otimes \sigma_1 + \sigma_2 \otimes \sigma_2\Big)\,.
 }
 \end{array}
\eq
This is the $R$-matrix for the quantum affine algebra ${\hat{\mathcal U}}_q({\rm gl}_2)$.
It can be obtained in the  limit ${\rm Im}(\tau)\rightarrow\infty$ from Baxter's eight-vertex $R$-matrix,
which is (\ref{b072}) for $N=2$:
 \beq\label{w8100}
 \begin{array}{c}
  \displaystyle{
 R_{12}^\hbar(z)
 =\frac12\sum\limits_{k=0}^3\vf_{k}\Big(z,\Om_{k}+\frac{\hbar}{2}\Big)\sigma_{4-k}\otimes\sigma_{4-k}\,,
  }
 \end{array}
\eq
 or explicitly
 \beq\label{w207}
 \begin{array}{c}
 R_{12}^\hbar(z)=\displaystyle{\frac12}\left(
 \begin{array}{cccc}
 \vf_{00}+\vf_{10} & 0 & 0 & \vf_{01}-\vf_{11}
 \\
 0 & \vf_{00}-\vf_{10} & \vf_{01}+\vf_{11} & 0
  \\
 0 & \vf_{01}+\vf_{11} & \vf_{00}-\vf_{10} & 0
 \\
 \vf_{01}-\vf_{11} & 0 & 0 & \vf_{00}+\vf_{10}
 \end{array}
 \right)
  \end{array}
 \eq
  with the definitions
 \beq\label{w208}
 \begin{array}{c}
  \displaystyle{
 \vf_{00}=\phi\Big(z,\frac{\hbar}{2}\Big)=\vf_{0}\Big(z,\Om_{0}+\frac{\hbar}{2}\Big)\,,\quad
 \vf_{10}=\phi\Big(z,\frac{1+\hbar}{2}\Big)=\vf_{1}\Big(z,\Om_{1}+\frac{\hbar}{2}\Big)\,,
 }
 \\ \ \\
  \displaystyle{
 \vf_{01}=e^{\pi\imath z}\phi\Big(z,\frac{\tau+\hbar}{2}\Big)=\vf_{3}\Big(z,\Om_{3}+\frac{\hbar}{2}\Big)\,,\quad
 \vf_{11}=e^{\pi\imath z}\phi\Big(z,\frac{1+\tau+\hbar}{2}\Big)=\vf_{2}\Big(z,\Om_{2}+\frac{\hbar}{2}\Big)\,,
 }
  \end{array}
 \eq
see (\ref{w209})-(\ref{w210}).
Introduce\footnote{This $q$ should not be confused with the deformation parameter in ${\hat{\mathcal U}}_q({\rm gl}_2)$.}
$q=e^{\pi i \tau}$.
The most general elliptic $K$-matrix in the ${\rm GL}_2$ ($N=2$) case was found in \cite{IK,Vega}.
We write it as\footnote{The form (\ref{w8110}) is obtained from (\ref{b074}) for $N=2$.}
\beq\label{w8110}
 \begin{array}{c}
  \displaystyle{
 {\tilde K}^\hbar(z)=
 \sum\limits_{k=0}^3\nu_{k}
 e^{2\pi\imath (z+\hbar)\p_\tau\Om_{k}}\phi(z+\Om_{k},\hbar+\Om_{k})\sigma_{4-k}\,.
  }
 \end{array}
\eq
This version differs from (\ref{w811}) by a rescaling of the coupling constants:
$\nu_a=\rho_a\exp(2\pi\imath\Om_a\partial_\tau\Om_a)$.
%: $\nu_1\rightarrow{q^{-\frac12}\nu_1}$ and $\nu_2\rightarrow{q^{-\frac12}\nu_2}$.
Taking the limit ${\rm Im}(\tau)\rightarrow\infty$, that is, $q\to 0$,
we obtain the trigonometric $K$-matrix\footnote{The constants $\nu_a$ in (\ref{trigK}) are not exactly the same as in (\ref{w8110}) due to
rescaling in the limit ${\rm Im}(\tau)\rightarrow\infty$.}:
\beq\label{trigK}
 \begin{array}{c}
 \displaystyle{
 		{\tilde K}^\hbar(z)
 		=
 }
 \\ \ \\
  \displaystyle{
 =
 \nu_0(\pi\cot(\pi z)+\pi\cot(\pi\hbar))\sigma_{0}+\nu_3(\pi\tan(\pi z)+\pi\tan(\pi\hbar))\sigma_{3}+\pi\sin\pi(\hbar+z)(\nu_1\sigma_{1}+\nu_2\sigma_{2})\,.
 }
  \end{array}
 \eq
 Up to multiplication by a scalar function, this is the trigonometric $K$-matrix
 obtained by T. Inami and H. Konno in \cite{IK}:
\beq\label{trigK2}
 \begin{array}{c}
 \displaystyle{
 		{\tilde K}^\hbar(z)=\frac{\pi\nu_0\sin(u+\eta)}{\sin(\eta)\sin(u)\cos(u)}\, K^{IK}(u,\eta)\,,\quad
 \eta=\pi\hbar\,,\ u=\pi z\,.
 }
  \end{array}
 \eq

 \subsubsection{The trigonometric seven-vertex model}\label{sec312}
 It is easy to verify that the QYBE (\ref{a021}) and the AYBE (\ref{a02}) are invariant under
 the gauge transformations
%:
 \beq\label{w8111}
  \displaystyle{
 R_{12}^{'}=AR_{12}A^{-1}\,, \qquad  A=G_{1}G_{2}\,, \qquad  G_{1}=G\otimes1\,,\ G_{2}=1\otimes G\,,
 }
 \eq
 where $G$ is an arbitrary invertible $2\times 2$ matrix. The reflection equation (\ref{a156})
 and the ${\rm BC}_n$ AYBEs (\ref{a122})-(\ref{a131}) are also invariant
 under simultaneous conjugation of $R$- and $K$-matrices:
 \beq\label{w8112}
R_{12}^{'}=AR_{12}A^{-1}, \qquad K^{'}=GKG^{-1}\,.
 \eq
 For our purposes, we use the following gauge transformation matrix:
 \beq\label{Gt}
 G_t=\mats{q^{\frac{1}{4}}\alpha^{-\frac12}}{0}{0}{q^{-\frac{1}{4}}\alpha^{\frac12}}\,.
 \eq
 Conjugation by this matrix acts nontrivially on the components $\sigma_1$ and $\sigma_2$ only:
 \beq
 \sigma_1 \rightarrow{\mats{0}{q^{\frac12}\alpha^{-1}}{q^{-\frac12}\alpha}{0}}, \qquad \sigma_2 \rightarrow{\mats{0}{-\imath q^{\frac12}\alpha^{-1}}{\imath q^{-\frac12}\alpha}{0}}\,.
 \eq
 Conjugating Baxter's $R$-matrix (\ref{w810}) by this matrix and taking the trigonometric limit $q\rightarrow{0}$ gives
 \beq\label{trigR7}
  \begin{array}{c}
  	\displaystyle{
 R_{12}^\hbar(z) = \frac12 \Big(\pi \cot\pi \hbar + \pi \cot \pi z\Big)
 \Big(\sigma_0 \otimes \sigma_0 + \sigma_3 \otimes \sigma_3\Big)
 + \frac{\pi}{2\sin \pi \hbar}\Big(\sigma_0 \otimes \sigma_0 - \sigma_3 \otimes \sigma_3\Big)+
 }
 \\ \ \\
 \displaystyle{
 + \frac{\pi}{2\sin \pi z}\Big(\sigma_1 \otimes \sigma_1 +
 \sigma_2 \otimes \sigma_2\Big)+ 4\alpha^2\pi\sin\pi(z+\hbar)\sigma_{-} \otimes \sigma_{-}\,,
 }
\end{array}
\eq
where $ \sigma_{-} = \frac12(\sigma_{1}-i\sigma_{2})$. This is the 7-vertex trigonometric $R$-matrix found by I. Cherednik \cite{Chered}.
Its explicit $4\times 4$ form is
 \beq\label{trigR71}
  \begin{array}{c}
  	\displaystyle{
 R_{12}^\hbar(z) = \pi
 \begin{pmatrix}
 \cot\pi z+ \cot\pi \hbar & 0 & 0 & 0 \\
 0 & \displaystyle{\frac{1}{\sin\pi \hbar}} & \displaystyle{\frac{1}{\sin\pi z}} & 0 \\
 0 & \displaystyle{\frac{1}{\sin\pi z}} & \displaystyle{\frac{1}{\sin\pi \hbar}} & 0 \\
 4\alpha^2\sin\pi\left(\hbar+z\right) & 0 & 0 &  \cot\pi z+ \cot\pi \hbar
 \end{pmatrix}\,.
 }
\end{array}
\eq
Applying the gauge transformation (\ref{Gt}) to the elliptic $K$-matrix (\ref{w811}) and taking the trigonometric limit yields
 \beq\label{trigK7}
 \begin{array}{c}
 	\displaystyle{
 		{\tilde K}^\hbar(z)
 		=\nu_0\Big(\pi\cot(\pi z)+\pi\cot(\pi\hbar)\Big)\sigma_{0}+\nu_3\Big(\pi\tan(\pi z)+\pi\tan(\pi\hbar)\Big)\sigma_{3}+\alpha\nu\pi\sin\pi(\hbar+z)\sigma_{-}},
 \end{array}
 \eq
 where a new coupling constant $ \nu=\nu_1+\imath\nu_2$ appears. This $K$-matrix (up to parameter redefinitions and conjugation)
 was proposed in \cite{Cong}.

 The $R$-matrix (\ref{trigR71}) and the trigonometric $K$-matrix (\ref{trigK7}) satisfy the QYBE (\ref{a021}), the AYBE (\ref{a02}),
  the ${\rm BC}_n$ AYBEs (\ref{a122})-(\ref{a131}) and the reflection equation (\ref{a156})  since the gauge-transformed elliptic $R$- and $K$-matrices satisfy these equations for any $q$. However, (\ref{trigR7}) and (\ref{trigK7}) are not gauge-equivalent to (\ref{trigR}) and (\ref{trigK}) because $G_{t}$ is singular at $q=0$.

\subsection{Rational degenerations}
Here we describe rational $R$- and $K$-matrices satisfying the QYBE (\ref{a021}), the AYBE (\ref{a02}),
  the ${\rm BC}_n$ AYBEs (\ref{a122})-(\ref{a131}) and the reflection equation (\ref{a156}).
\subsubsection{The XXX six-vertex model}\label{sec321}
The rational limit ($\cot{x}\rightarrow{\frac{1}{x}}\,,\ \sin{x}\rightarrow{x}$) applied to the XXZ $R$-matrix (\ref{trigR})
gives Yang's rational XXX $R$-matrix \cite{Yang}:
\beq\label{ratR}
\displaystyle{
		R_{12}^\hbar(z)
		=\frac{\sigma_{0}\otimes\sigma_{0}}{\hbar}+\frac{P_{12}}{z}\,,\qquad
 P_{12} = \frac12\sum\limits_{k=0}^3\sigma_{k}\otimes\sigma_{k}\,,
	}
\eq
where $P_{12}$ is the permutation operator.
The corresponding $K$-matrix is derived from (\ref{trigK}) by a similar limit ($\cot{x}\rightarrow{\frac{1}{x}}, \sin{x}\rightarrow{x}, \tan{x}\rightarrow{x}$):
\beq\label{ratK}
\displaystyle{
		{\tilde K}^\hbar(z)
		=\nu_0\Big(\frac{1}{\hbar}+\frac{1}{z}\Big)\sigma_{0}+(\hbar+z)\sum_{k=1}^{3}\nu_k \sigma_{k}}\,.
\eq
If $\nu_0\neq 0$, it can be normalised to $1$ by rescaling the $K$-matrix. Using the gauge symmetry (\ref{w8112}), one can also set $\nu_1=\nu_2=0$, leaving only $\nu_3$ (when $\nu_1^2+\nu_2^2+\nu_3^2\neq 0$).
This example was found and studied in the original paper by E. Sklyanin \cite{Skl-refl}.

\subsubsection{The rational eleven-vertex model}\label{sec322}
A nonstandard rational degeneration can be obtained by a procedure similar to that in Section~\ref{sec312}. Consider the following gauge transformation matrix, which is singular at $x=0$:
 \beq\label{Gr}
 G_r=\mats{x\beta^{-\frac{1}{2}}\alpha^{\frac12}}{0}{2x^{-1}\beta^{\frac{1}{2}}\alpha^{\frac12}}{x^{-1}\beta^{\frac{1}{2}}\alpha^{-\frac12}}\,.
 \eq
 Conjugation by this matrix acts on the Pauli matrices as follows:
 \beq
 \begin{array}{c}
 \displaystyle{
 \sigma_1 \rightarrow{\mats{-2\alpha}{x^{2}\alpha\beta^{-1}}{x^{-2}\beta(\alpha^{-1}-4\alpha)}{2\alpha}}, \qquad \sigma_2 \rightarrow{\mats{2\imath\alpha}{-\imath x^{2}\alpha\beta^{-1}}{\imath x^{-2}\beta(\alpha^{-1}+4\alpha)}{-2\imath\alpha}},}
 \\ \ \\
 \sigma_3\rightarrow{\mats{1}{0}{4x^{-2}\beta}{-1}}\,.
 \end{array}
 \eq
 Replacing $\pi$ in (\ref{trigR7}) and (\ref{trigK7}) by a formal variable $x$, applying the gauge transformation (\ref{Gr}), and taking the limit $x\rightarrow{0}$, we obtain
 \beq\label{w8115}
\begin{array}{c}
	\displaystyle{
		R_{12}^\hbar(z)=
 }
 \\ \ \\
  \displaystyle{
		=\frac{\sigma_{0}\otimes\sigma_{0}}{\hbar}+\frac{P_{12}}{z}
-\beta^2\Big(z^3+\hbar^3 + 2z^2\hbar + 2z\hbar^2\Big)\sigma_{-}\otimes\sigma_{-}
	}-\beta\Big(z+\hbar\Big)\Big(\sigma_{-}\otimes\sigma_{3}+\sigma_{3}\otimes\sigma_{-}\Big)
\end{array}
\eq
and
 \beq\label{w8116}
\begin{array}{c}
	\displaystyle{
		{\tilde K}^\hbar(z)
		=\nu_0\Big(\frac{1}{\hbar}+\frac{1}{z}\Big)\sigma_{0}
+\beta\tilde{\nu}\Big(\hbar+z\Big)\sigma_{-}}, \quad\tilde{\nu}=\nu+4\nu_3\,.
\end{array}
\eq
Explicitly, we have:
 \beq\label{w8117}
R_{12}^{\hbar}(z) =
\begin{pmatrix}
1/\hbar + 1/z & 0 & 0 & 0 \\
-\beta (z+ \hbar) & 1/\hbar & 1/z & 0 \\
-\beta (z+ \hbar) & 1/z & 1/\hbar & 0 \\
-\beta^2(z^3 + \hbar^3 + 2z^2\hbar + 2z\hbar^2) & \beta (z+ \hbar) & \beta (z+ \hbar) & 1/\hbar + 1/z
\end{pmatrix}
\eq
and
 \beq\label{w8118}
{\tilde K}^\hbar(z)=\mat{\nu_0(\frac{1}{\hbar}+\frac{1}{z})}{0}{\beta\tilde{\nu}(\hbar+z)}{\nu_0(\frac{1}{\hbar}+\frac{1}{z})}\,.
\eq
This limiting procedure was carried out in \cite{AHZ} and \cite{Smirnov}.
The resulting rational $R$-matrix (\ref{w8117}) is an eleven-vertex matrix. It was proposed in \cite{Chered}.
The AYBE (\ref{a02}) for this $R$-matrix and its higher-rank generalisations was proved in \cite{AtZ5}.

The $K$-matrix (\ref{w8116}) itself is known (see, e.g., \cite{Vega}) and can be written as
 \beq\label{w8119}
\begin{array}{c}
	\displaystyle{
		{\tilde K}^\hbar(z)
		=\frac{z+\hbar}{z\hbar}\Big(\nu_0\sigma_0+\be\hbar{\ti \nu}\, z\,\sigma_-\Big)\,.
}
\end{array}
\eq
It is known as a solution of the reflection equation for Yang's six-vertex $R$-matrix (\ref{ratR}).
To the best of our knowledge, its compatibility with the eleven-vertex $R$-matrix has not been reported previously. In addition to the reflection equation, the pair (\ref{w8117})-(\ref{w8118}) satisfies the ${\rm BC}_n$ AYBEs.

%%%%%%%%%%%%%%%%%%%%%%%%%%%%%%%%%%%%%%%%%%%%%%%%%%%%%%%%%%%%%%%%%%%%%%%%%%%%%%%%%%%%%%%%%%%%%%%%%
%%%%%%%%%%%%%%%%%%%%%%%%%%%%%%%%%%%%%%%%%%%%%%%%%%%%%%%%%%%%%%%%%%%%%%%%%%%%%%%%%%%%%%%%%%%%%%%%%
\section{Calogero--Moser model in an external field}\label{sec5}
\setcounter{equation}{0}
Calogero--Moser models are associated with Lie algebras and root systems \cite{OP}.
In this section we consider a class of such models in an external field, with Hamiltonians of the form
\beq\label{w8221}
	\displaystyle{
H= \frac{1}{2} \sum_{k=1}^{n} p_k^2-\sum\limits_{i>j}^{n}V(x_i-x_j)+\sum_{k=1}^{n}W(x_k)\,,
}
\eq
These are $A$-type models with an additional external potential acting on each particle \cite{Inoz83,Woj}.
Our aim is to generalise the results of those papers by
constructing $R$-matrix-valued Lax pairs for these models.

We consider three combinations of the potentials in (\ref{w8221}).
\beq\label{w8222}
  \begin{array}{c}
\hbox{Types of $V(x)$ and $W(x)$:}
 \end{array}
 \eq
\begin{itemize}
%   \item
  \item I.\quad $V(x)=g^2x^{-2}$, $W(x)=\nu^2x^2$;
  \item II.\quad $V(x)=g^2x^{-2}$, $W(x)=\nu^2x^2+\gamma^2x^4$, $\gamma \neq 0$;
  \item III.\footnote{In case III, the dependence on $\delta$ can be removed from the corresponding Hamiltonian (\ref{HS}); we retain this parameter here.}
  \quad $ \displaystyle{V(x)=g^2\frac{1}{\sinh^2(\frac{\be x}{2})}}$, $\displaystyle{W(x)=\gamma\rho\cosh(\beta x+\delta)+\frac{\gamma^2}{4}\cosh(2\beta x+2\delta)+{\rm const}}$.
\end{itemize}
%\gamma\rho\cosh(\beta q_k+\delta)+\frac{\gamma^2}{4}\cosh(2\beta q_k+2\delta)

The reason for treating case $I$ separately will become clear below.
The Lax pairs for these systems were proposed in \cite{Inoz83} and their integrability was
 proved in \cite{Woj}\footnote{The integrability of system $II$ was also independently discussed in \cite{Poly3}.}.
  We propose an $R$-matrix generalisation of the Lax pairs in \cite{Inoz83}, with the following block-matrix structure:
 \beq\label{q502}
  \begin{array}{c}
  \displaystyle{
\mL(z)=\mat{\mL^{11}(z)}{\mL^{12}(z)}{\mL^{21}(z)}{\mL^{22}(z)}\,,\qquad
\mM(z)=\mat{\mM^{11}(z)}{\mM^{12}(z)}{\mM^{21}(z)}{\mM^{22}(z)}+\mH\, 1_{2n\times 2n}\,,
 }
 \end{array}
 \eq
 This structure is chosen to reproduce the original Lax pairs of \cite{Inoz83} for $N=1$,
 when the $R$- and $K$-matrices become scalar functions. The similarity between the Lax pairs for models $I$-$III$ and
 the elliptic Inozemtsev model (\ref{a170}) reflects the fact that these models arise as degenerations of (\ref{a170}), as shown in \cite{Ta2}, \cite{Osh}. Also, the relationship between these models was examined in the context of multicomponent Painlev\'e equations \cite{Ta2}, see also \cite{BCR}.

 Verification of the Lax equations is based on the AYBE and the unitarity conditions for $R$- and $K$-matrices.
 We also use the following relation between them:
 \beq\label{q500}
   \displaystyle{
-2{F}_{ij}^{\hbar}\tilde{K}_j^{\hbar}+2\tilde{K}_i^\hbar
 {F}_{ij}^{-\hbar}+{R}_{ij}^{\hbar}\tilde Y_j^{\hbar}+\tilde Y_i^\hbar{R}_{ij}^{-\hbar}=0\,,
 }
 \eq
 where $\tilde{Y}^\hbar_i(z):=\partial_z \tilde{K}^\hbar_i(z)$, ${F}^\hbar_{ij}(z):=\partial_z {R}^\hbar_{ij}(z)$.
 Notice that (\ref{q500}) does not follow from the ${\rm BC}_n$ four-term relation (\ref{a131}). Differentiating
 (\ref{a131}) also produces terms with ${\ti R}$ and its derivative, but these terms are absent in (\ref{q500}).
 That is, (\ref{q500}) holds only for particular choices of $R$- and $K$-matrices.

\subsection{Rational cases}
The generalisation of the Lax pair for case $I$
can be constructed using Yang's $R$-matrix (\ref{ratR}) and the following $K$-matrix:
\beq\label{q501}
 \begin{array}{c}
 	\displaystyle{
 		{\tilde K}^\hbar(z)
 		=(\frac{1}{\hbar}+\nu_0{z})\sigma_{0}}\,.
 \end{array}
 \eq
 The $R$-matrix (\ref{ratR}), $R_{12}^\hbar(z)
		=\frac{\sigma_{0}\otimes\sigma_{0}}{\hbar}+\frac{P_{12}}{z}$,
 and the $K$-matrix (\ref{q501}) satisfy the reflection equation and (\ref{q500}), but (\ref{a131})
  holds up to a constant matrix independent of $z$ and $\hbar$.
  The construction of the Lax pair in this case does not require (\ref{a131}) or its consequences.

 The entries of (\ref{q502}) are
\beq\label{q503}
 \begin{array}{c}
 	\displaystyle{
 		\mL^{11}_{ij}(z)=\delta_{ij}p_i+g(1-\delta_{ij})R_{ij}^z(q_{ij})\,,
 	}
 \end{array}
 \eq
 \beq\label{q504}
 \begin{array}{c}
 	\displaystyle{
 		\mL^{12}_{ij}(z)=\delta_{ij}\tilde{K}^z_i(q_i)\,,
 	}
 \end{array}
 \eq
 \beq\label{q505}
 \begin{array}{c}
 	\displaystyle{
 		\mL^{21}_{ij}(z)=-\delta_{ij}\tilde{K}^{-z}_i(q_i)\,,
 	}
 \end{array}
 \eq
 \beq\label{q506}
 \begin{array}{c}
 	\displaystyle{
 		\mL^{22}_{ij}(z)=-\delta_{ij}p_i-g(1-\delta_{ij})R^{-z}_{ij}(q_{ij})
 	}
 \end{array}
 \eq
 and
 \beq\label{q507}
 \begin{array}{c}
 	\displaystyle{
 		\mM^{11}_{ij}(z)=\delta_{ij}A_i+g(1-\delta_{ij})F^z_{ij}(q_{ij})\,,
 	}
 \end{array}
 \eq
 \beq\label{q508}
 \begin{array}{c}
 	\displaystyle{
 		\mM^{12}_{ij}(z)=\frac12\,\delta_{ij}\tilde{Y}_i^z(q_i)\,,
 	}
 \end{array}
 \eq
 \beq\label{q509}
 \begin{array}{c}
 	\displaystyle{
 		\mM^{21}_{ij}(z)=\frac12\,\delta_{ij}\tilde{Y}^{-z}_i(q_i)\,,
 	}
 \end{array}
 \eq
 \beq\label{q510}
 \begin{array}{c}
 	\displaystyle{
 		\mM^{22}_{ij}(z)=\delta_{ij}A_i+g(1-\delta_{ij})F^{-z}_{ij}(q_{ij})\,,
 	}
 \end{array}
 \eq
 \beq\label{q511}
 \begin{array}{c}
 	\displaystyle{
 		A_i=-g\sum_{k:k\neq i}^n F^0_{ik}(q_{ik})=g\sum_{k:k\neq i}^n\frac{P_{ik}}{(q_i-q_k)^2}\,,
 	}
 \end{array}
 \eq
 \beq\label{q512}
 \begin{array}{c}
 	\displaystyle{
 		\mH=g\sum_{k<l}^n F^0_{kl}(q_{kl})=-g\sum_{k< l}^n\frac{P_{kl}}{(q_k-q_l)^2}\,.
 	}
 \end{array}
 \eq

 Case $II$ can be described by $R$- and $K$-matrices without an additional parameter:
 \beq\label{q513}
 \begin{array}{c}
 	\displaystyle{
 		R_{12}(z)
 		=\frac{P_{12}}{z}\,,
 	}
 \end{array}
 \eq
 \beq\label{q514}
 \begin{array}{c}
 	\displaystyle{
 		{\tilde K}(z)
 		=(i\rho+\nu z+i\gamma z^2)\sigma_{0}\,.
        }
 \end{array}
 \eq
For $R$- and $K$-matrices without a Planck constant, the identity (\ref{q500}) transforms into
\beq\label{q5001}
-2{F}_{ij}\tilde{K}_j+2\tilde{K}_i
 {F}_{ij}+{R}_{ij}\tilde Y_j+\tilde Y_i{R}_{ij}=0\,,
 \eq
The expressions (\ref{q513}) and (\ref{q514}) satisfy all necessary conditions, and the Lax pair has the form:
 \beq\label{q5031}
 \begin{array}{c}
 	\displaystyle{
 		\mL^{11}_{ij}=\delta_{ij}p_i+g(1-\delta_{ij})R_{ij}(q_{ij})\,,
 	}
 \end{array}
 \eq
 \beq\label{q5041}
 \begin{array}{c}
 	\displaystyle{
 		\mL^{12}_{ij}=\delta_{ij}\tilde{K}_i(q_i)\,,
 	}
 \end{array}
 \eq
 \beq\label{q5051}
 \begin{array}{c}
 	\displaystyle{
 		\mL^{21}_{ij}=\delta_{ij}\tilde{K}_i(-q_i)\,,
 	}
 \end{array}
 \eq
 \beq\label{q5061}
 \begin{array}{c}
 	\displaystyle{
 		\mL^{22}_{ij}=-\delta_{ij}p_i-g(1-\delta_{ij})R_{ij}(q_{ij})\,,
 	}
 \end{array}
 \eq
 and
 \beq\label{q5071}
 \begin{array}{c}
 	\displaystyle{
 		\mM^{11}_{ij}=\delta_{ij}A_i+g(1-\delta_{ij})F_{ij}(q_{ij})\,,
 	}
 \end{array}
 \eq
 \beq\label{q5081}
 \begin{array}{c}
 	\displaystyle{
 		\mM^{12}_{ij}=\frac12\,\delta_{ij}\tilde{Y}_i(q_i)\,,
 	}
 \end{array}
 \eq
 \beq\label{q5091}
 \begin{array}{c}
 	\displaystyle{
 		\mM^{21}_{ij}=\frac12\,\delta_{ij}\tilde{Y}_i(-q_i)\,,
 	}
 \end{array}
 \eq
 \beq\label{q5101}
 \begin{array}{c}
 	\displaystyle{
 		\mM^{22}_{ij}=\delta_{ij}A_i+g(1-\delta_{ij})F_{ij}(q_{ij})\,,
 	}
 \end{array}
 \eq
 \beq\label{q51122}
 \begin{array}{c}
 	\displaystyle{
 		A_i=-g\sum_{k:k\neq i}^n F^0_{ik}(q_{ik})=g\sum_{k:k\neq i}^n\frac{P_{ik}}{(q_i-q_k)^2}\,,
 	}
 \end{array}
 \eq
 \beq\label{q51222}
 \begin{array}{c}
 	\displaystyle{
 		\mH=g\sum_{k<l}^n F^0_{kl}(q_{kl})=-g\sum_{k<l}^n\frac{P_{kl}}{(q_k-q_l)^2}\,.
 	}
 \end{array}
 \eq

\noindent{\bf Remark.} For the $K$-matrix (\ref{q514}), identity (\ref{q5001}) is a special case of the global conformal Ward identities in a two-dimensional conformal field theory (CFT) \cite{DiF}:
 \beq\label{ward}
\sum_i
 \big(\epsilon(w_i)\p_{w_i}+\p_{w_i}\epsilon(w_i)h_i\big)\langle \phi_1(w_1)\cdots\phi_n(w_n) \rangle=0\,, \qquad \epsilon(z)= c+b z+az^2\,,
 \eq
 or equivalently
\begin{equation}
\begin{aligned}
\sum_i \partial_{w_i} \langle \phi_1(w_1)\cdots\phi_n(w_n) \rangle &= 0\,, \\
\sum_i (w_i \partial_{w_i} + h_i) \langle \phi_1(w_1)\cdots\phi_n(w_n) \rangle &= 0\,, \\
\sum_i (w_i^2 \partial_{w_i} + 2w_i h_i) \langle \phi_1(w_1)\cdots\phi_n(w_n) \rangle &= 0\,,
\end{aligned}
\end{equation}
where $\langle \phi_1(w_1)\cdots\phi_n(w_n) \rangle$ is an $n$-point correlation function of primary operators $\phi_i(w_i)$ with conformal dimensions $h_i$. In particular, in the free-fermion model, the two-point correlation function is
\beq
\langle \psi(w_1, \bar{w}_1)\psi(w_2, \bar{w}_2) \rangle \propto \frac{1}{w_1-w_2}\,,
\eq
and the conformal dimension of the field $\psi$ is equal to $h=\frac{1}{2}$.

At a purely algebraic level, for the $R$- and $K$-matrices (\ref{q513}) and (\ref{q514}), relation (\ref{q5001}) reduces to (\ref{ward}) for the free-fermion two-point correlation function. The correspondence is therefore
\[
\begin{aligned}
R_{12}(z_1-z_2) &\longleftrightarrow \langle \psi(w_1, \bar{w}_1)\psi(w_2, \bar{w}_2) \rangle \\
\ti{K}(z) &\longleftrightarrow \epsilon(z) \\
-2{F}_{ij}\tilde{K}_j+2\tilde{K}_i
 {F}_{ij}+{R}_{ij}\tilde Y_j+\tilde Y_i{R}_{ij}=0 & \longleftrightarrow \sum^2_{i=1}
 \big(\epsilon(w_i)\p_{w_i}+\frac12\p_{w_i}\epsilon(w_i)\big)\langle \psi(w_1)\psi(w_2) \rangle=0\,.
\end{aligned}
\]
The relationship between Calogero-type models and two-dimensional CFT has been studied extensively; see, for example, \cite{EPSS} and the references therein. These connections concern quantum models, whereas the correspondence described here concerns a classical model. It would be interesting to clarify the analogous relationship between quantum Calogero models in an external field and CFT. Quantum models in an external field have received less attention; quantised models of types $II$ and $III$ were considered in \cite{Sas2}.

\subsection{Hyperbolic case}
The Lax pair for case $III$ has the same structure as in the previous case, (\ref{q5031})--(\ref{q51222}):
 \beq\label{q5032}
 \begin{array}{c}
 	\displaystyle{
 		\mL^{11}_{ij}=\delta_{ij}p_i+g(1-\delta_{ij})R_{ij}(q_{ij})\,,
 	}
 \end{array}
 \eq
 \beq\label{q5042}
 \begin{array}{c}
 	\displaystyle{
 		\mL^{12}_{ij}=\delta_{ij}\tilde{K}_i(q_i)\,,
 	}
 \end{array}
 \eq
 \beq\label{q5052}
 \begin{array}{c}
 	\displaystyle{
 		\mL^{21}_{ij}=\delta_{ij}\tilde{K}_i(q_i)\,,
 	}
 \end{array}
 \eq
 \beq\label{q5062}
 \begin{array}{c}
 	\displaystyle{
 		\mL^{22}_{ij}=-\delta_{ij}p_i-g(1-\delta_{ij})R_{ij}(q_{ij})
 	}
 \end{array}
 \eq
 and
 \beq\label{q5072}
 \begin{array}{c}
 	\displaystyle{
 		\mM^{11}_{ij}=\delta_{ij}A_i+g(1-\delta_{ij})F_{ij}(q_{ij})\,,
 	}
 \end{array}
 \eq
 \beq\label{q5082}
 \begin{array}{c}
 	\displaystyle{
 		\mM^{12}_{ij}=\frac12\,\delta_{ij}\tilde{Y}_i(q_i)\,,
 	}
 \end{array}
 \eq
 \beq\label{q5092}
 \begin{array}{c}
 	\displaystyle{
 		\mM^{21}_{ij}=-\frac12\,\delta_{ij}\tilde{Y}_i(q_i)\,,
 	}
 \end{array}
 \eq
 \beq\label{q5102}
 \begin{array}{c}
 	\displaystyle{
 		\mM^{22}_{ij}=\delta_{ij}A_i+g(1-\delta_{ij})F_{ij}(q_{ij})\,,
 	}
 \end{array}
 \eq

 \beq\label{q5112}
 \begin{array}{c}
 	\displaystyle{
 		A_i=\frac{g\be}{2}\sum_{k:k\neq i}^n \sinh^{-2}\big(\frac{\be q_{ik}}{2}\big) P_{ik}\,,
 	}
 \end{array}
 \eq
 \beq\label{q5122}
 \begin{array}{c}
 	\displaystyle{
 		\mH=-\frac{\be g}{2}\sum_{k<l}^n \sinh^{-2}\big(\frac{\be q_{kl}}{2}\big)P_{kl}\,.
 	}
 \end{array}
 \eq
 The $R$- and $K$-matrices are
 \beq\label{q5123}
 \begin{array}{c}
 	\displaystyle{
		R_{12}(z)
		=\frac{P_{12}}{\sinh(\frac{\beta z}{2})}\,,
	}
\end{array}
 \eq
\beq\label{q5124}
 \begin{array}{c}
	\displaystyle{
		{\tilde K}(z)
		=(\gamma\cosh(\beta z +\delta)+\rho) \sigma_{0}\,.
      }
\end{array}
\eq
Unlike the previous cases, this Lax pair contains the entries (\ref{q5112}) and (\ref{q5122}), which are not expressed in terms of derivatives of the $R$-matrix. In fact, the $R$-matrix (\ref{q5123}) does not satisfy the AYBE. The Lax equation nevertheless holds with the entries (\ref{q5112}) and (\ref{q5122}) defined above.

%%%%%%%%%%%%%%%%%%%%%%%%%%%%%%%%%%%%%%%%%%%%%%%%%%%%%%%%%%%%%%%%%%%%%%%%%%%%%%%%%%%%%%%%%%%%%%%%%
%%%%%%%%%%%%%%%%%%%%%%%%%%%%%%%%%%%%%%%%%%%%%%%%%%%%%%%%%%%%%%%%%%%%%%%%%%%%%%%%%%%%%%%%%%%%%%%%%
\section{A list of long-range spin chains}\label{sec6}
\setcounter{equation}{0}

In this section, we present explicit forms for the quantum
Hamiltonians (\ref{a07}) for the long-range spin chains entering the quantum Lax equation (\ref{a11493}).
The Hamiltonians are obtained by freezing the $R$-matrix-valued Lax pairs
for the Calogero--Inozemtsev system and its degenerations, including the
models in an external field considered in Section~\ref{sec5}.

The Hamiltonians of these spin chains have the general form
\beq\label{Hc}
{\mH}=g\sum_{k<l}^n \Big(F^0_{kl}(z_{k}-z_l)+{\ti F}^0_{kl}(z_{k}+z_l)\Big)
+\frac12\sum\limits_{k=1}^n \tilde{Y}_k^0(z_k)\,.
\eq
We consider here the models arising from ${\rm GL}_2$ $R$- and $K$-matrices (that is, ${\mH}\in{\rm Mat}(2,\mC)^{\otimes n}$),
 and use the results of the previous sections.

As noted above, the existence of a quantum Lax equation (\ref{a11493}) provides evidence for quantum integrability but does not by itself establish it.

\subsection{Elliptic case}
We consider the eight-vertex $R$-matrix (\ref{w810}) and the $K$-matrix (\ref{w811}). As noted above,
 ${\ti R}_{12}^\hbar(z)={R}_{12}^\hbar(z)$ in this case. That is
 \beq\label{w421}
 \begin{array}{c}
  \displaystyle{
 R_{12}^\hbar(z)
 =\frac12\sum\limits_{k=0}^3\vf_{k}\Big(z,\Om_{k}+\frac{\hbar}{2}\Big)\sigma_{4-k}\otimes\sigma_{4-k}
  }
 \end{array}
\eq
 and\footnote{Here we denote the coupling constants by $\nu_k$, rather than $\rho_k$ as in (\ref{w811}).
 An exact match with the general expression for the $K$-matrix (\ref{Ktilde}) is not needed here.}
  \beq\label{w422}
 \begin{array}{c}
  \displaystyle{
 {\tilde K}^\hbar(z)=\sum\limits_{k=0}^3\nu_{k}
 e^{-2\pi\imath\hbar\p_\tau\Om_{k}}\vf_{k}(z-\Om_{k},\hbar+\Om_{k})\sigma_{4-k}\,.
}
 \end{array}
\eq
Direct computation yields (see (\ref{a90627}))
  \beq\label{w423}
  \begin{array}{c}
  \displaystyle{
F^0_{12}(x)={\ti F}^0_{12}(x)=\p_x R_{12}^\hbar(x)\Big|_{\hbar=0}=
  }
  \\[4mm]
  \displaystyle{
=-\frac12\, E_2(x)\sigma_0\otimes\sigma_0+
\frac12\sum\limits_{a=1}^3\vf_a(x,\Om_a)\Big(E_1(x+\Om_a)-E_1(x)-E_1(\Om_a)\Big)
\sigma_{4-a}\otimes\sigma_{4-a}=
}
  \\[4mm]
  \displaystyle{
=-\frac12\, E_2(x)\sigma_0\otimes\sigma_0-
\frac12\sum\limits_{a=1}^3\frac{\vf_1(x,\Om_1)\vf_2(x,\Om_2)\vf_3(x,\Om_3)}{\vf_a(x,\Om_a)}
\sigma_{4-a}\otimes\sigma_{4-a}
 }
 \end{array}
 \eq
 and
  \beq\label{w424}
  \begin{array}{c}
  \displaystyle{
\tilde{Y}^0(x)=\p_x {\ti K}^\hbar(x)\Big|_{\hbar=0}=
  }
  \\ \ \\
  \displaystyle{
=-\nu_0 E_2(x)\sigma_0-\sum\limits_{a=1}^3\nu_a
\vf_a(x-\Om_a,\Om_a)\Big(E_1(x+\Om_a)-E_1(x)-E_1(\Om_a)\Big)\sigma_{4-a}\,.
 }
 \end{array}
 \eq
 Substituting these expressions into (\ref{Hc}), we obtain the Hamiltonian for a quantum elliptic long-range anisotropic spin chain associated with the
 ${\rm BC}_n$ root system:
  \beq\label{w425}
  \begin{array}{c}
   \displaystyle{
{\mH}=-\frac{g}2\sum_{k<l}^n
 \Big(E_2({z}_k-{z}_l)+E_2({z}_k+{z}_l)\Big)\stackrel{k}{\sigma}_0\otimes\stackrel{l}{\sigma}_0+
 }
 \\[5mm]
   \displaystyle{
+
\frac{g}2\sum_{k<l}^n\sum\limits_{a=1}^3\Big[\vf_a({z}_k-{z}_l,\Om_a)\Big(E_1({z}_k-{z}_l+\Om_a)-E_1({z}_k-{z}_l)-E_1(\Om_a)\Big)+
}
\\[5mm]
\displaystyle{
+\vf_a({z}_k+{z}_l,\Om_a)\Big(E_1({z}_k+{z}_l+\Om_a)-E_1({z}_k+{z}_l)-E_1(\Om_a)\Big)\Big]
\stackrel{k}\sigma_{4-a}\otimes\stackrel{l}\sigma_{4-a}-
 }
 \\[4mm]
   \displaystyle{
 -\frac12\sum\limits_{k=1}^n\Big[ \nu_0 E_2({z}_k)\stackrel{k}\sigma_0+\sum\limits_{a=1}^3\nu_a
\vf_a({z}_k-\Om_a,\Om_a)\Big(E_1({z}_k+\Om_a)-E_1({z}_k)-E_1(\Om_a)\Big)\stackrel{k}\sigma_{4-a}\Big]\,,
}
 \end{array}
 \eq
 where $z_k$ are equilibrium positions in the classical Calogero--Inozemtsev model (\ref{a071}). Therefore, $z_k$
 are solutions of the equations
\begin{equation}
 \label{w426}
   \displaystyle{
g^2
\sum_{k:k\neq i}^{n} \Big(\wp'({z}_i-{z}_k) + \wp'({z}_i+{z}_k)\Big) +
\frac{1}{2}\sum\limits_{a=0}^3\nu_a^2\wp'({z}_i+\Om_a)=0\,,\quad i=1,...,n\,.
}
 \end{equation}
 The existence of real solutions to these equations was proved in \cite{Tak} under certain conditions on the coupling constants.

\subsection{Trigonometric cases}
\subsubsection{6-vertex}
The Hamiltonian of the particle system is
\beq\label{Ht6}
\begin{array}{c}
\
\displaystyle{
 H= \frac{1}{2} \sum_{k=1}^{n} p_k^2 - g^2\pi^2
\sum_{i<j}^{n} \Big(\frac{1}{\sin^2(\pi q_{ij})} + \frac{1}{\sin^2(\pi q^+_{ij})}\Big) -
\frac{\pi^2}{2}\sum_{k=1}^{n}\Big(\frac{\nu_0^2}{\sin^2(\pi q_k)}+\frac{\nu_3^2}{\cos^2(\pi q_k)}\Big)+
}
 \\ \ \\
\displaystyle{
+\frac{\pi^2}{4}(\nu^2_1+\nu^2_2)\sum_{k=1}^{n}\cos{(2\pi q_k)}\,.
}
\end{array}
\eq
The equilibrium positions satisfy
\beq\label{Ht6eq}
\begin{array}{c}
\displaystyle{
  2g^2
\sum_{j\neq i}^{n} \Big(\frac{\cos(\pi z_{ij})}{\sin^3(\pi z_{ij})} + \frac{\cos(\pi z^+_{ij})}{\sin^3(\pi z^+_{ij})}\Big) +
\nu_0^2\frac{\cos(\pi z_i)}{\sin^3(\pi z_i)}-\nu_3^2\frac{\sin(\pi z_i)}{\cos^3(\pi z_i)}-
\frac{1}{2}(\nu^2_1+\nu^2_2)\sin{(2\pi z_i)}=0\,.
}
\end{array}
\eq
Under the change of variables $x_i=\cos{(2\pi z_i)}$, equations (\ref{Ht6eq}) take the rational form
\beq\label{Ht6eqr}
\begin{array}{c}
\displaystyle{
  8g^2
\sum_{j\neq i}^{n} \Big(\frac{-x^2_j-x_ix_j+2}{(x_j-x_i)^3}\Big) +
\frac{2\nu_0^2}{(1-x_i)^2}-\frac{2\nu_3^2}{(1+x_i)^2}-
\frac{1}{2}(\nu^2_1+\nu^2_2)=0\,.
}
\end{array}
\eq
%
%q_i+q_j<\frac12, (\forall i,j \text{ s.t. } i<j)\
In \cite{Tak} it was shown that the elliptic Inozemtsev system  (\ref{a170})
admits a real equilibrium configuration. An analogous statement holds
for the trigonometric degeneration of the Inozemtsev system (\ref{Ht6}).
\begin{proposition}\label{p71}
   For $g$, $\nu_0$, $\nu_3$ $\in$ $\mathbb{R}\backslash\{0\}$,  $\nu_1$, $\nu_2 $ $\in \mathbb{R}$
   equations (\ref{Ht6eq}) have a real solution in
   the domain $D=\{(q_1,\dots,q_n)\in\mathbb{R}^n|\, 0<q_1<q_2\dots<q_n<\frac12 \}$.
\end{proposition}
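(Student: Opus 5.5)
The plan is variational. Equations (\ref{Ht6eq}) are, up to a constant factor, the critical point equations $\p_{q_i}U=0$ of the potential part of (\ref{Ht6}):
\[
U(q)=g^2\pi^2\sum_{i<j}\Big(\frac{1}{\sin^2(\pi q_{ij})}+\frac{1}{\sin^2(\pi q^+_{ij})}\Big)+\frac{\pi^2}{2}\sum_k\Big(\frac{\nu_0^2}{\sin^2(\pi q_k)}+\frac{\nu_3^2}{\cos^2(\pi q_k)}\Big)-\frac{\pi^2}{4}(\nu_1^2+\nu_2^2)\sum_k\cos(2\pi q_k)\,.
\]
This is $-1$ times the potential in $H$, so equilibria of the classical system are exactly critical points of $U$. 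The first step is to check, by differentiating term by term, that $\p_{q_i}U$ is proportional to the left-hand side of (\ref{Ht6eq}). Here $\p_q\sin^{-2}(\pi q)=-2\pi\cos(\pi q)/\sin^3(\pi q)$, and the contributions from $q^+_{ij}$ and $q_{ij}$ combine with the right signs after using $q_{ji}=-q_{ij}$. With real couplings every coefficient in $U$ is nonnegative, except for the bounded cosine term.

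The second step is to show that $U$ attains a minimum in the interior of $D$. Under the hypotheses $g,\nu_0,\nu_3\neq0$, the function $U$ is real and smooth on $D$. It tends to $+\infty$ at every boundary point of $\bar D$:
\begin{itemize}
\item if $q_1\to0$, the term $\nu_0^2/\sin^2(\pi q_1)$ blows up;
\item if $q_n\to\frac12$, the term $\nu_3^2/\cos^2(\pi q_n)$ blows up;
\item if $q_{i+1}-q_i\to0$, the term $g^2/\sin^2(\pi q_{i,i+1})$ blows up.
\end{itemize}
All other terms are bounded below on $\bar D$: each is a positive squared cosecant or secant, and the cosine term is bounded. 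For the sums $q_i+q_j$, note that $0<q_i+q_j<1$ on $D$, so $\sin^{-2}(\pi q^+_{ij})$ is positive there. It may blow up only at corners, where it adds to $+\infty$ and causes no harm.

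Since $\bar D$ is compact, $U$ is continuous on $D$, bounded below, and tends to $+\infty$ on $\p D$, it attains its infimum at an interior point of $D$. At that point $\nabla U=0$, which gives a real solution of (\ref{Ht6eq}) in $D$.

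The main point needing care is the boundary analysis. One must show that every approach to $\p D$ forces at least one divergent positive term, while no term can tend to $-\infty$. The only potentially negative contribution is $-\frac{\pi^2}{4}(\nu_1^2+\nu_2^2)\cos(2\pi q_k)$, which is bounded. Points with $q_i+q_j\in\mZ$ cannot occur in the interior of $D$. So the argument reduces to a clean coercivity statement.

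Finally, I would note that the nonvanishing of $\nu_0$ and $\nu_3$ is exactly what pins the particles away from the walls $0$ and $\frac12$. Without it the minimum could escape to the boundary, which explains the hypotheses of Proposition~\ref{p71}. This mirrors the elliptic argument of \cite{Tak}.
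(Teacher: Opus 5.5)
Your proposal is correct and follows essentially the same route as the paper, which maximises the potential $\Phi=-U$ of (\ref{Ht61}) on $D$: it is bounded above (since $\sin^{-2},\cos^{-2}\ge 1$) and tends to $-\infty$ on $\partial D$ when $\nu_0,\nu_3\neq 0$, so an interior maximum exists and $\partial_{q_i}\Phi=0$ there gives (\ref{Ht6eq}). Your minimisation of $U$ is the same argument with the sign reversed, and your boundary analysis is slightly more explicit than the paper's, since it notes that $0<q_i+q_j<1$ on $D$ and that $g\neq0$ is what prevents collisions.
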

\begin{proof}
Consider the potential of the system (\ref{Ht6})
\beq\label{Ht61}
\begin{array}{c}
\displaystyle{
\Phi(q_1,\dots,q_n)=    - g^2\pi^2
\sum_{i<j}^{n} \Big(\frac{1}{\sin^2(\pi q_{ij})} + \frac{1}{\sin^2(\pi q^+_{ij})}\Big) -
\frac{\pi^2}{2}\sum_{k=1}^{n}\Big(\frac{\nu_0^2}{\sin^2(\pi q_k)}+\frac{\nu_3^2}{\cos^2(\pi q_k)}\Big)+
}
 \\ \ \\
\displaystyle{
+\frac{\pi^2}{4}(\nu^2_1+\nu^2_2)\sum_{k=1}^{n}\cos{(2\pi q_k)}\,.
}
\end{array}
\eq
$\Phi(q_1,\dots,q_n)$ is real-valued and smooth on the
 domain $D$. As $(q_1,\dots,q_n)$ tends to the boundary of
 $D$, $\Phi(q_1,\dots,q_n)$ tends to $-\infty$ if $\nu_0\neq0$, $\nu_3\neq0$.
  The potential is also bounded above. Indeed, for real arguments
\beq
\Phi(q_1,\dots,q_n)\leq -n(n-1)g^2\pi^2-\frac{n\pi^2}{2}(\nu^2_0+\nu^2_3)+\frac{n\pi^2}{4}(\nu^2_1+\nu^2_2)
\eq
Therefore, $\Phi(q_1,\dots,q_n)$ attains its maximum at a point $(z_1,\dots,z_n)\in D$.
Equations (\ref{Ht6eq}) then follow from the necessary condition
$\p_{q_i}\Phi(q_1,\dots,q_n)|_{(q_1,\dots,q_n)=(z_1,\dots,z_n)}=0$ for an interior extremum.
\end{proof}

The Hamiltonian of the spin chain can be found from the general formula (\ref{Hc}):
\beq\label{Ht62}
\begin{array}{c}
	\displaystyle{
		\mH=-\frac{\pi^{2}g}{2} \sum_{i<j} \Bigg[\left( \frac{\cos(\pi z_{ij}) }{\sin^{2}(\pi z_{ij})}+
\frac{\cos(\pi z^{+}_{ij})}{\sin^{2}(\pi z^{+}_{ij})}\right)\stackrel{i}{\sigma_{1}}\otimes\stackrel{j}{\sigma_{1}} +
\left( \frac{\cos(\pi z_{ij}) }{\sin^{2}(\pi z_{ij})}+
\frac{\cos(\pi z^{+}_{ij})}{\sin^{2}(\pi z^{+}_{ij})}\right)\stackrel{i}{\sigma_{2}}\otimes \stackrel{j}{\sigma_{2}}  +
        }
        \\ \ \\
        \displaystyle{
        +\left( \frac{1}{\sin^{2}(\pi z_{ij})}+\frac{1}{\sin^{2}(\pi z^{+}_{ij})}\right)\stackrel{i}{\sigma_{3}}
        \otimes \stackrel{j}{\sigma_{3}}\Bigg]-
        \frac{\pi^{2} g}{2} \sum_{i<j} \Big(\frac{1}{\sin^{2}(\pi z_{ij})}+
        \frac{1}{\sin^{2}(\pi z^+_{ij})}\Big) \sigma_{0}^{\otimes n}-
    }
    \\ \ \\
    \displaystyle{
    	-\frac{\pi^2}{2}{\sigma}^{\otimes n}_0\sum_{k=1}^n\frac{\nu_0}{\sin^2(\pi z_k)}+
    \frac{\pi^2}{2}\sum_{k=1}^n\Big(\frac{\nu_3}{\cos^2(\pi z_k)}\stackrel{k}{\sigma}_3+\cos(\pi z_k)(\nu_1\stackrel{k}{\sigma}_1+\nu_2\stackrel{k}{\sigma}_2)\Big)\,.
    }
\end{array}
\eq

\subsubsection{7-vertex}
\beq\label{Ht7}
\begin{array}{c}
\
\displaystyle{
 H= \frac{1}{2} \sum_{k=1}^{n} p_k^2 - g^2\pi^2
\sum_{i<j}^{n} \Big(\frac{1}{\sin^2(\pi q_{ij})} + \frac{1}{\sin^2(\pi q^+_{ij})}\Big) -
\frac{\pi^2}{2}\sum_{k=1}^{n}\Big(\frac{\nu_0^2}{\sin^2(\pi q_k)}+\frac{\nu_3^2}{\cos^2(\pi q_k)}\Big)\,.
}
\end{array}
\eq
The equilibrium positions satisfy
\beq\label{Ht7eq}
\begin{array}{c}
\
\displaystyle{
  2g^2
\sum_{j\neq i}^{n} \Big(\frac{\cos(\pi z_{ij})}{\sin^3(\pi z_{ij})} + \frac{\cos(\pi z^+_{ij})}{\sin^3(\pi z^+_{ij})}\Big) +
\nu_0^2\frac{\cos(\pi z_i)}{\sin^3(\pi z_i)}-\nu_3^2\frac{\sin(\pi z_i)}{\cos^3(\pi z_i)}=0\,.
}
\end{array}
\eq
The solutions of these equations are related to the zeros of the Jacobi polynomials \cite{Sas1}.

The Hamiltonian of the spin chain is as follows:
\beq\label{Ht71}
\begin{array}{c}
	\displaystyle{
		\mH=-\frac{\pi^{2}g}{2} \sum_{i<j}\Bigg[ \left( \frac{\cos(\pi z_{ij}) }{\sin^{2}(\pi z_{ij})}+
\frac{\cos(\pi z^{+}_{ij})}{\sin^{2}(\pi z^{+}_{ij})}\right)\stackrel{i}{\sigma_{1}}\otimes\stackrel{j}{\sigma_{1}}
 +\left( \frac{\cos(\pi z_{ij}) }{\sin^{2}(\pi z_{ij})}+
 \frac{\cos(\pi z^{+}_{ij})}{\sin^{2}(\pi z^{+}_{ij})}\right)\stackrel{i}{\sigma_{2}}\otimes \stackrel{j}{\sigma_{2}}  +
        }
        \\ \ \\
        \displaystyle{
        +\left( \frac{1}{\sin^{2}(\pi z_{ij})}+\frac{1}{\sin^{2}(\pi z^{+}_{ij})}\right)\stackrel{i}{\sigma_{3}}\otimes \stackrel{j}{\sigma_{3}}-8\alpha^2\left(\cos(\pi z_{ij})+
        \cos(\pi z^+_{ij})\right)\stackrel{i}{\sigma_{-}}\otimes \stackrel{j}{\sigma_{-}}\Bigg]-
    }
    \\ \ \\
    \displaystyle{
    	-\frac{\pi^{2} g}{2} \sum_{i<j} \Big(\frac{1}{\sin^{2}(\pi z_{ij})}+
    \frac{1}{\sin^{2}(\pi z^+_{ij})}\Big) \sigma_{0}^{\otimes n}-\frac{\pi^2}{2}{\sigma}^{\otimes n}_0\sum_{k=1}^n\frac{\nu_0}{\sin^2(\pi z_k)}+
     }
    \\ \ \\
    \displaystyle{
    +\frac{\pi^2}{2}\sum_{k=1}^n\Big(\frac{\nu_3}{\cos^2(\pi z_k)}\stackrel{k}{\sigma}_3+
    \alpha\nu\cos(\pi z_k)\stackrel{k}{\sigma}_{-}\Big)\,.
    }
\end{array}
\eq

\subsection{Rational case}
\subsubsection{6-vertex}
The Hamiltonian of the particle system is
\beq\label{Hr6}
\begin{array}{c}
\
\displaystyle{
 H= \frac{1}{2} \sum_{k=1}^{n} p_k^2 - g^2
\sum_{i<j}^{n} \Big(\frac{1}{( q_{ij})^2} + \frac{1}{(q^+_{ij})^2}\Big) -
\frac{1}{2}\sum_{k=1}^{n}\Big(\frac{\nu_0^2}{q^2_k}+(\nu_3^2+\nu^2_1+\nu^2_2)q^2_k\Big)\,.
}
\end{array}
\eq
The equilibrium positions satisfy
\beq \label{eqr}
\begin{array}{c}
\
\displaystyle{
  2g^2
\sum_{j\neq i}^{n} \Big(\frac{1}{(z_{ij})^3} + \frac{1}{(z^+_{ij})^3}\Big) +
\frac{\nu_0^2}{z^3_i}-(\nu_3^2+\nu^2_1+\nu^2_2)z_i=0\,.
}
\end{array}
\eq
The solutions of these equations are related to the zeros of the Laguerre polynomials \cite{Sas1}.
 Namely,
\beq\label{Lag}
x_i=\frac{\sqrt{\nu_3^2+\nu^2_1+\nu^2_2}}{g}z^2_i \quad \text{are zeroes of}
 \quad L^{\frac{\nu_0}{g}-1}_n(x),\quad z_i=\sqrt{\frac{gx_i}{\sqrt{\nu_3^2+\nu^2_1+\nu^2_2}}}\,.
\eq

The Hamiltonian of the spin chain is
\beq\label{Hr6sc1}
\begin{array}{c}
	\displaystyle{
		\mH=-\frac{g}{2} \sum_{i<j} \left(\frac{1}{(z_{ij})^2}+\frac{1}{(z^{+}_{ij})^2}\right)\Big(\stackrel{i}{\sigma_{1}}\otimes\stackrel{j}{\sigma_{1}} +\stackrel{i}{\sigma_{2}}\otimes \stackrel{j}{\sigma_{2}}
        +\stackrel{i}{\sigma_{3}}\otimes \stackrel{j}{\sigma_{3}}+\sigma_{0}^{\otimes n}\Big)-
        }

    \\[6mm]
    \displaystyle{
    	-\frac{1}{2}{\sigma}^{\otimes n}_0\sum_{k=1}^n\frac{\nu_0}{z^2_k}+\frac{1}{2}
    \sum_{k=1}^n\Big(\nu_3\stackrel{k}{\sigma}_3+\nu_1\stackrel{k}{\sigma}_1+\nu_2\stackrel{k}{\sigma}_2\Big)\,.
    }
\end{array}
\eq
This formalism also describes a spin chain similar to (\ref{Hr6sc1}):
\beq\label{Hr6sc2}
\begin{array}{c}
	\displaystyle{
		\mH=-\frac{g}{2} \sum_{i<j} \left(\frac{1}{(z_{ij})^2}+\frac{1}{(z^{+}_{ij})^2}\right)\Big(\stackrel{i}{\sigma_{1}}\otimes\stackrel{j}{\sigma_{1}} +\stackrel{i}{\sigma_{2}}\otimes \stackrel{j}{\sigma_{2}}
        +\stackrel{i}{\sigma_{3}}\otimes \stackrel{j}{\sigma_{3}}+\sigma_{0}^{\otimes n}\Big)+
        }
        \\[6mm]
    \displaystyle{
        +\frac{1}{2}{\sigma}^{\otimes n}_0\sum_{k=1}^n\Big(\nu-\frac{\nu_0}{z^2_k}\Big)\,,
        }
\end{array}
\eq
where $z_i$ also satisfy (\ref{eqr}). The Lax pair has the same form as in the previous cases, but uses a different $K$-matrix:
\beq\label{q601}
 \begin{array}{c}
 	\displaystyle{
 		{\tilde K}^\hbar(z)
 		=\Big(\frac{1}{\hbar}+\nu{z}+\frac{\nu_0}{z}\Big)\sigma_{0}}, \quad \nu=\sqrt{\nu_3^2+\nu^2_2+\nu^2_1}\,.
 \end{array}
 \eq
 The matrix (\ref{q601}) satisfies the unitarity condition and the reflection equation, while the $BC_{n}$ AYBE holds up to a matrix independent of $z$ and $\hbar$. The Lax equation nevertheless holds, since its verification requires only derivatives of (\ref{a131}). The integrability of the system defined by (\ref{Hr6sc2}) was proved in \cite{Ya}.

\subsubsection{11-vertex}
The Hamiltonian of the particle system is
\beq\label{Hr611}
\begin{array}{c}
\
\displaystyle{
 H= \frac{1}{2} \sum_{k=1}^{n} p_k^2 - g^2
\sum_{i<j}^{n} \Big(\frac{1}{( q_{ij})^2} + \frac{1}{(q^+_{ij})^2}\Big) -
\frac{1}{2}\sum_{k=1}^{n}\frac{\nu_0^2}{q^2_k}\,.
}
\end{array}
\eq
The equilibrium positions satisfy
\beq\label{eqr11}
\begin{array}{c}
\
\displaystyle{
  2g^2
\sum_{j\neq i}^{n} \Big(\frac{1}{(z_{ij})^3} + \frac{1}{(z^+_{ij})^3}\Big) +
\frac{\nu_0^2}{z^3_i}=0\,.
}
\end{array}
\eq
These equations are a special case of (\ref{eqr}) corresponding to $\nu_3^2+\nu^2_1+\nu^2_2=0$.
In this limit, the solutions (\ref{Lag}) are ill-defined.
Thus, equations (\ref{eqr11}) have no finite real solutions (the potential in (\ref{Hr611}) is negative for nonsingular real values
of the variables).
 Nevertheless, for any solution of (\ref{eqr11})  we can write down the spin chain Hamiltonian:
\beq\label{eqr12}
\begin{array}{c}
	\displaystyle{
		\mH=-\frac{g}{2} \sum_{i<j} \left(\frac{1}{(z_{ij})^2}+\frac{1}{(z^{+}_{ij})^2}\right)\Big(\stackrel{i}{\sigma_{1}}\otimes\stackrel{j}{\sigma_{1}} +\stackrel{i}{\sigma_{2}}\otimes \stackrel{j}{\sigma_{2}}
        +\stackrel{i}{\sigma_{3}}\otimes \stackrel{j}{\sigma_{3}}+\sigma_{0}^{\otimes n}\Big)+
        }
   \\ \ \\
  \displaystyle{
  	-g\sum_{i<j}\Big(\beta^2\big(3z^2_{ij} + 3(z_{ij}^+)^2 \big)\stackrel{i}{\sigma}_{-}\otimes\stackrel{j}{\sigma}_{-}
	+2\beta\big(\stackrel{i}{\sigma}_{-}\otimes\stackrel{j}{\sigma}_{3}+\stackrel{i}{\sigma}_{3}\otimes\stackrel{j}{\sigma}_{-}\big)\Big)-
    }
    \\[3mm]
    \displaystyle{
    -\frac{1}{2}\,{\sigma}^{\otimes n}_0\sum_{k=1}^n\frac{\nu_0}{z^2_k}\,+\,\frac{\beta\tilde{\nu}}{2}\sum_{k=1}^n\stackrel{k}{\sigma}_{-}\,.
    }
\end{array}
\eq
To demonstrate the existence of complex solutions of (\ref{eqr11}), consider the following example. For $n=2$ and $\nu_0^2=g^2$ there is a solution $z_1=a$, $z_2=\imath a$.
This is because
\beq\label{eqr121}
\begin{array}{c}
	\displaystyle{(a-ia)^{-3}+(a+ia)^{-3}=-\frac1{2a^3}\,.
    }
\end{array}
\eq

\subsection{Spin chains associated with Calogero models in an external field}
For these rational systems, the spin-chain Hamiltonian has the general form
\beq\label{q6121}
 \begin{array}{c}
 	\displaystyle{
 		\mH^{0'}=g\sum_{k<l}^n F^0_{kl}(z_{kl})\,.
 	}
 \end{array}
 \eq
 In the hyperbolic case, as noted above, the Hamiltonian (\ref{q5122}) is not given by differentiation of the $R$-matrix (\ref{q5123}).

\paragraph{The system $I$:}
\beq\label{q6122}
\begin{array}{c}
\displaystyle{
 H= \frac{1}{2} \sum_{k=1}^{n} p_k^2 - g^2
\sum_{i<j}^{n} \frac{1}{q_{ij}^2}  -
\frac{\nu_0^2}{2}\sum_{k=1}^{n}q^2_k\,.
}
\end{array}
\eq
leads to the well-known Polychronakos--Frahm spin chain \cite{Poly}, \cite{Frahm}:
\beq\label{PF}
\begin{array}{c}
	\displaystyle{
		\mH=-\frac{g}{2}
\sum_{i<j}^n \frac{1}{z_{ij}^2}\Big(\stackrel{i}{\sigma_{1}}\otimes\stackrel{j}
{\sigma_{1}} +\stackrel{i}{\sigma_{2}}\otimes \stackrel{j}{\sigma_{2}}
        +\stackrel{i}{\sigma_{3}}\otimes \stackrel{j}{\sigma_{3}}+\sigma_{0}^{\otimes n}\Big)\,,
        }
\end{array}
\eq
where $z_i$ are (scaled) zeros of the Hermite polynomials \cite{Sas1} and obey the following equations:
\beq\label{PF44}
\begin{array}{c}
\displaystyle{
  2g^2
\sum_{j\neq i}^{n}\frac{1}{z_{ij}^3}-\nu_0^2z_i=0\,.
}
\end{array}
\eq

\paragraph{The system $II$:}
\beq\label{PF4}
\begin{array}{c}
\
\displaystyle{
 H= \frac{1}{2} \sum_{k=1}^{n} p_k^2 - g^2
\sum_{i<j}^{n} \frac{1}{q_{ij}^2}  -
\frac{1}{2}\sum_{k=1}^{n}\Big((2\gamma\rho+\nu^2) q^2_k+\gamma^2q^4_k\Big)
}
\end{array}
\eq
yields the same form of spin-chain Hamiltonian as (\ref{PF}):
\beq\label{sPF4}
\begin{array}{c}
	\displaystyle{
		\mH=-\frac{g}{2} \sum_{i<j}^n
 \frac{1}{z_{ij}^2}\Big(\stackrel{i}{\sigma_{1}}\otimes\stackrel{j}{\sigma_{1}} +\stackrel{i}{\sigma_{2}}\otimes \stackrel{j}{\sigma_{2}}
        +\stackrel{i}{\sigma_{3}}\otimes \stackrel{j}{\sigma_{3}}+\sigma_{0}^{\otimes n}\Big)\,,
        }
\end{array}
\eq
with different spin positions $z_i$, determined by
\beq\label{x4eq}
\begin{array}{c}
\displaystyle{
  2g^2
\sum_{j\neq i}^{n}\frac{1}{z_{ij}^3} -(2\gamma\rho+\nu^2) z_i-2\gamma^2z_i^3=0\,.
}
\end{array}
\eq
For this model, real solutions of (\ref{x4eq}) exist under the conditions stated in the following proposition.
\begin{proposition}\label{prop62}
     For $g$, $\ga$, $\nu$ $\in$ $\mathbb{R}\backslash\{0\}$,  $\rho$ $\in \mathbb{R}$,
    %some fixed value of $\rho$ and sufficiently small value $|\gamma|$,
    equations (\ref{x4eq}) admit a real solution $z=(z_1,\dots,z_n)$ in
    the domain $D=\{(q_1,\dots,q_n)\in\mathbb{R}^n|-\infty<q_1<q_2\dots<q_n<\infty\}$.
    For sufficiently small values of $|\gamma|$, such a solution can be chosen
    close to the vector of scaled Hermite zeros.
\end{proposition}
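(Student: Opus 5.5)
The plan is to follow the variational argument of Proposition~\ref{p71}, applied to the potential of the system (\ref{PF4}):
\[
\Phi(q_1,\dots,q_n)=-g^2\sum_{i<j}^n\frac{1}{q_{ij}^2}-\frac12\sum_{k=1}^n\Big((2\gamma\rho+\nu^2)q_k^2+\gamma^2q_k^4\Big)\,.
\]
A direct differentiation gives
\[
\p_{q_i}\Phi=2g^2\sum_{j\neq i}\frac{1}{q_{ij}^3}-(2\gamma\rho+\nu^2)q_i-2\gamma^2q_i^3\,,
\]
so the system (\ref{x4eq}) is exactly $\nabla\Phi=0$. It therefore suffices to produce an interior critical point of $\Phi$ on $D$.

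\emph{Existence.} $\Phi$ is real and smooth on $D$. It is bounded above, because the pair term is negative and the one-body term is bounded above by $n\sup_{x\in\mathbb{R}}\big(-\tfrac12(2\gamma\rho+\nu^2)x^2-\tfrac12\gamma^2x^4\big)<\infty$; this supremum is finite since $\gamma\neq0$, whatever the sign of $2\gamma\rho+\nu^2$. Moreover $\Phi\to-\infty$ at the boundary of $D$ in both directions:
\begin{itemize}
\item on collisions $q_{i}-q_{i+1}\to0$, because $g\neq0$;
\item as $|q|\to\infty$ within $D$, because the quartic term $-\tfrac12\gamma^2q_k^4$ dominates.
\end{itemize}
Hence every superlevel set $\{\Phi\ge c\}\cap D$ is compact. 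Consequently $\Phi$ attains its maximum at an interior point $z\in D$, and $\nabla\Phi(z)=0$ gives (\ref{x4eq}).

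\emph{Closeness to the Hermite zeros for small $|\gamma|$.} At $\gamma=0$ the equations reduce to (\ref{PF44}) with $\nu_0^2=\nu^2$. Their unique solution in $D$ is the vector of scaled Hermite zeros $z^{(0)}$ (cf.\ \cite{Sas1}); existence at $\gamma=0$ follows as above, with coercivity now supplied by $\nu^2q^2$, $\nu\neq0$. The key structural fact is strict concavity on $D$. The functions $x\mapsto -1/x^2$ on each half-line, $-x^4$, and $-cx^2$ with $c\ge0$ are all concave, with the pair term strictly concave in the relative coordinates and the one-body term strictly concave in the centre of mass. Thus, whenever $2\gamma\rho+\nu^2>0$ (true for $|\gamma|$ small, since $\nu^2>0$), $\Phi$ is strictly concave on the convex set $D$. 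Two consequences follow: the critical point $z(\gamma)$ is unique, and the Hessian at $z^{(0)}$ is negative definite, hence invertible.

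The implicit function theorem applied to $F(q,\gamma)=\nabla_q\Phi(q;\gamma)$, which is smooth in $(q,\gamma)$ near $(z^{(0)},0)$, then yields a smooth branch $z(\gamma)$ with $z(0)=z^{(0)}$. By uniqueness this branch coincides with the maximiser found in the existence step, so $z(\gamma)=z^{(0)}+O(\gamma)$.

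The main point needing care is the negative definiteness of the Hessian at $\gamma=0$, including the centre-of-mass direction, where the pair term is flat and only $-\nu^2 q^2$ contributes. The plan is to verify it by writing the Hessian as the sum of the negative semidefinite pair part and the negative definite diagonal part $-\nu^2\,{\rm Id}$; this settles it immediately.
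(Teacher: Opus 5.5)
Your proof is correct and follows essentially the paper's route. The paper only says that the maximisation argument of Proposition~\ref{p71} carries over, and alternatively that the $\gamma=0$ Hermite configuration can be continued to small $|\gamma|$. Your strict-concavity observation supplies what the paper leaves implicit: the nondegeneracy of the Hessian needed for the implicit function theorem, and the uniqueness that identifies the continued branch with the maximiser.
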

% The proof of the statement totally reproduces the proof of the Proposition \ref{p71}.
%
%
%\begin{proposition}\label{prop62}
%Fix $n\geq 2$, $g,\nu\in\mathbb{R}\setminus\{0\}$ and
%$\rho\in\mathbb{R}$. There exists $\varepsilon>0$ such that, for every
%real $\gamma$ with $|\gamma|<\varepsilon$, equations~(\ref{x4eq})
%admit a real solution in the domain
%\[
% D=\{(q_1,\ldots,q_n)\in\mathbb{R}^n:q_1<\cdots<q_n\}.
%\]
%More precisely, the unique solution in $D$ at $\gamma=0$ extends to
%a real-analytic branch $z(\gamma)\in D$. This branch is locally unique
%near $z(0)$.
%\end{proposition}
%
The argument used in Proposition~\ref{p71} also applies here.
Alternatively, one can first consider the case $\gamma=0$ and then continue
the resulting solution to sufficiently small real values of $\gamma$.

\paragraph{The system $III$:} The classical Hamiltonian
\beq\label{HS}
\begin{array}{c}
\
\displaystyle{
H= \frac{1}{2} \sum_{k=1}^{n} p_k^2 - g^2
\sum_{i<j}^{n} \frac{1}{\sinh^2(\frac{\beta q_{ij}}{2})}  +
\frac{1}{2}\sum_{k=1}^{n}\Big(2\gamma\rho\cosh(\beta q_k+\delta)+\frac{\gamma^2}{2}\cosh(2\beta q_k+2\delta)\Big)
}
\end{array}
\eq
leads to the non-translation-invariant version of the Haldane--Shastry spin chain (recall (\ref{q5122}) and (\ref{r724})):
\beq
\begin{array}{c}\label{sHS}
	\displaystyle{
		\mH=-\frac{g\beta}{4} \sum_{i<j}^n \frac{1}{\sinh^2 \frac{\beta z_{ij}}{2}}(\stackrel{i}{\sigma_{1}}\otimes\stackrel{j}{\sigma_{1}} +\stackrel{i}{\sigma_{2}}\otimes \stackrel{j}{\sigma_{2}}
        +\stackrel{i}{\sigma_{3}}\otimes \stackrel{j}{\sigma_{3}}+\sigma_{0}^{\otimes n})\,.
       }
\end{array}
\eq
In this case, the equilibrium points $z_i$ satisfy the following equations:
\beq\label{sHS22}
\begin{array}{c}
\
\displaystyle{
\gamma\rho\be\sinh(\beta z_{i}+\delta)+\frac{\gamma^2\be}{2}\sinh(2\beta z_{i}+2\delta)
+\beta g^2\sum_{j:j\neq i}^{n} \frac{\cosh(\frac{\beta z_{ij}}{2})}{\sinh^3(\frac{\beta z_{ij}}{2})}  =0\,,\quad
i=1,...,n
}
\end{array}
\eq
and are related to polynomial solutions of a Hill equation \cite{Inoz08}.
For the standard Haldane--Shastry spin chain, $z_k=2\pi ik / (n\be)$.

A possible route to proving the integrability of (\ref{sPF4}) and (\ref{sHS}) is suggested by the connection between
 (\ref{a170}) and (\ref{HS}), (\ref{PF4}) described in \cite{Ta2}, \cite{Osh}. The chains (\ref{sHS}) and (\ref{sPF4})
 are expected to arise as limits of the isotropic $BC_n$ Inozemtsev spin chain,
 whose integrability was proved in \cite{Ch3}. We plan to clarify this limiting procedure
 and construct examples of higher Hamiltonians in future work.

%%%%%%%%%%%%%%%%%%%%%%%%%%%%%%%%%%%%%%%%%%%%%%%%%%%%%%%%%%%%%%%%%%%%%%%%%%%%%%%%%%%%%%%%%%%%%%%%%%%%%%%%%%
%%%%%%%%%%%%%%%%%%%%%%%%%%%%%%%%%%%%%%%%%%%%%%%%%%%%%%%%%%%%%%%%%%%%%%%%%%%%%%%%%%%%%%%%%%%%%%
%%%%%%%%%%%%%%%%%%%%%%%%%%%%%%%%%%%%%%%%%%%%%%%%%%%%%%%%%%%%%%%%%%%%%%%%%%%%%%%%%%%%%%%%%%%%%%

\subsection*{Acknowledgments}
%We are grateful to O. Chalykh, An. Kirillov, A. Liashyk, V. Prokofev and I. Sechin for useful discussions.
%
%This work was supported by the Russian Science Foundation under grant no.\ 25-11-00081\\
%(\texttt{https://rscf.ru/en/project/25-11-00081/})
%and was carried out at the Steklov Mathematical Institute of the Russian Academy of Sciences.

The work of M. Matushko and A. Zotov was performed at the Steklov International Mathematical Center and supported by the Ministry of Science and Higher Education of the Russian Federation (agreement no. 075-15-2025-303).
The work of A. Mostovskii was supported by the Foundation
for the Advancement of Theoretical Physics and Mathematics ''BASIS''.

%%%%%%%%%%%%%%%%%%%%%%%%%%%%%%%%%%%%%%%%%%%%%%%%%%%%%%%%%%%%%%%%%%%%%%%%%%%%%%%%%%%%%%%%%%%%%%%%%
\section{Appendix: elliptic functions}\label{secA}
\def\theequation{A.\arabic{equation}}
\setcounter{equation}{0}

\paragraph{Function $\phi$.}
The main object is the elliptic Kronecker function \cite{Weil}:
 \beq\label{a901}
  \begin{array}{l}
  \displaystyle{
 \phi(z,u)=\frac{\vth'(0)\vth(z+u)}{\vth(z)\vth(u)}=\phi(u,z)\,,\quad
 \res\limits_{z=0}\phi(z,u)=1\,,
  \quad \phi(-z, -u) = -\phi(z, u)\,,
 }
 \end{array}
 \eq
where $\vth(z)$ is the first Jacobi theta function:
 \beq\label{a902}
 \begin{array}{c}
  \displaystyle{
\vth(z)=\vth(z,\tau)\equiv-\theta{\left[\begin{array}{c}
1/2\\
1/2
\end{array}
\right]}(z|\, \tau )\,,
 }
 \end{array}
 \eq
\beq\label{a903}
 \begin{array}{c}
  \displaystyle{
\theta{\left[\begin{array}{c}
a\\
b
\end{array}
\right]}(z|\, \tau ) =\sum_{j\in \mZ}
\exp\left(2\pi\imath(j+a)^2\frac\tau2+2\pi\imath(j+a)(z+b)\right)\,,\quad {\rm Im}(\tau)>0
}
 \end{array}
 \eq
and $\tau$ is the modulus of the elliptic curve $\mC/(\mZ+\tau\mZ)$, ${\rm Im}(\tau)>0$.

Recall the Jacobi theta functions:
\beq\label{8v1}
\begin{array}{c}
\displaystyle{
\theta_1(u|\tau )=\vth(u,\tau)=-i\sum_{k\in  \mZ}
(-1)^k q^{(k+\frac{1}{2})^2}e^{\pi i (2k+1)u},
\qquad
\theta_2(u|\tau )=\sum_{k\in  \mZ}
q^{(k+\frac{1}{2})^2}e^{\pi i (2k+1)u},}
\\ \\
\displaystyle{
\theta_3(u|\tau )=\sum_{k\in  \mZ}
q^{k^2}e^{2\pi i ku},
\qquad
\theta_4(u|\tau )=\sum_{k\in  \mZ}
(-1)^kq^{k^2}e^{2\pi i ku},}
\end{array}
\eq
where
$q=e^{\pi i \tau}$ and
\beq\label{8v11}
\begin{array}{c}
\theta_2(u|\tau)=\theta_1(u+\frac{1}{2}|\tau), \! \quad \! \!
\theta_3(u|\tau)=q^{\frac{1}{4}}e^{\pi iu}\theta_1(u+\frac{\tau+1}{2}|\tau), \! \quad \!\!
\theta_4(u|\tau)=-iq^{\frac{1}{4}}e^{\pi iu}\theta_1(u+\frac{\tau}{2}|\tau).
\end{array}
\eq
The infinite product representation for the Jacobi theta function is
%Their infinite product representations are also useful:
\beq\label{8v2}
\begin{array}{l}
\displaystyle{
\theta_1(u|\tau )=2q^{\frac{1}{4}} \sin \pi u \prod_{n\geq 1}
(1-q^{2n})(1-q^{2n}e^{2\pi i u})(1-q^{2n}e^{-2\pi i u}),
}
\\ \\
\displaystyle{
\theta_2(u|\tau )=2q^{\frac{1}{4}} \cos \pi u \prod_{n\geq 1}
(1-q^{2n})(1+q^{2n}e^{2\pi i u})(1+q^{2n}e^{-2\pi i u}),
}
\\ \\
\displaystyle{
\theta_3(u|\tau )= \prod_{n\geq 1}
(1-q^{2n})(1+q^{2n-1}e^{2\pi i u})(1+q^{2n-1}e^{-2\pi i u}),
}
\\ \\
\displaystyle{
\theta_4(u|\tau )= \prod_{n\geq 1}
(1-q^{2n})(1-q^{2n-1}e^{2\pi i u})(1-q^{2n-1}e^{-2\pi i u}).
}
\end{array}
\eq
The partial derivative $f(z,u) = \partial_u \phi(z,u)$ equals
\beq\label{a904}
\begin{array}{c} \displaystyle{
    f(z, u) = \phi(z, u)(E_1(z + u) - E_1(u)), \qquad f(-z, -u) = f(z, u)\,,
}\end{array}\eq
where $E_1(z)$ is the first
  Eisenstein function. The first and second Eisenstein functions are defined as follows:
\beq\label{a905}
\begin{array}{c} \displaystyle{
    E_1(z)=\frac{\vth'(z)}{\vth(z)}=\zeta(z)+\frac{z}{3}\frac{\vth'''(0)}{\vth'(0)}\,,
    \quad
    E_2(z) = - \partial_z E_1(z) = \wp(z) - \frac{\vartheta'''(0) }{3\vartheta'(0)}\,,
}\end{array}\eq
\beq\label{a906}
\begin{array}{c}
 \displaystyle{
    E_1(- z) = -E_1(z)\,, \quad E_2(-z) = E_2(z)\,,
}\end{array}
\eq
where $\wp(z)$ and $\zeta(z)$ are the Weierstrass functions.
The local behaviour of $\phi(z,q)$ near $z=0$ is as follows:
  \beq\label{a9061}
  \begin{array}{l}
  \displaystyle{
\phi(z,q)=z^{-1}+E_1(q)+z\,(E_1^2(q)-\wp(q))/2+O(z^2)\,.
 }
 \end{array}
 \eq
Therefore,
  \beq\label{a9062}
  \begin{array}{l}
  \displaystyle{
 f(0,q)=-E_2(q)\,.
 }
 \end{array}
 \eq
 Addition formulae:
\beq\label{a907}
  \begin{array}{c}
  \displaystyle{
  \phi(z_1, u_1) \phi(z_2, u_2) = \phi(z_1, u_1 + u_2) \phi(z_2 - z_1, u_2) + \phi(z_2, u_1 + u_2) \phi(z_1 - z_2, u_1)\,.
 }
 \end{array}
 \eq
Applying $\p_{u_2}-\p_{u_1}$ gives
\beq\label{a9071}
  \begin{array}{c}
  \displaystyle{
  \phi(z_1, u_1) f(z_2, u_2)-f(z_1,u_1)\phi(z_2,u_2) =
  }
  \\ \ \\
  \displaystyle{
  =
  \phi(z_1, u_1 + u_2) f(z_2 - z_1, u_2) - \phi(z_2, u_1 + u_2) f(z_1 - z_2, u_1)\,.
 }
 \end{array}
 \eq
We also have
\beq\label{a908}
  \begin{array}{c}
  \displaystyle{
 \phi(z,u_1)\phi(z,u_2)=\phi(z,u_1+u_2)\Big(E_1(z)+E_1(u_1)+E_1(u_2)-E_1(z+u_1+u_2)\Big)\,,
 }
 \end{array}
 \eq
\beq\label{a909}
  \begin{array}{c}
  \displaystyle{
  \phi(z, u_1) f(z,u_2)-\phi(z, u_2) f(z,u_1)=\phi(z,u_1+u_2)\Big(\wp(u_1)-\wp(u_2)\Big)=
 }
 \\ \ \\
   \displaystyle{
 =\phi(z,u_1+u_2)\Big(E_2(u_1)-E_2(u_2)\Big)=\phi(z,u_1+u_2)\Big(f(0,u_2)-f(0,u_1)\Big)\,.
 }
 \end{array}
 \eq
\beq\label{a910}
  \begin{array}{c}
  \displaystyle{
  \phi(z, u) \phi(z, -u) = \wp(z)-\wp(u)=E_2(z)-E_2(u)\,,
 }
 \end{array}
 \eq
\beq\label{a911}
  \begin{array}{c}
  \displaystyle{
  \phi(z, u) f(z, -u)-\phi(z, -u) f(z, u)=\wp'(u)\,.
 }
 \end{array}
 \eq
Define the half-periods
 \beq\label{w209}
 \begin{array}{c}
  \displaystyle{
 \Om_0=0\,,\quad
 \Om_1=\frac{1}{2}\,,\quad
 \Om_2=\frac{1+\tau}{2}\,,\quad
 \Om_3=\frac{\tau}{2}
  }
 \end{array}
\eq
 and
 \beq\label{w210}
 \begin{array}{c}
  \displaystyle{
 \vf_{0}(z,\hbar)=\phi(z,\hbar)\,,
\qquad
 \vf_{1}(z,\hbar+\Om_1)=\phi(z,\Om_1+\hbar)\,,
 }
 \\ \ \\
  \displaystyle{
 \vf_{2}(z,\hbar+\Om_2)=e^{\pi\imath z}\phi(z,\Om_2+\hbar)\,,
\qquad
  \vf_{3}(z,\hbar+\Om_3)=e^{\pi\imath z}\phi(z,\Om_3+\hbar)\,.
 }
  \end{array}
 \eq
Then
 \beq\label{w211}
 \begin{array}{c}
  \displaystyle{
 \vf_{k}(z,\hbar+\Om_k)=\frac{\theta_1'(0)\theta_{k+1}(z+\hbar)}{\theta_{1}(z)\theta_{k+1}(\hbar)}\,,
 \quad k=0,1,2,3
  }
 \end{array}
\eq
 The functions introduced above transform under the matrix $I$ (\ref{q416})-(\ref{q417}) as follows:
  \beq\label{w215}
 \begin{array}{c}
  \left(\begin{array}{c}
 \vf_{0}(2z,u+\Om_0)
 \\
  \vf_{1}(2z,u+\Om_1)
 \\
  \vf_{2}(2z,u+\Om_2)
 \\
  \vf_{3}(2z,u+\Om_3)
 \end{array}\right)
 =
   \displaystyle{\frac12}
 \left(\begin{array}{cccc}
 1 & 1 & 1 & 1
 \\
  1 & 1 & -1 & -1
 \\
  1 & -1 & 1 & -1
 \\
  1 & -1 & -1 & 1
 \end{array}\right)
  \left(\begin{array}{c}
 \vf_{0}(2u,z+\Om_0)
 \\
  \vf_{1}(2u,z+\Om_1)
 \\
  \vf_{2}(2u,z+\Om_2)
 \\
  \vf_{3}(2u,z+\Om_3)
 \end{array}\right)\,.
 \end{array}
 \eq
  This transformation matrix can be written as
  \beq\label{q417}
  \begin{array}{c}
  \displaystyle{
I_{km}=\frac12\exp\Big(4\pi\imath\Big(\Om_{m-1}\p_\tau\Om_{k-1}
-\Om_{k-1}\p_\tau\Om_{m-1}\Big)\Big)\,,\quad k,m=1,...,4\,.
 }
 \end{array}
 \eq
It has the property
  \beq\label{q418}
  \begin{array}{c}
  \displaystyle{
I^{-1}=I\,.
 }
 \end{array}
 \eq
  We also use the standard identity:
  \beq\label{w216}
 \begin{array}{c}
  \displaystyle{
\sum\limits_{a=0}^3\wp(z+\Om_a)=4\wp(2z)\,.
  }
 \end{array}
 \eq
The quasi-periodic properties are as follows:
  \beq\label{a90621}
  \begin{array}{l}
  \displaystyle{
 \phi(z+1,u)=\phi(z,u)\,,\qquad  \phi(z+\tau,u)=e^{-2\pi\imath u}\phi(z,u)
 }
 \end{array}
 \eq
or
  \beq\label{a90622}
  \begin{array}{l}
  \displaystyle{
 \phi(z\pm 2\Om_a,u)=e^{\mp4\pi\imath\p_\tau\Om_a u}\phi(z,u)\,.
 }
 \end{array}
 \eq
For the theta function (\ref{a902}) we have
  \beq\label{a90623}
  \begin{array}{l}
  \displaystyle{
 \vth(z+2\Om_a)=-e^{-4\pi\imath(z+\Om_a)\p_\tau\Om_a}\vth(z)\,,\quad a\neq 0
 }
 \end{array}
 \eq
or
  \beq\label{a90624}
  \begin{array}{l}
  \displaystyle{
 \vth(z+\Om_a)=-e^{-4\pi\imath z\p_\tau\Om_a}\vth(z-\Om_a)\,,\quad a\neq 0\,.
 }
 \end{array}
 \eq
This gives
  \beq\label{a90625}
  \begin{array}{l}
  \displaystyle{
 E_2(z+2\Om_a)=E_2(z)\,,\qquad E_1(z+2\Om_a)=E_1(z)-4\pi\imath\p_\tau\Om_a
 }
 \end{array}
 \eq
and
  \beq\label{a90626}
  \begin{array}{l}
  \displaystyle{
E_1(\Om_a)=-2\pi\imath\p_\tau\Om_a\,,\quad a\neq 0\,.
 }
 \end{array}
 \eq
It follows from (\ref{a90626}) and (\ref{a904}), (\ref{a907}) that for $a=1,2,3$
  \beq\label{a90627}
  \begin{array}{l}
  \displaystyle{
\p_z\vf_a(z,\Om_a)=\vf_a(z,\Om_a)(E_1(z+\Om_a)-E_1(z)-E_1(\Om_a))=
-\frac{\vf_1(z,\Om_1)\vf_2(z,\Om_2)\vf_3(z,\Om_3)}{\vf_a(z,\Om_a)}\,.
 }
 \end{array}
 \eq

\paragraph{Discrete Fourier transform.}
Writing $\phi(z,u;\tau)$ for the Kronecker function (\ref{a901}), one has,
for every $m\in\mZ_N$,
\beq\label{a911m}
\begin{array}{c}
\displaystyle{
\frac1N\sum_{k\in\mZ_N}
\exp\!\Big(\frac{2\pi\imath\,mk}{N}\Big)
\phi\!\Big(z+\frac{k}{N},\,u;\,\tau\Big)
=
\exp\!\big(-2\pi\imath\,mz\big)\,
\phi(Nz,\,u-m\tau;\,N\tau)\,.
}
\end{array}
\eq
The companion sum along the $\tau$-direction is, for every $m\in\mZ_N$,
\beq\label{a911t}
\begin{array}{c}
\displaystyle{
\sum_{k\in\mZ_N}
\exp\!\Big(\frac{2\pi\imath\,k(u+m)}{N}\Big)
\phi\!\Big(z+\frac{k\tau}{N},\,u;\,\tau\Big)
=
\phi\!\Big(z,\,\frac{u+m}{N};\,\frac{\tau}{N}\Big)\,.
}
\end{array}
\eq

\paragraph{Function $v(z,u)$.}
Using
the functions (\ref{w209})-(\ref{w210}), define
  \beq\label{q413}
  \begin{array}{c}
  \displaystyle{
v(z,u)\equiv v(z,u|\nu)=\sum\limits_{a=0}^3\nu_a\vf_a(2z,u+\Om_a)=
\sum\limits_{a=0}^3\nu_a \exp(4\pi \imath z\p_\tau\Om_a)\phi(2z,u+\Om_a)
 }
 \end{array}
 \eq
and its partial derivative
  \beq\label{q414}
  \begin{array}{c}
  \displaystyle{
v'(z,u)=\p_uv(z,u)=\sum\limits_{a=0}^3\nu_a \exp(4\pi \imath z\p_\tau\Om_a)f(2z,u+\Om_a)\,.
 }
 \end{array}
 \eq
Some identities for the functions $v(z,u)$ and $v'(z,u)$ can be found in
 \cite{KH,KH2,Hikami,Z04}.
In particular,
\beq\label{q419a}
v(z,u)v(-z,u)=\sum\limits_{a=0}^3\Big(\nu_a^2\wp(u+\Om_a)-\bar\nu_a^2\wp(z+\Om_a)\Big)=-v(z,u)v(z,-u)\,.
\eq
where, as in (\ref{w215})
  \beq\label{q416}
 \begin{array}{c}
  \left(\begin{array}{c}
 \bar\nu_0
 \\
 \bar\nu_1
 \\
 \bar\nu_2
 \\
 \bar\nu_3
 \end{array}\right)
 =
   \displaystyle{\frac12}
 \left(\begin{array}{cccc}
 1 & 1 & 1 & 1
 \\
  1 & 1 & -1 & -1
 \\
  1 & -1 & 1 & -1
 \\
  1 & -1 & -1 & 1
 \end{array}\right)
  \left(\begin{array}{c}
 \nu_0
 \\
 \nu_1
 \\
 \nu_2
 \\
 \nu_3
 \end{array}\right)\,.
 \end{array}
 \eq
From (\ref{w215}) we conclude that
  \beq\label{q419}
  \begin{array}{c}
  \displaystyle{
v(z,u|\nu)=v(u,z|\bar\nu)=\sum\limits_{a=0}^3\bar\nu_a\vf_a(2u,z+\Om_a)\,.
 }
 \end{array}
 \eq
Obviously,
\beq\label{a9ntiphi}
v(-z,-u)=-v(z,u).
\eq
By differentiating (\ref{q419a}) with respect to $u$ we get:
  \beq\label{q420}
  \begin{array}{c}
  \displaystyle{
v(z,u)v'(z,-u)-v(z,-u)v'(z,u)=\sum\limits_{a=0}^3\nu_a^2\wp'(u+\Om_a)\,.
 }
 \end{array}
 \eq
In our previous paper \cite{MMZ} we used and proved the identities
\beq\label{hident1}
  \begin{array}{c}
  \displaystyle{
v(x,u)\phi(x+y,w-u)+v(x,w)\phi(x-y,u-w)+v(y,-u)\phi(x+y,u+w)=
 }
 \\ \ \\
   \displaystyle{
=v(y,w)\phi(x-y,u+w)
}
\end{array}
\eq
and
\beq\label{hident3}
\begin{array}{c}
\phi(z,u-w)(v'(0,w)-v'(0,u))=
\\ \\
2v(-z,w)f(z,u+w)+2v(z,u)f(-z,u+w)+v'(-z,w)\phi(z,u+w)+v'(z,u)\phi(-z,u+w)\,,
\end{array}
\eq
where
  \beq\label{q441}
  \begin{array}{c}
  \displaystyle{
v'(0,u)=\sum\limits_{a=0}^3\nu_a f(0,u+\Om_a)=-\sum\limits_{a=0}^3\nu_a E_2(u+\Om_a)=
-\sum\limits_{a=0}^3\nu_a \wp(u+\Om_a)+\frac{\vth'''(0)}{3\vth'(0)}\sum\limits_{a=0}^3\nu_a\,.
 }
 \end{array}
 \eq
%

%\subsection*{Acknowledgments}
%We are grateful to O. Chalykh, An. Kirillov, A. Liashyk, V. Prokofev and I. Sechin for useful discussions.

%This work was performed at the Steklov International Mathematical Center and supported by the Ministry of Science %and Higher Education of the Russian Federation (agreement no. 075-15-2022-265).
%\addcontentsline{toc}{section}{\hspace{6mm}Acknowledgments}

%This work was supported by the Russian Science Foundation under grant no. 19-11-00062,\\ %https://rscf.ru/en/project/19-11-00062/ .

%This work was performed at the Steklov International Mathematical Center and supported by the Ministry of Science %and Higher Education of the Russian Federation (agreement no. 075-15-2022-265).

%%%%%%%%%%%%%%%%%%%%%%%%%%%%%%%%%%%%%%%%%%%%%%%%%%%%%%%%%%%%%%%%%%%%%%%%%%%%%%%%%%%%%%%%%%%%%%
%%%%%%%%%%%%%%%%%%%%%%%%%%%%%%%%%%%%%%%%%%%%%%%%%%%%%%%%%%%%%%%%%%%%%%%%%%%%%%%%%%%%%%%%%%%%%%

\begin{small}

\end{small}


\begin{thebibliography}{99}
\addcontentsline{toc}{section}{References}

\bibitem{Aguiar}
M. Aguiar,
{\it Infinitesimal Hopf algebras},
New trends in Hopf algebra theory,
Contemporary Mathematics 267
(2000) 1--29.

M. Aguiar,
{\it On the associative analog of Lie bialgebras},
J. Algebra 244 (2001) 492--532.

%\bibitem{Ahmed} S. Ahmed,
% {\it A general technique to obtain nonlinear equations
% for the zeros of classical orthogonal polynomials}, Lett. Nuovo Cimento, 26 (1979) 285--288.

\bibitem{AHZ} A. Antonov, K. Hasegawa, A. Zabrodin,
{\it On trigonometric intertwining vectors and non-dynamical R-matrix for the Ruijsenaars model},
Nuclear Physics B, 503 (1997), 747--770; arXiv:hep-th/9704074.

\bibitem{Ch2} Ph. Argyres, O. Chalykh, Y. L\"u,
{\it Inozemtsev System as Seiberg-Witten Integrable system},
 J. High Energ. Phys. 2021, 51 (2021);
	arXiv:2101.04505 [hep-th].

\bibitem{AtZ5} K.R. Atalikov, A.V. Zotov,
 {\it Higher-rank generalization of the 11-vertex rational R-matrix:
 IRF–vertex relations and the associative Yang–Baxter equation},
 Theoret. and Math. Phys., 216:2 (2023), 1083--1103; 	arXiv:2303.02391 [math-ph].

\bibitem{BB}
R.J. Baxter,
 {\it Partition function of the eight-vertex lattice model},
  Ann. Phys. 70 (1972) 193--228.

A.A. Belavin,
{\it Dynamical symmetry of integrable quantum systems},
Nucl. Phys. B, 180 (1981) 189--200.



%\bibitem{BD}
%A. Belavin, V. Drinfeld,
%{\it Solutions of the classical Yang–Baxter equation for simple Lie algebras},
%Functional Analysis and Its Applications, 16:3 (1982) 159--180.

\bibitem{BPS} D. Bernard, V. Pasquier, D. Serban,
{\it Exact solution of long-range interacting spin chains with boundaries},
Europhys. Lett. 30 (1995) 301--306; arXiv:hep-th/9501044.

\bibitem{BCR} M. Bertola, M. Cafasso, V. Rubtsov,
 {\it Noncommutative Painlevé Equations and Systems of Calogero Type}, Commun. Math. Phys. 363 (2018) 503--530;
 arXiv:1710.00736 [math-ph].



\bibitem{BS} R. Bittleston, D. Skinner,
{\it Gauge theory and boundary integrability. Part II.
Elliptic and trigonometric cases}, JHEP 06 (2020) 080; arXiv:1912.13441 [hep-th].

R. Bittleston, {\it Integrability from Chern-Simons
Theories}, PhD thesis, 2021.

\bibitem{BoSa} A.J. Bordner, R. Sasaki, {\it Calogero-Moser Models III: Elliptic Potentials and Twisting},
	Prog. Theor. Phys. 101 (1999) 799--829; 	arXiv:hep-th/9812232.

\bibitem{BFV} V.M. Buchstaber, G. Felder, A.V. Veselov,
{\it Elliptic Dunkl operators, root systems, and functional equations},
Duke Math. J. 76 (1994) 885--911; 	arXiv:hep-th/9403178.

\bibitem{Ch1} O. Chalykh,
{\it Quantum Lax pairs via Dunkl and Cherednik operators},
Commun. Math. Phys. 369, 261–316 (2019);
	arXiv:1804.01766 [math.QA].

\bibitem{Ch3}
O. Chalykh, {\it Integrability of the Inozemtsev spin chain}, arXiv:2407.03276 [nlin.SI].

\bibitem{ChM} O. Chalykh, M. Matushko,
{\it R-matrix Dunkl operators and spin Calogero--Moser system},
Annales Henri Poincar\'e (2026);
DOI:10.1007/s00023-026-01671-6;
arXiv:2509.18989 [math.QA].


%\bibitem{DP}
%E. D' Hoker,  D.H. Phong,
% {\it Calogero-Moser Lax Pairs with Spectral Parameter for General Lie Algebras},
%Nuclear Physycs B 530 (1998) 537--610; hep-th/9804124.

\bibitem{Chered} I.V. Cherednik,
  {\it On a method of constructing factorized S-matrices in elementary functions},
 Theoret. and Math. Phys., 43:1 (1980) 356--358.

\bibitem{Cherednik} I.V. Cherednik,
{\it Factorizing particles on a half-line and root systems},
Theoret. and Math. Phys. 61 (1984) 977--983.

\bibitem{Cong} Cong-xin Liu, Guo-xing Ju, Shi-kun Wang, Ke Wu,
{\it Classification of Solutions to Reflection Equation of Two-Component Systems},
	J. Phys. A: Math. Gen. 32 (1999) 3505--3523;
	arXiv:hep-th/9808083.

\bibitem{Vega} H.J. de Vega, A. Gonzalez-Ruiz, {\it Boundary K-matrices for the XYZ, XXZ and XXX spin chains},
 J. Phys. A: Math. Gen. 27 (1994) 6129--6138; 	arXiv:hep-th/9306089.

 H.J. de Vega, A. Gonzalez-Ruiz, {\it Boundary K-Matrices for the Six Vertex and the $n(2n-1)$ $A_{n-1}$ Vertex Models},
 J. Phys. A: Math. Gen. 26 (1993) L519--L524; arXiv:hep-th/9211114.

  \bibitem{DiF} P. Di Francesco, P. Mathieu and D. S{\'e}n{\'e}chal,
{\it Conformal Field Theory}, (1999).

\bibitem{EFGR} A. Enciso, F. Finkel, A. González-Lopez, M.A. Rodríguez,
 {\it Haldane–Shastry spin chains of $BC_N$ type}, Nuclear Physics B 707:3 (2005) 553--576;
 	arXiv:hep-th/0406054.

J.C. Barba, F. Finkel, A. González-Lopez, M.A. Rodríguez,
{\it The Polychronakos-Frahm spin chain of $BC_N$ type and Berry-Tabor's conjecture},
Physical Review B, 77 (2008) 214422;
	arXiv:0803.0922 [cond-mat.stat-mech].

\bibitem{EPSS} B. Estienne, V. Pasquier, R. Santachiara, D. Serban,
{\it Conformal blocks in Virasoro and W theories: Duality and the Calogero–Sutherland model},
Nuclear Physics B,
860:3 (2012) 377--420;	arXiv:1110.1101 [hep-th].

\bibitem{FP} G. Felder, V. Pasquier,
{\it A simple construction of elliptic R-matrices},
Lett. Math. Phys. 32 (1994) 167--171; arXiv:hep-th/9402011.

\bibitem{FK} S. Fomin, A.N. Kirillov,
{\it Quadratic algebras, Dunkl elements, and Schubert calculus},
Advances in geometry; Progress in Mathematics book series, Vol. 172
 (1999) 147--182.

\bibitem{Frahm} H. Frahm,
 {\it Spectrum of a spin chain with inverse square exchange}, J. Phys. A: Math. Gen. 26 (1993) L473.


\bibitem{FM} L. Freidel, J.M. Maillet,
{\it Quadratic algebras and integrable systems},
Physics Letters B, 262 (1991) 278--284.


%\bibitem{GW} F. Gesztesy, R. Weikard,
%{\it Treibich-Verdier potentials and the stationary (m) KdV hierarchy},
%Mathematische Zeitschrift, 219 (1995) 451--476.
%
%V. Prokofev, unpublished.

\bibitem{GrZ18} A. Grekov, A. Zotov,
{\it On R-matrix valued Lax pairs for Calogero–Moser models},
J. Phys. A: Math. Theor., 51 (2018), 315202;
	arXiv:1801.00245 [math-ph].


\bibitem{GSZ}
A. Grekov, I. Sechin, A. Zotov,
 {\it Generalized model of interacting integrable tops},  JHEP 10 (2019) 081;
 	arXiv:1905.07820 [math-ph].




\bibitem{HS1}
 F.D.M. Haldane, {\it Exact Jastrow-Gutzwiller resonating-valence-bond ground state of the spin-$\frac{1}{2}$ antiferromagnetic Heisenberg chain with $1/r^2$ exchange}, Phys. Rev. Lett. 60 (1988) 635--638.

B.S. Shastry, {\it Exact solution of an S=1/2 Heisenberg %antiferromagnetic
chain with long-ranged interactions}, Phys. Rev. Lett. 60 (1988) 639--642.


\bibitem{Hikami} K. Hikami, {\it Boundary K-matrix, elliptic Dunkl operator and quantum
many-body systems}, J. Phys. A: Math. Gen. 29 (1996) 2135--2147.




\bibitem{IK} T. Inami, H. Konno,
{\it Integrable XYZ spin chain with boundaries},
J. Phys. A Math. Gen. 27 (1994) L913--L918.

\bibitem{Inoz83}V.I. Inozemtsev,
{\it On the motion of classical integrable systems of interacting particles in an external field},
Physics Letters A, 98 (1983) 316-318.

\bibitem{Inoz89} V.I. Inozemtsev,
 {\it Lax representation with spectral parameter on a torus for integrable particle systems},
Lett. Math. Phys. 17 (1989) 11--17.

\bibitem{Inoz90} V.I. Inozemtsev,
 {\it On the connection between the one-dimensional S=1/2 Heisenberg chain and Haldane-Shastry model},
Journal of Statistical Physics, 59 (1990) 1143--1155.

J. Dittrich, V. Inozemtsev,
 {\it The commutativity of integrals of motion for quantum spin chains and elliptic functions identities},
Regular and Chaotic Dynamics, 13:1 (2008) 19--26; 	arXiv:0711.1973 [math-ph].

\bibitem{Inoz08} V.I. Inozemtsev,
 {\it Equilibrium points of classical integrable particle systems,
  factorization of wave functions of their quantum analogs and
  polynomial solutions of the Hill equation}, Regul. Chaot. Dyn. 13 (2008) 588--592.

\bibitem{Jimbo} M. Jimbo, {\it A $q$-difference analogue of $U(g)$ and the Yang--Baxter equation},
Lett. Math. Phys. 10 (1985) 63--69.

\bibitem{Kir} Anatol N. Kirillov, {\it On Some Quadratic Algebras I $\frac12$:
Combinatorics of Dunkl and Gaudin Elements, Schubert, Grothendieck, Fuss-Catalan, Universal Tutte and Reduced Polynomials}, 	
SIGMA 12 (2016), 002; arXiv:1502.00426 [math.RT].

\bibitem{KL} R. Klabbers, J. Lamers,
{\it The deformed Inozemtsev spin chain},
SciPost Phys. 17 (2024) 155; arXiv:2306.13066 [math-ph].

R. Klabbers, J. Lamers,
{\it Landscapes of integrable long-range spin chains},
	arXiv:2405.09718 [math-ph].

R. Klabbers, J. Lamers,
{\it Modular families of elliptic long-range spin chains from freezing},
Lett. Math. Phys. 116 (2026) 58; DOI:10.1007/s11005-026-02075-7;
arXiv:2507.13104 [math-ph].

\bibitem{KH} Y. Komori, K. Hikami,
{\it Quantum integrability of the generalized elliptic Ruijsenaars models},
1997 J. Phys. A: Math. Gen. 30, 4341.

Y. Komori, K. Hikami,
{\it Conserved operators of the generalized elliptic Ruijsenaars models},
J. Math. Phys. 39, 6175--6190 (1998).


\bibitem{KH2} Y. Komori, K. Hikami, {\it Elliptic K-matrix associated with Belavin's symmetric R-matrix}, Nucl. Phys. B 494 (1997) 687--701.


\bibitem{KrZ}
T. Krasnov, A. Zotov,
{\it Trigonometric integrable tops from solutions of associative Yang-Baxter equation},
 Annales Henri Poincar\'e, 20:8 (2019)
2671--2697;
 arXiv:1812.04209 [math-ph].


\bibitem{Krich1}
I.  Krichever,
{\it Elliptic solutions of the Kadomtsev--Petviashvili equation and integrable systems of particles},
Funct.  Anal.  Appl., 14:4 (1980) 282--290.


\bibitem{KuSkl} P.P. Kulish, E.K. Sklyanin,
{\it Algebraic Structures Related to Reflection Equations},
J. Phys. A 25 (1992) 5963--5976; 	arXiv:hep-th/9209054.

\bibitem{Lam} J. Lamers, V. Pasquier, D. Serban,
 {\it Spin-Ruijsenaars, q-deformed Haldane-Shastry and Macdonald polynomials},
 Commun. Math. Phys. 393 (2022) 61--150;
	arXiv:2004.13210 [math-ph].

\bibitem{LOZ14}
A. Levin, M. Olshanetsky, A. Zotov,
{\it Planck constant as spectral parameter in integrable systems and KZB equations},
JHEP 10 (2014) 109; arXiv:1408.6246 [hep-th].

%\bibitem{LOZ15}
%A.M. Levin, M.A. Olshanetsky, A.V. Zotov,
%{\it Quantum Baxter–Belavin R-matrices and multidimensional Lax pairs for Painlev\'e VI},
%    Theoret. and Math. Phys., 184:1 (2015), 924--939;	arXiv:1501.07351 [math-ph].

\bibitem{LOZ161}
%	A. Levin, M. Olshanetsky, A. Zotov,
% {\it Yang–Baxter equations with two Planck constants},
%  J. Phys. A: Math. Theor., 49:1 (2016), 14003 , 19 pp.,
%  Exactly Solved Models and Beyond: a special issue
%  in honour of R. J. Baxter's 75th birthday, arXiv:1507.02617 [math-ph].
%
	A. Levin, M. Olshanetsky, A. Zotov,
 {\it Noncommutative extensions of elliptic integrable Euler–Arnold tops and Painlev\'e VI equation},
  J. Phys. A, 49:39 (2016), 395202; 	arXiv:1603.06101 [math-ph].


\bibitem{LRS} A. Liashyk, N. Reshetikhin, I. Sechin,
{\it Quantum Integrable Systems on a Classical Integrable Background}, Commun. Math. Phys. (2026) 407:15;
arXiv:2405.17865 [math-ph].

\bibitem{MMZ} M. Matushko, A. Mostovskii, A. Zotov,
{\it R-matrix valued Lax pair for elliptic Calogero-Inozemtsev system and associative Yang-Baxter equations of ${\rm BC}_n$ type},
 Lett. Math. Phys. 116 (2026) 11;	arXiv:2503.22659 [math-ph].


\bibitem{MZ} M. Matushko, A. Zotov,
{\it Anisotropic spin generalization of elliptic Macdonald–Ruijsenaars operators and R-matrix identities},
Ann. Henri Poincar\'e, 24 (2023) 3373–3419;
	arXiv:2201.05944 [math.QA].

M. Matushko, A. Zotov,
{\it Elliptic generalisation of integrable q-deformed anisotropic Haldane–Shastry long-range spin chain},
Nonlinearity, 36:1 (2023) 319;
	arXiv:2202.01177 [math-ph].

\bibitem{MZ2} M. Matushko, A. Zotov, {\it Supersymmetric generalization
of q-deformed long-range spin chains of Haldane–Shastry type and trigonometric ${GL(N|M)}$
 solution of associative Yang–Baxter equation}, Nuclear Physics B, 1001 (2024) 116499;
 	arXiv:2312.04525 [math-ph].

M. Matushko,
{\it A solution of the associative Yang-Baxter equation related to the queer Lie superalgebra},
 Journal of Geometry and Physics, 218 (2025) 105671; 	arXiv:2412.19214 [math-ph].

\bibitem{Sas1} S. Odake, R. Sasaki,
{\it Polynomials associated with equilibrium positions in Calogero–Moser systems},
 J. Phys. A: Math. Gen. 35 (2002) 8283, arXiv:hep-th/0206172.

E. Corrigan, R. Sasaki,
{\it Quantum versus classical integrability in Calogero–Moser systems},
J. Phys. A: Math. Gen. 35 (2002) 7017.


%\bibitem{ORS} A. Odesskii, V. Rubtsov, V. Sokolov,
%{\it Parameter-dependent associative Yang-Baxter equations and Poisson brackets},
%International Journal of Geometric Methods in Modern Physics, 11:09 (2014) 1460036;
%arXiv:1311.4321 [math-ph].

\bibitem{OP} M.A. Olshanetsky, A.M. Perelomov,
 {\it Completely integrable Hamiltonian systems connected with semisimple Lie  algebras},
Inventiones mathematicae, 37:2 (1976) 93--108.

M.A. Olshanetsky, A.M. Perelomov,
{\it Completely integrable classical systems connected with semisimple Lie algebras},
Lett. Math. Phys.,
Vol. 1 (1976) 187--193.

M.A. Olshanetsky, A.M. Perelomov,
 {\it Classical integrable finite-dimensional systems related to Lie algebras},
Physics Reports, 71 (1981) 313--400.




\bibitem{Osh} T. Oshima,
 {\it Completely integrable systems associated with classical root systems}, SIGMA 3 (2007) 061;
 	arXiv:math-ph/0502028.


\bibitem{Pol} A. Polishchuk,
 {\it Classical Yang–Baxter equation and the $A^\infty$-constraint},
Advances in Mathematics 168:1 (2002)  56--95;
arXiv:math/0008156 [math.AG].

A. Polishchuk,
 {\it $A_\infty$-structures on an elliptic curve},
 Commun. Math. Phys. 247 (2004) 527--551;
arXiv: math/0001048 [math.AG].

\bibitem{Poly} A.P. Polychronakos,
 {\it Lattice integrable systems of Haldane-Shastry type}, Phys. Rev. Lett. 70 (1993) 2329--2331.

\bibitem{Poly2}
A.P. Polychronakos, {\it Exact Spectrum of SU(n) Spin Chain with Inverse-Square Exchange}, Nucl. Phys. B419 (1994) 553--566;
	arXiv:hep-th/9310095.

\bibitem{Poly3} A.P. Polychronakos, {\it A new integrable system with a quartic potential}, Physics Letters B, 276 (1992), 341--346.

\bibitem{Resh} N. Reshetikhin,
{\it Periodic and open classical spin Calogero--Moser chains},
Surv. Diff. Geom. 26 (2024); arXiv:2302.14281 [math-ph].

\bibitem{Sas2}R. Sasaki, K. Takasaki,
 {\it Quantum Inozemtsev model, quasi-exact solvability and $\mathcal{N}$-fold supersymmetry},
  J. Phys. A: Math. Gen. 34 (2001) 9533; arXiv:hep-th/0109008.

\bibitem{SP}
T. Schedler, {\it Trigonometric solutions of the associative Yang-Baxter equation},
Mathematical Research Letters, 10:3 (2003) 301--321;
arXiv:math/0212258 [math.QA].

A. Polishchuk, {\it Massey products on cycles of projective lines and
trigonometric solutions of the Yang–Baxter equations},
Algebra, Arithmetic, and Geometry, Volume 270 of the series Progress
in Mathematics (2010) 573--617; arXiv:math/0612761 [math.QA].


\bibitem{SeZ18} I. Sechin, A. Zotov,
{\it R-matrix-valued Lax pairs and long-range spin chains},
Phys. Lett. B, 781 (2018), 1–7; arXiv:1801.08908 [math-ph].

\bibitem{SU}
Y. Shibukawa, K. Ueno. {\it Completely $\mathbb Z$ symmetric R matrix}, Letters in Mathematical Physics 25 (1992): 239-248.

\bibitem{Skl-refl} E. Sklyanin,
{\it Boundary conditions for integrable quantum systems},
J. Phys. A: Math. Gen. 21 (1988) 2375--2389.



\bibitem{Smirnov}  A. Smirnov,
{\it Degenerate Sklyanin algebras}, Central European Journal of Physics, 8 (2010), 542-554;
arXiv:0903.1466 [math-ph].

\bibitem{Ta} K. Takasaki,
{\it Elliptic Calogero-Moser Systems and Isomonodromic Deformations},
	J. Math. Phys. 40 (1999) 5787--5821;
	arXiv:math/9905101 [math.QA].

\bibitem{Ta2} K. Takasaki, {\it Painlev\'e-Calogero Correspondence Revisited},
J. Math. Phys. 42, (2001) 1443–1473;
arXiv: math/0004118 [math.QA].

\bibitem{Tak} K. Takemura, {\it The Hermite–Krichever Ansatz for Fuchsian equations with applications to the sixth Painlev\'e equation and to finite-gap potentials}, Mathematische Zeitschrift  263 (2009) 149--194;

K. Takemura, {\it On finite-gap potential}, math/0504540 [math.CA].

\bibitem{Weil} A. Weil, {\it Elliptic functions according to Eisenstein and
Kronecker}, Springer-Verlag, %Berlin- Heidelberg-New York,
 (1976).

 D. Mumford, {\it Tata Lectures on Theta I, II},
Birkh\"auser, Boston, Mass. (1983, 1984).


\bibitem{Woj} S. Wojciechowski,
{\it On the integrability of the Calogero-Moser system in an external quartic potential and other many-body systems}, Physics Letters A, 102 (1984) 85--88.

%\bibitem{ZZ} A. Zabrodin, A. Zotov,
%{\it Field analogue of the Ruijsenaars-Schneider model},
%JHEP 07 (2022) 023;	arXiv:2107.01697 [math-ph].


\bibitem{Ya} T. Yamamoto, O. Tsuchiya,
 {\it Integrable $1/r^2$ spin chain with reflecting end}, J. Phys. A: Math. Gen. 29 (1996) 3977.

\bibitem{Yang} C.N. Yang, {\it Some exact results for the many-body problem in
one dimension with repulsive delta-function interaction},  Phys. Rev. Lett. 19 (1967)
1312--1315.

\bibitem{Z04} A. Zotov,
{\it Elliptic linear problem for the Calogero-Inozemtsev model and Painlev\'e VI equation},
 Lett. Math. Phys., 67:2 (2004), 153--165; 	arXiv:hep-th/0310260.

\bibitem{Z18}  A.V. Zotov, {\it Calogero–Moser model and R-matrix identities},
 Theoret. and Math. Phys., 197:3 (2018), 1755--1770.






%\bibitem{Tak} K. Takemura, {\it The Hermite–Krichever Ansatz for Fuchsian equations with applications to the sixth Painlev\'e equation and to %finite-gap potentials}, Mathematische Zeitschrift  263 (2009) 149--194; arXiv:math/0504540 [math.CA].


%A.M. Levin, M.A. Olshanetsky, A.V. Zotov,
% {\it Classification of isomonodromy problems on elliptic curves},
% Russian Math. Surveys, 69:1 (2014) 35--118; 	arXiv:1311.4498 [math-ph].
%
% A.M. Levin, M.A. Olshanetsky, A.V. Zotov,
% {\it Painlev\'e VI, rigid tops and reflection equation}, Commun. Math. Phys., 268:1 (2006) 67--103;
% 	arXiv:math/0508058 [math.QA].




%\bibitem{ZFourier} A. Zotov, {\it Relativistic elliptic matrix tops and finite Fourier transformations},
	%Modern Physics Letters A, 32 (2017) 1750169, 	arXiv:1706.05601 [math-ph].

\end{thebibliography}
\end{document}